\documentclass[sigconf,nonacm]{acmart}

\usepackage{enumitem}
\usepackage{multirow}
\usepackage{subcaption}
\usepackage{algorithm}
\usepackage{algpseudocode}
\usepackage{placeins}
\usepackage{comment}
\usepackage{balance}

\usepackage{tikz}
\usetikzlibrary{positioning}

\usepackage{cleveref}

\newif\ifLongversion
\Longversionfalse
\newif\ifappendix
\appendixfalse

\newcommand{\ComplexityDijkstraSingle}{%
  O\!\left((|E|+N)\log N\right)%
}
\newcommand{\ComplexityDijkstraAllRanks}{%
  O\!\left(n(|E|+N)\log N\right)%
}
\newcommand{\ComplexityOracle}{%
  O\!\left(n(|E|+N)\log N+n^2\right)%
}
\newcommand{\ComplexityNiter}{%
  O\!\left(\frac{|E|\log|E|}{\epsilon^2}\right)%
}
\newcommand{\ComplexityTotal}{%
  O\!\left(
    \frac{|E|\log|E|}{\epsilon^2}
    \left[
      n(|E|+N)\log N+n^2
    \right]
  \right)%
}

\newcommand{\ComplexityAllgather}{%
  O\!\left(
    \frac{|E|\log|E|}{\epsilon^2}
    \left[\, n\bigl(N+|E|\bigr)\log N+n^3 \,\right]
  \right)%
}
\newcommand{\ComplexityMultiSet}{%
  O\!\left(
    \frac{|E|\log|E|}{\epsilon^2}
    \left[
      n(|E|+N)\log N+k \cdot n^2
    \right]
  \right)%
}
\newcommand{\ComplexityForestColl}{%
  O\!\left(n^8\right)%
}
\newcommand{\ComplexityProjectionDPSingle}{%
  O\!\left(N+|E|\right)%
}
\newcommand{\ComplexityProjectionDPAll}{%
  O\!\left(n\bigl(N+|E|\bigr)\right)%
}
\newcommand{\ComplexityPSArborescences}{%
  O\!\left(n^3\right)%
}
\newcommand{\ComplexityOraclePS}{%
  O\!\left(n\bigl(N+|E|\bigr)+n^3\right)%
}
\newcommand{\ComplexityTotalPS}{%
  O\!\left(
    \frac{|E|\log|E|}{\epsilon^2}
    \left[\, n\bigl(N+|E|\bigr)+n^3 \,\right]
  \right)%
}

\newcommand{\ComplexityEndToEndPS}{%
  O\!\left(
    \frac{|E|\log|E|}{\epsilon^2}
    \left[\, n\bigl(N+|E|\bigr)+n^3 \,\right]
    + \frac{|E|\log|E|\,\log(|E|/\epsilon)}{\epsilon^3}
  \right)%
}

\begin{document}

\title{DualColl: Throughput-Optimal Bandwidth-Bound Collective
Communications via Fractional Tree Packing}

\author{Namir Ballan}
\affiliation{%
  \institution{Technion -- Israel Institute of Technology}
  \city{Haifa}
  \country{Israel}}
\email{namir.ballan@campus.technion.ac.il}

\author{Reuven Cohen}
\affiliation{%
  \institution{Technion -- Israel Institute of Technology}
  \city{Haifa}
  \country{Israel}}
\email{rcohen@cs.technion.ac.il}

\begin{abstract}
Large-scale distributed training increasingly makes collective
communication bandwidth-bound. In modern GPU clusters, communication
libraries such as NCCL implement collectives including AllReduce over
heterogeneous interconnects such as NVLink, InfiniBand, and Ethernet,
making efficient use of available network bandwidth a central systems
challenge.

ForestColl showed that collective communication on such fabrics can be
formulated using fractional tree packing; its scalable polynomial-time
construction is developed for AllGather, with AllReduce obtained by
composing AllGather and ReduceScatter. We present DualColl, which
targets bandwidth-bound AllReduce directly. Under the tree-based
AllReduce model, DualColl applies a Garg--K\"onemann-style
multiplicative-weights algorithm and, for any $0<\epsilon<1$, computes
a feasible schedule whose throughput is at least
$(1-\epsilon)^2$ of the optimal AllReduce throughput.
Its key component is an exact minimum-cost collective-tree oracle that
projects physical-link dual prices onto a logical graph of participating
ranks, reducing each iteration to shortest-path computations followed
by a minimum spanning tree.

We further extend the framework to local equal-next-hop packet-sprayed
routing and to networks without explicit per-tree bandwidth control.
For the latter, we introduce a tree-coupled max-min bandwidth-sharing
model and show that tree duplication approximates an explicit fractional
allocation with a provable accuracy-complexity tradeoff. Finally, we
extend the fractional-packing framework to additional collective
operations and multiple concurrent collective groups.
\end{abstract}

\keywords{collective communication, AllReduce, fractional tree packing,
multiplicative weights, packet spraying, max-min fairness, GPU clusters,
datacenter networks}

\maketitle

\section{Introduction}

Large-scale distributed training relies heavily on collective
communication. In data-parallel training, for example, workers compute
gradients locally and then use AllReduce to obtain the same aggregated
result. Similar communication patterns arise in tensor, pipeline, and
expert parallelism, as well as distributed inference. As model sizes,
accelerator counts, and local computation rates grow, these collectives
increasingly become bandwidth-bound, making efficient use of network
capacity a central performance concern.

In GPU clusters, collective operations are commonly implemented by
communication libraries such as NVIDIA NCCL, which provides
topology-aware implementations of AllReduce, AllGather, ReduceScatter,
and related operations across multi-GPU and multi-node
systems~\cite{nccl}. Such systems combine heterogeneous resources:
intra-node links such as NVLink or NVSwitch may be substantially faster
than inter-node Ethernet or InfiniBand links, and oversubscription may
vary across the network hierarchy. Consequently, obtaining high
collective throughput requires communication schedules that account for
the capacities and topology of the underlying fabric.

A useful abstraction for bandwidth-bound collective communication is a
collection of communication trees. For AllReduce, a tree reduces a data
chunk toward a root and broadcasts the result back along the tree.
Different chunks may be assigned to different trees, each operating at
a continuous rate. With data partitioned proportionally to these rates,
the resulting steady-state collective throughput is the sum of the
bandwidths assigned to the active trees.

ForestColl~\cite{forestcoll} established fractional tree packing as a
powerful approach to collective scheduling on heterogeneous fabrics. In
this formulation, candidate trees receive continuous rates whose sum is
maximized subject to physical-link capacity constraints. ForestColl's
scalable polynomial-time construction is developed for AllGather, with
AllReduce obtained by composing AllGather with ReduceScatter. This
composition, however, is not guaranteed to be optimal for AllReduce, as
the counterexample in Appendix~\ref{app:counterexample} shows.

DualColl instead targets bandwidth-bound AllReduce directly. It retains
the fractional tree-packing abstraction but develops a different
algorithmic framework for solving it.
For the tree-based AllReduce formulation, DualColl computes a feasible
solution whose throughput is within a $(1-\epsilon)^2$ factor of the
optimum. To the best of our knowledge, DualColl is the first polynomial-time
algorithm with a provable near-optimality guarantee for bandwidth-bound
AllReduce on switched topologies.

DualColl uses a Garg--K\"onemann-style multiplicative-weights algorithm.
The algorithm maintains a dynamic price for every physical link and
repeatedly requests a minimum-cost collective tree. Its key
problem-specific component is an exact oracle that projects physical-link
prices onto a logical graph containing only the participating ranks and
then computes a minimum spanning tree. This avoids enumerating the
exponentially many candidate trees. 

We also extend the framework toward practical network operation.
First, we consider local equal-next-hop packet spraying, in which a
logical flow is distributed fractionally over multiple shortest-path
links rather than routed along a scheduler-selected path. Fine-grained
multipath packet spraying is increasingly relevant in AI-oriented
network fabrics; for example, the Ultra Ethernet Transport architecture
supports packet spraying across multiple paths to improve load balancing
and network utilization for AI and HPC workloads~\cite{uet}. Our model
captures the topology-based, congestion-oblivious form of such spraying.

Second, because real networks do not generally expose explicit per-tree
bandwidth control, we introduce a tree-coupled max-min bandwidth-sharing
model. We show that duplicating trees acts as an effective weight under
max-min sharing and can approximate a desired fractional allocation with
a provable accuracy--complexity tradeoff. Finally, we extend the
fractional-packing framework to additional collective operations and
multiple concurrent collective groups.

The main contributions of this paper are as follows:
\begin{itemize}[
    leftmargin=*,
    labelsep=0.5em,
    topsep=0pt,
    partopsep=0pt,
    itemsep=0.3em,
    parsep=0pt
]
    \item We give, to the best of our knowledge, the first
    polynomial-time algorithm with a provable near-optimality guarantee
    for bandwidth-bound AllReduce on switched topologies. Unlike
    ForestColl's composition of AllGather and ReduceScatter, which can
    be strictly suboptimal for AllReduce
    (Appendix~\ref{app:counterexample}), DualColl targets the AllReduce
    formulation directly.

    \item We develop a Garg--K\"onemann-style approximation algorithm
    with an exact minimum-cost collective-tree oracle based on dual-cost
    projection and a minimum spanning tree, avoiding explicit
    enumeration of exponentially many candidate trees.

    \item We extend the model to local equal-next-hop packet spraying,
    where a logical flow consumes fractional capacity on multiple
    physical links.

    \item We introduce a tree-coupled max-min bandwidth-allocation model
    and a tree-duplication construction that approximates explicit
    fractional tree bandwidths without requiring direct per-tree
    bandwidth control.

    \item We extend the framework to multiple concurrent collective
    groups and other collective operations. In particular, replacing
    the AllReduce oracle with one based on rooted spanning
    arborescences enables AllGather within the same fractional-packing
    framework.
\end{itemize}

The remainder of the paper is organized as follows.
Section~\ref{sec:model} presents the network model and the fractional
tree-packing formulation, and Section~\ref{sec:tree-optimality} proves
that restricting AllReduce schedules to trees loses no throughput.
Section~\ref{sec:dualcoll-algorithm} develops the DualColl approximation
algorithm, its minimum-cost collective-tree oracle, and its approximation
and complexity guarantees. Section~\ref{sec:packet-spraying-mmf} considers
packet-sprayed routing and tree-coupled max-min realization.
Section~\ref{sec:other-collectives} extends the framework to AllGather,
ReduceScatter, and multiple concurrent AllReduce groups.
Section~\ref{sec:related-work} reviews related work, and
Section~\ref{sec:conclusion} concludes the paper.
\section{Model and ForestColl Foundation}
\label{sec:model}

This section introduces the network and collective-communication model
used by DualColl. We then restate the fractional tree-packing
formulation underlying ForestColl, which serves as the starting point
for our work. Throughout this section, we consider the baseline
scheduler-routed model in which every logical communication edge is
mapped to a single chosen physical path. The extensions considered later are
introduced separately in Section~\ref{sec:packet-spraying-mmf}.

\subsection{Physical Network Model}
\label{sec:network-model}

We model the physical network as a directed graph
\[
G=(V\cup S,E),
\]
where \(V\) is the set of compute nodes, or ranks, and \(S\) is the set
of switches. Each directed physical edge \(e\in E\) has capacity
\(C_e>0\).

A full-duplex physical link is represented by two directed edges,
one in each direction, and the two directions are accounted for
by separate capacity constraints. In this paper, we
assume that the two directed edges corresponding to the same
physical link have equal capacity.

For a given collective operation, let \(R\subseteq V\) denote the set
of participating ranks. We initially consider a single collective group
and assume \(R=V\) for notational simplicity. Multiple concurrent
collective groups are considered in Section~\ref{sec:other-collectives}.

\subsection{Tree-Based AllReduce}
\label{sec:tree-allreduce}

In an AllReduce operation, every participating rank initially
holds an input array, and every rank must obtain the element-wise
reduction of all input arrays.

A tree-based AllReduce schedule performs this operation in two
phases. First, data is reduced from the leaves of a rooted logical tree
toward its root. Intermediate ranks combine the values received from
their children with their own local values. Second, the root broadcasts
the reduced result back to all ranks along the tree. Note that because data is combined at each intermediate node, the amount of data that a node sends does not increase as we go higher in the tree.

A logical tree contains only participating compute nodes. Switches are
not represented as logical vertices; they appear only in the physical
network over which the logical edges are routed.

Let \(T\) be a logical spanning tree over \(R\). Each logical edge of
\(T\) is mapped by the scheduler to a single physical path in \(G\).
The complete routing of the logical edges determines how much capacity
the tree consumes on every physical edge.

A schedule may use several logical trees in parallel. The input array
is divided into disjoint chunks, and each chunk is assigned to one
tree. Let
\[
\mathcal{T}=\{T_1,T_2,\ldots,T_k\}
\]
be the set of selected trees, and let \(\lambda_T\geq 0\) denote the
bandwidth assigned to tree \(T\), per direction, so that the reduce and
broadcast phases each operate simultaneously at rate \(\lambda_T\).

Restricting attention to schedules of this tree form loses no throughput.
Intuitively, consider a tiny shard of the data which is never split by some optimal schedule. If two ranks
both act as roots for that shard---each independently computing the fully
reduced value---then work is wasted: the two roots must communicate their
own contributions to one another in any case, so rather than exchanging
raw contributions it is always at least as good for one of them to send
the other its fully reduced result, eliminating one of the two roots.
Iterating this observation collapses every shard to a single root, after
which its reduction and broadcast each form a tree; a further exchange
argument shows the reduce and broadcast trees may be taken identical. We
state the resulting guarantee here and prove it in
Section~\ref{sec:tree-optimality}.

\begin{theorem}[Tree-based AllReduce is optimal]
\label{thm:tree-optimal-body}
Under scheduler-routed single-path routing, on a full-duplex network in
which every physical link has equal capacity in both directions, there exists a bandwidth-optimal AllReduce schedule in
which each shard is reduced along a tree to a single root and broadcast
from that root along the same tree.
\end{theorem}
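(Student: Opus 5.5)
We follow the intuition sketched before the statement: fix an arbitrary bandwidth-optimal schedule and normalise it in three stages without increasing the load on any directed physical edge. The model is a steady-state fluid one. The input is split into infinitesimal shards. Each shard is processed by a set of logical transfers (edges between ranks) carrying either partial aggregates or final values. Each transfer is routed on a single physical path and consumes capacity on every edge of that path. By decomposition, we may assume every shard is treated identically within a finite class of shard types. Throughput is then the total shard rate, subject to the per-edge capacity constraints. The first step is to formalise this as a linear program over ``shard plans'' and record one fact for a fixed shard: every rank $r$ must receive information depending on every other input $x_s$. Hence the transfers of the plan contain, for each ordered pair $(s,r)$, a directed dependency path from $s$ to $r$.

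The second step collapses roots. Call a rank a \emph{root} of a shard if it computes the full reduction without receiving the full reduction from someone else. Suppose a plan has two roots $a$ and $b$. Then $b$'s result depends on $a$'s input, so some chain of transfers carries information from $a$ toward $b$. We reroute as follows: $b$ no longer computes the reduction itself, but receives the final value from $a$ along the same logical edges (and hence the same physical paths) that previously carried $a$'s contribution into $b$'s computation. Each such transfer still sends one unit of data, because aggregates of a shard have the same size as the shard. The load on every physical edge is therefore unchanged or smaller, since some transfers that fed $b$'s computation become unnecessary. Each such step strictly decreases the number of roots, so iterating yields one root $\rho$ per shard. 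The transfers that bring inputs to $\rho$ then contain an in-arborescence $A_{\mathrm{in}}$ spanning $R$ toward $\rho$. The transfers that deliver the result contain an out-arborescence $A_{\mathrm{out}}$ from $\rho$. We discard the other transfers, which can only reduce load.

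The third step, which we expect to be the main obstacle, makes the reduce tree and the broadcast tree identical. This is where full duplex and symmetric capacities are used. We cannot simply replace $A_{\mathrm{out}}$ by the reversal of $A_{\mathrm{in}}$, because that changes which directed edges are loaded. Our plan is to average instead. Consider the pair of shards (or shard types) that use $(A_{\mathrm{in}},A_{\mathrm{out}})$, together with the ``mirrored'' plan that uses the reversed trees $(\overline{A_{\mathrm{out}}},\overline{A_{\mathrm{in}}})$. Under symmetric capacities, mirroring the entire optimal schedule, by reversing every path, gives another feasible optimal schedule. Averaging the schedule with its mirror preserves feasibility and throughput, and it produces shards of the form ``reduce on $T_1$, broadcast on the reverse of $T_2$.''

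What remains is a combination argument. The load of a mixed pair $(T_1,T_2)$ plus that of its mirror $(T_2,T_1)$ equals the load of the pure tree $T_1$ (up and down) plus that of the pure tree $T_2$ (up and down). Each side uses every directed edge of $T_1$ and $T_2$ exactly once in each orientation. Replacing each mixed pair by half a unit of pure $T_1$ and half a unit of pure $T_2$ therefore leaves every directed edge load unchanged. This yields an optimal schedule consisting only of undirected spanning trees used for both reduce and broadcast, which is the claim. The delicate points to verify are that per-edge loads really add up correctly under path reversal, which needs the paths for the two directions to be chosen as reverses of each other (available because routing is chosen by the scheduler), and that the root-collapsing step never increases the per-edge load.
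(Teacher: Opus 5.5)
Your three-stage plan is the paper's, and your third step is in substance Lemma~\ref{lem:same-tree}. Averaging with the mirror and then recombining $(T_1,T_2)$ with $(T_2,T_1)$ ends with exactly the paper's schedule: the simple trees on $A_{\mathrm{in}}$ and on $A_{\mathrm{out}}$, each at half rate, with load $(L_e+L_{\bar e})/2\le C_e$ on each direction.

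The gap is in the root-collapsing step. You promote \emph{the entire chain} that carried $a$'s contribution into $b$, starting at $a$ itself. But the early transfers of that chain may feed $a$'s own reduction, and then the promotion is circular. Concretely, take ranks $a,b,c,d$ with these transfers: $a\to c$ carrying $x_a$; $c\to a$ and $c\to b$ carrying $x_a\oplus x_c$; $b\to d$ carrying $x_b$; $d\to b$ and $d\to a$ carrying $x_b\oplus x_d$; and deliveries $b\to c$ and $a\to d$ of the final value. Both $a$ and $b$ are roots. The only chain carrying $x_a$ into $b$ is $a\to c\to b$, and the logical edge $a\to c$ carries nothing else. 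Making it carry the final value requires $a$ to hold that value, which requires $c\to a$. But the payload $x_a\oplus x_c$ of $c\to a$ can no longer be formed at $c$. Promoting $c\to a$ as well closes a dependency cycle. By symmetry, the other orientation, in which $a$ receives the final value from $b$ along $b\to d\to a$, fails the same way. So no choice of orientation rescues the rule as stated.

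The missing idea is the paper's choice of where promotion starts (Lemma~\ref{lem:single-root}). Partition the ranks by which root delivers the final value to them, and pick a \emph{last} inter-group flow $f$, one on which no other inter-group flow depends. The source of $f$ obtains the final value through its \emph{existing delivery} from its own root, not through the contribution chain. Only $f$ and the flows depending on it are promoted. Maximality of $f$ ensures that none of these carries data back into the surviving root's group, so that root's computation is untouched. Routes and rates, and hence loads, are unchanged. In the example the groups are $\{a,d\}$ and $\{b,c\}$, and $c\to a$ is a last inter-group flow. Since $c$ already receives the final value from $b$, promoting $c\to a$ alone removes the root $a$.

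A smaller omission: when you extract $A_{\mathrm{in}}$ and discard the remaining transfers, you must also reassign payloads, so that each rank sends the reduction over its subtree. The kept transfers' original payloads may have been computed from discarded ones. The paper does this explicitly in Lemma~\ref{lem:phase-trees}.
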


\subsection{Physical-Link Incidence}
\label{sec:incidence}
For every physical edge \(e\in E\) (a single directed edge) and logical
tree \(T\), define \(A_{e,T}\) as the amount of capacity consumed on
\(e\) when \(T\) operates at one unit of bandwidth in both the reduce and broadcast directions
simultaneously.

Under the scheduler-routed single-path model considered in this
section, \(A_{e,T}\) is the number of routed logical tree edges whose
physical paths traverse \(e\). It may be greater than one if several
logical edges of the same tree use the same physical edge.

If tree \(T\) is assigned bandwidth \(\lambda_T\), then its capacity
consumption on physical edge \(e\) is
\[
A_{e,T}\lambda_T.
\]

Consequently, the aggregate bandwidth assigned to all trees must
satisfy
\[
\sum_{T\in\mathcal{T}} A_{e,T}\lambda_T \leq C_e,
\qquad \forall e\in E.
\]

\subsection{ForestColl Tree-Packing Formulation}
\label{sec:forestcoll-formulation}

ForestColl established tree packing as a
throughput-optimal abstraction for bandwidth-bound collective
communication. DualColl adopts this abstraction as its base
optimization problem.

Let \(\mathcal{S}\) denote the set of all feasible logical spanning
trees over the participating ranks, including their physical-path
mappings. A continuous variable \(\lambda_T\) specifies the bandwidth
assigned to every candidate tree \(T\in\mathcal{S}\).

The fractional tree-packing problem is
\begin{align}
\max_{\lambda}\quad
    & \sum_{T\in\mathcal{S}} \lambda_T
    \label{eq:fractional-packing-objective}\\
\text{subject to}\quad
    & \sum_{T\in\mathcal{S}} A_{e,T}\lambda_T
      \leq C_e,
      && \forall e\in E,
    \label{eq:fractional-packing-capacity}\\
    & \lambda_T\geq 0,
      && \forall T\in\mathcal{S}.
    \label{eq:fractional-packing-nonnegative}
\end{align}

The objective maximizes the aggregate bandwidth of the selected trees,
while the constraints ensure that their combined physical-link
consumption does not exceed any link capacity. We write
\begin{equation*}
\mathrm{OPT} = \max_{\lambda}\ \Bigl\{\, \textstyle\sum_{T\in\mathcal{S}} \lambda_T \ : \ \lambda \text{ satisfies Equations~\eqref{eq:fractional-packing-capacity}--\eqref{eq:fractional-packing-nonnegative}} \,\Bigr\}
\end{equation*}
for the optimal value of this program, i.e., the maximum aggregate tree
bandwidth achievable under the model. This fractional tree-packing
formulation is throughput-optimal in the sense of ForestColl: its optimum
$\mathrm{OPT}$ is the largest collective throughput attainable by any
schedule in the model. DualColl does not change this formulation; it
computes a solution whose throughput is provably close to $\mathrm{OPT}$,
as made precise in Section~\ref{sec:approximation-complexity}.

Although the formulation contains one variable for every possible
tree, only trees with \(\lambda_T>0\) are instantiated in the resulting
schedule. Because the number of capacity constraints is polynomial in
the network size, there exists an optimal solution that uses only a polynomial
number of trees, even though the candidate set \(\mathcal{S}\) may be
exponentially large.

\subsection{Throughput and Data Partitioning}
\label{sec:data-partitioning}

We next explain why the objective in
Equation~\eqref{eq:fractional-packing-objective} corresponds to
maximizing the throughput of a bandwidth-bound AllReduce.

Suppose each rank starts with $X$ units of data and receives a total of \(X\) units of reduced data;
in AllReduce every rank obtains the full reduced result, so \(X\) is the
size of the data each rank receives, not the aggregate data summed over
all ranks. Let \(x_T\) be the amount of this data carried by tree \(T\),
so that
\[
\sum_{T\in\mathcal{T}}x_T=X.
\]

A tree assigned bandwidth \(\lambda_T\) completes its chunk in time
\[
\tau_T=\frac{x_T}{\lambda_T}.
\]

Although a tree-based AllReduce consists of a reduction phase
followed by a broadcast phase, we adopt a bandwidth-dominated,
pipelined steady-state interpretation of this expression. We assume the
messages are sufficiently large and finely pipelined that the reduction
and broadcast of different data segments overlap in steady state, and we
define \(\lambda_T\) as the sustainable steady-state throughput of tree
\(T\) under this pipelined execution. Startup latency and pipeline
fill/drain effects are ignored. Under this interpretation the two phases
do not serialize per segment, and \(\tau_T=x_T/\lambda_T\) is the
steady-state completion time of the chunk assigned to \(T\).

Because all trees execute in parallel, the collective completion time
is determined by the slowest tree. It is minimized when all active
trees complete simultaneously. This is achieved by assigning data in
proportion to the tree bandwidths:
\[
x_T
=
X
\frac{\lambda_T}
     {\sum_{T'\in\mathcal{T}}\lambda_{T'}}.
\]

Under this allocation, every active tree completes after
\[
\tau
=
\frac{X}
     {\sum_{T\in\mathcal{T}}\lambda_T}.
\]
We have just shown that this completion time is achievable. The
following lemma gives a matching lower bound, so this partition is the
best possible.

\begin{lemma}[Completion-time lower bound]
\label{lem:partition-lower-bound}
Let $\mathcal{T}$ be a set of trees with bandwidths $\lambda_T>0$ and
aggregate bandwidth $\Lambda=\sum_{T\in\mathcal{T}}\lambda_T$, and let
the $X$ units of data be partitioned among them in any way. Then some tree
completes no earlier than $X/\Lambda$.
\end{lemma}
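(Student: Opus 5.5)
The plan is a direct averaging (pigeonhole) argument on the completion times $\tau_T = x_T/\lambda_T$. Fix any partition $\{x_T\}_{T\in\mathcal{T}}$ with $x_T\ge 0$ and $\sum_{T\in\mathcal{T}} x_T = X$, and let $\tau_{\max}=\max_{T\in\mathcal{T}} \tau_T$ be the completion time of the slowest tree. I will show $\tau_{\max}\ge X/\Lambda$, arguing by contradiction.

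Suppose every tree finishes strictly before $X/\Lambda$, i.e. $x_T/\lambda_T < X/\Lambda$ for all $T\in\mathcal{T}$. Since each $\lambda_T>0$, multiply through to get $x_T < \lambda_T X/\Lambda$. Summing over the finitely many trees in $\mathcal{T}$ gives
\[
X=\sum_{T\in\mathcal{T}} x_T \;<\; \frac{X}{\Lambda}\sum_{T\in\mathcal{T}}\lambda_T \;=\; X,
\]
which is a contradiction. Equivalently, $X=\sum_T \lambda_T\tau_T\le \tau_{\max}\Lambda$, which yields $\tau_{\max}\ge X/\Lambda$ directly. The sum must be over a nonempty set, which holds because $\Lambda>0$ is implicit in the statement.

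I do not expect a genuine technical obstacle. The one point to state explicitly is the modeling assumption already adopted in Section~\ref{sec:data-partitioning}: a tree with bandwidth $\lambda_T$ cannot deliver its chunk faster than $x_T/\lambda_T$, so its completion time is at least $\tau_T$ and never less. The bound then only gets stronger. I would also note two edge cases. First, trees with $x_T=0$ contribute nothing and do not break the argument. Second, combined with the proportional allocation shown earlier, the bound is tight, so $X/\Lambda$ is exactly the optimal completion time for a fixed tree set and the throughput equals $\Lambda$. This justifies the objective in Equation~\eqref{eq:fractional-packing-objective}.
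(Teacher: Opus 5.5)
Your proof is correct and is essentially the paper's own argument. You assume every tree finishes before $X/\Lambda$, rewrite this as $x_T < X\lambda_T/\Lambda$, and sum to reach $X<X$. Your direct variant $X=\sum_T \lambda_T\tau_T \le \tau_{\max}\Lambda$ is the same averaging step stated without contradiction.
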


\begin{proof}
Let tree $T$ carry $x_T\ge0$ units, with $\sum_{T}x_T=X$, and suppose for
contradiction that every tree completes before time $X/\Lambda$. Then every tree $T$ completes at time
$x_T/\lambda_T<X/\Lambda$, that is, $x_T<X\lambda_T/\Lambda$. Summing over all trees,
\[
X=\sum_{T\in\mathcal{T}}x_T<\frac{X}{\Lambda}\sum_{T\in\mathcal{T}}\lambda_T=X,
\]
a contradiction.
\end{proof}

Therefore, minimizing the completion time of a bandwidth-bound
AllReduce is equivalent to maximizing the aggregate tree
bandwidth:
\[
\sum_{T\in\mathcal{T}}\lambda_T.
\]

This equivalence relies on the steady-state interpretation above, under
which the trees operate concurrently at their assigned sustainable rates.

\section{Optimality of Tree-Based AllReduce}
\label{sec:tree-optimality}

This section proves Theorem~\ref{thm:tree-optimal-body}. The model of
Section~\ref{sec:model} restricts AllReduce schedules to collections of
logical trees, each of which reduces a shard of the data toward its root
and then broadcasts the reduced shard back along the same tree. We show
that this restriction loses no throughput, so the optimum $\mathrm{OPT}$
of the fractional tree-packing formulation is the bandwidth-optimal
AllReduce throughput itself, and not merely the optimum over a restricted
class of schedules. Starting from an arbitrary bandwidth-optimal schedule,
we normalize it in three steps, none of which increases any per-edge load
or the completion time: each shard is first reduced to a single root
(Lemma~\ref{lem:single-root}), its reduce and broadcast phases are then
made trees rooted there (Lemma~\ref{lem:phase-trees}), and finally, under
single-path routing, the two trees are made to coincide
(Lemma~\ref{lem:same-tree}). Lemmas~\ref{lem:promote}--\ref{lem:not-slower}
are the tools used along the way.

\paragraph{Setting.}
We fix a \emph{shard}: a unit of data that the schedule
treats uniformly, so that the shard is never split into smaller pieces and
every rank's contribution to it is moved and reduced as a single whole.
A schedule for the shard is a finite set of \emph{flows}, each carrying a
payload from one rank to another, and each assigned a rate, that is, a
bandwidth. The payload of every flow is the reduction of the contributions
of some subset of the ranks, and the fully reduced shard is the reduction
over all ranks; since reduction is applied elementwise, every payload has
the same size as the shard. A flow of rate $b$ places a load of $b$ times the fraction of
its traffic routed over directed edge $e$ on $e$; the load on $e$ is the
sum of these over all flows of all shards, and the schedule is feasible if
the load on every directed edge $e$ is at most $C_e$. As in
Section~\ref{sec:model}, every physical link has equal capacity in both
directions, $C_e = C_{\bar e}$, where $\bar e$ denotes the reverse of
directed edge $e$.

Flow $g$ \emph{depends} on flow $f$ if the payload of $g$ is computed from
data delivered by $f$; the dependency relation is acyclic. In the
bandwidth-bound model, a flow $g$ that depends on $f$ does not wait for
$f$ to end: it starts as soon as the data of $f$ begins to arrive and
pipes that data onward as it moves, so dependent flows run concurrently.
We call a rank a \emph{root} for the
shard if it obtains the fully reduced shard without receiving it, that is,
it computes the final value locally from partial values it has received.
At least one root exists, since the first rank to hold the fully reduced
shard cannot have received it, and every rank ends holding the fully
reduced shard. For a single directed phase $T$ (a reduce in-tree or a
broadcast out-tree, together with its physical routing) we write $a_e(T)$
for the load $T$ places on directed edge $e$ per unit rate.

\begin{lemma}[Promotion to the fully reduced shard]
\label{lem:promote}
Let $f$ be a flow whose source already possesses the fully reduced shard.
If the payload of $f$ is replaced by the fully reduced shard, and the same
replacement is applied to every flow that depends on $f$, directly or
transitively, then the resulting schedule is still legal.
\end{lemma}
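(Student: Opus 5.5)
We must show the modified schedule is legal: every flow's payload can still be computed from data available at its source at the time it is sent, and every rank still ends with the fully reduced shard. The plan is to induct along a topological order of the dependency relation, which is acyclic. Let $D$ be the set consisting of $f$ and every flow that depends on $f$, directly or transitively. Flows outside $D$ are unchanged and their inputs are unchanged, so they remain legal. The work is therefore to show that each flow in $D$, taken in topological order, can legally carry the fully reduced shard.

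\textbf{Base case.} For $f$ itself, legality is immediate from the hypothesis: its source already possesses the fully reduced shard, so it can transmit that instead of the partial reduction it previously sent. This requires no input beyond what the source holds, and the rate, routing and size are unchanged. Every payload has the size of the shard, so per-edge loads are identical.

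\textbf{Inductive step.} Take $g\in D$ with $g\neq f$. Then $g$ depends on some flow $h\in D$ that precedes it. By induction, $h$ now delivers the fully reduced shard to the source of $g$. The source of $g$ may forward that value unchanged instead of combining partial values. The pipelining model matters here: $g$ starts as soon as $h$'s data begins arriving and pipes it onward, so forwarding $h$'s stream is at least as available as the original combination was. No extra waiting is introduced, and rates and routes stay the same, so loads and feasibility are preserved.

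\textbf{Expected obstacle: end-state correctness.} A rank whose final value was formerly computed by combining payloads, some of which are now replaced, must still end with the correct result. Suppose such a rank received some flow in $D$. That flow now delivers the fully reduced shard, so the rank can simply adopt it; reductions are idempotent in the sense that the rank need not combine anything further. Suppose instead it received no modified flow. Then every input it used is unchanged, and so is its computation. The one thing to check is that no rank's computation mixes a promoted fully reduced value with other contributions, which would double-count them. I would handle this by redefining the local rule: any rank receiving a fully reduced payload outputs it verbatim and ignores combining. The legality of downstream flows was already established under that rule. Dependency acyclicity guarantees this redefinition is well founded.
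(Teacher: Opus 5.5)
Your proof is correct and follows the same route as the paper. Both arguments process the set of $f$ and its dependents in a topological order of the acyclic dependency relation, and both use the hypothesis for $f$ itself and a promoted predecessor for every later flow. Both also note that flows outside this set have unaltered inputs, and that any rank receiving a promoted flow holds the final value and is therefore complete. Your remarks on rates, loads and pipelined timing are not needed for legality; the paper handles those separately in Lemma~\ref{lem:not-slower}.
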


\begin{proof}
Consider the set $\mathcal{D}$ consisting of $f$ together with all flows
that depend on $f$ directly or transitively. We promote every flow in
$\mathcal{D}$ to carry the fully reduced shard. We must check that each
promoted flow's payload is actually available at its source when the flow
is sent, so that the schedule remains executable.

Process the flows of $\mathcal{D}$ in an order consistent with the
dependency relation, which is possible since that relation is acyclic. The
first flow is $f$ itself, and its source possesses the fully reduced shard
by hypothesis, so it can send it. Consider any later flow $g\in\mathcal{D}$
with source $u$. Every flow of $\mathcal{D}$ on which $g$ depends has been
processed earlier and now delivers the fully reduced shard, so $u$ is receiving the fully reduced shard from flows that $g$ depends on and can therefore send
it. Thus every promoted flow is executable.

A rank that receives a promoted flow receives the fully reduced shard, and
holding the final value it is complete for this shard regardless of any
other partial values, so its state is consistent. A flow \emph{not} in
$\mathcal{D}$ does not depend on $f$, so none of its inputs were altered
and its payload is unchanged; hence every rank outside the recipients of
$\mathcal{D}$ ends exactly as before, and in particular every rank that
was already complete remains complete. The schedule is therefore legal.
\end{proof}

\begin{lemma}[Equal rates within a shard]
\label{lem:equal-rates}
Without loss of generality, all flows belonging to the same shard are
assigned the same rate.
\end{lemma}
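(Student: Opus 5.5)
The idea is that a shard is a single unit of data that must be delivered in full to every rank. So in the bandwidth-bound, pipelined model the shard is only complete when every flow it needs has carried the whole payload. Any flow running faster than the bottleneck cannot speed up completion, because data it forwards is computed from, or pipes onward, data arriving no faster than the slowest upstream flow allows. I will therefore lower every flow of the shard to a common rate $b^\star$, equal to the minimum rate over the shard's flows. I will show that this does not increase any edge load and does not increase the shard's completion time.

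\textbf{Step 1 (the completion time of a shard).} Each flow carries a payload of the same size as the shard, say $x$ units, so a flow of rate $b$ needs time $x/b$ to transfer it. Since every rank must end holding the fully reduced shard, and each flow of a (minimal) schedule contributes to some rank's final value, the shard is not finished until every flow has delivered its whole payload. Hence the completion time is at least $\max_f x/b_f = x/b^\star$. Flows that carry no needed data I simply delete; deleting them only lowers loads.

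\textbf{Step 2 (lowering rates).} I set every flow's rate to $b^\star$ and keep the routing fractions unchanged. The load a flow places on edge $e$ is its rate times its routed fraction on $e$. This is monotone in the rate, so no per-edge load increases, and feasibility with respect to every $C_e$ is preserved. This holds even when other shards share the edges.

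\textbf{Step 3 (timing at the common rate).} In the concurrent pipelined model, a dependent flow starts as soon as its inputs begin to arrive and then proceeds at rate $b^\star$. Its inputs also arrive at rate $b^\star$, so it never starves. I will argue by induction along the acyclic dependency order (the same ordering used in the proof of Lemma~\ref{lem:promote}) that every flow finishes within time $x/b^\star$, ignoring pipeline fill as the model prescribes. The completion time is therefore unchanged, since Step 1 already gave a lower bound of $x/b^\star$.

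The main obstacle is Step 1: making precise that the completion time is governed by the slowest flow, even though flows are pipelined and some flow might seem redundant. I will handle this by first pruning flows whose payload is never used toward any rank's final value, so that in the remaining schedule every flow must finish for the shard to finish.
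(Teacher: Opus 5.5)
Your proposal is correct and follows essentially the same argument as the paper: prune flows that feed no rank's final value, lower-bound the completion time by the slowest flow's transfer time $x/b^\star$, lower every rate to $b^\star$ (so per-edge loads only decrease), and use pipelining along the acyclic dependency order to show the common-rate schedule still finishes by $x/b^\star$. The only difference is cosmetic: the paper also shows that the original schedule finishes at exactly $s/b_{\min}$, whereas you note that a lower bound for the original schedule and an upper bound for the modified one already suffice.
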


\begin{proof}
We first discard every flow that does not contribute, directly or
transitively, to a final reduced value held by some rank; removing such a
flow changes what no rank computes and can only reduce the load on every
link. Let the shard have size $s$, let its remaining flows have rates
$b_g>0$, and let
$b_{\min}=\min_g b_g$. Every flow of the shard carries a payload of size
$s$. Flow $g$ therefore needs time $s/b_g$ to transmit its payload, and the
shard cannot complete before every flow has done so, that is, before time
$\max_g s/b_g = s/b_{\min}$. Conversely, since a flow pipes its payload
onward as its inputs arrive, its effective rate is the minimum of its own
rate and the effective rates of the flows it depends on, so every flow
progresses at effective rate at least $b_{\min}$. Ignoring start-up and
fill/drain effects, as the model of Section~\ref{sec:model} does, every
flow finishes by time $s/b_{\min}$, and the shard completes at exactly
$s/b_{\min}$.

Now lower the rate of every flow $g$ with $b_g>b_{\min}$ to $b_{\min}$.
All flows of the shard then have rate $b_{\min}$, so by the argument above
the shard still completes at time $s/b_{\min}$. The load that $g$ places on
a directed edge is proportional to its rate, so every per-edge load is
unchanged or smaller; every capacity constraint that held before still
holds, and the remaining shards are unaffected.
\end{proof}

\begin{lemma}[Promotion is not slower]
\label{lem:not-slower}
Reordering flows by the promotion of Lemma~\ref{lem:promote}, so that a
promoted flow forwards the fully reduced shard as it arrives at its
source, does not make the schedule slower.
\end{lemma}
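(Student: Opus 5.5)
We will argue directly from the completion-time characterization in the proof of Lemma~\ref{lem:equal-rates}, whose reasoning applies to any legal schedule. After discarding non-contributing flows and equalizing rates as there, a shard of size $s$ whose surviving flows have minimum rate $b_{\min}$ completes at exactly $s/b_{\min}$. It therefore suffices to show three things about promotion. First, it does not create new flows. Second, it does not raise any flow's rate above what the capacity constraints already allow, and does not lower any rate. Third, it does not introduce a new waiting relation that forces a promoted flow to start after its inputs finish.

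The key steps are as follows, in order. First we fix the set $\mathcal{D}$ of promoted flows from Lemma~\ref{lem:promote}. We then observe that promotion changes only payload contents, never sizes: every payload has size $s$, since reduction is elementwise. Rates and routes are also unchanged, so every per-edge load is unchanged and feasibility is preserved.

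Next we examine the dependency structure after promotion. Each promoted flow $g\in\mathcal{D}$ now depends only on promoted flows feeding its source (or, for $f$, on the data already held at its source). Such dependencies are a subset of those $g$ had before. We can drop any dependency on a non-promoted input, because the fully reduced value supersedes it. Hence the acyclic dependency relation can only shrink.

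In the pipelined model, a flow's effective rate is the minimum of its own rate and the effective rates of the flows it depends on. Shrinking the set of dependencies can therefore only raise, or leave unchanged, each effective rate. Every flow thus still progresses at effective rate at least $b_{\min}$. The shard completes by $s/b_{\min}$, no later than before, and other shards are untouched because loads are unchanged.

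The main obstacle is justifying that a promoted flow can ``forward the fully reduced shard as it arrives''. The concern is that its source might have to wait for the entire reduced value before sending, which would serialize it behind its inputs. We would handle this by arguing segment by segment. The fully reduced shard is elementwise, so each segment of the reduced value becomes available at $u$ as soon as that segment of the promoted input flow arrives. The source of $f$ itself already holds the value when sending. Forwarding is then pipelined exactly like the original flows, which matches the steady-state, fill/drain-free interpretation adopted in Section~\ref{sec:model}.
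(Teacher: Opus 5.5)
Your bookkeeping matches the paper: no new flows, routes and rates unchanged, and every payload still of size $s$, so per-edge loads are unchanged. The step that carries the timing argument, however, is wrong. You assert that after promotion the dependency relation can only shrink, and that ``the source of $f$ itself already holds the value when sending.''

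The lemma is about exactly the case where this fails. In Lemma~\ref{lem:single-root}, the source of $f$ lies in $S_i$ and in general obtains the fully reduced shard only through the deliveries from $r_i$. ``Forwarding the fully reduced shard as it arrives'' means that $f$ now pipes data from those delivery flows. $f$ did not depend on these flows in the original schedule, so promotion \emph{adds} dependencies to $f$. Your monotonicity principle (fewer inputs can only raise effective rates) says nothing about added inputs, and a newly added input is precisely what could throttle $f$. If you instead let $f$ wait until its source holds the whole value, you get the serialization your last paragraph tries to avoid.

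The missing idea is the one the paper uses. By Lemma~\ref{lem:equal-rates}, every flow of the shard runs at the common rate $b$, including the delivery flows from $r_i$ into the source of $f$. So no flow depends on a slower flow, and the minimum over the enlarged upstream set is still $b$. Nothing is throttled, and the shard still completes at $s/b$.

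Acyclicity of the enlarged relation also does not follow from ``shrinking.'' It comes from choosing $f$ as a last inter-group flow in Lemma~\ref{lem:single-root}, which guarantees that neither $r_i$'s reduction nor its deliveries depend on $f$. Since you already equalize rates at the outset, replacing the shrinking claim with this equal-rate argument, plus the acyclicity remark, repairs the proof.
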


\begin{proof}
By Lemma~\ref{lem:equal-rates} all flows of the shard have the same rate
$b$, so the shard completes at time $s/b$, where $s$ is its size. The
promotion keeps every flow with its route and its rate, and changes only
payloads, each of which still has size $s$. Since all rates are equal, no
flow has a higher rate than a flow it depends on, including the flows that
now deliver the fully reduced shard to the source of a promoted flow; so
no flow is throttled, and the shard still completes at time $s/b$. Finally, the bandwidth consumed
on every link is unchanged, so the rest of the collective is unchanged and
finishes in the same time.
\end{proof}

\begin{lemma}[Single root]
\label{lem:single-root}
For every feasible schedule there is a feasible schedule with no larger
per-edge loads and no larger completion time in which each shard has
exactly one root. Equivalently, without loss of generality each shard is reduced
into a single root $r$ and every rank other than $r$ receives the fully
reduced shard at some point of the collective.
\end{lemma}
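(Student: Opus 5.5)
We argue shard by shard, by induction on the number of roots. By Lemma~\ref{lem:equal-rates} we first normalize every shard so that all of its flows share a common rate $b$; discarding non-contributing flows only lowers loads. The shard then completes at time $s/b$. Fix a shard with at least two roots, and let $r$ be a root that obtains the fully reduced shard earliest in the dependency order. Choose $r$ minimal with respect to the acyclic dependency relation among the flows that produce final values.

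Let $r'\neq r$ be another root. Since $r'$ computes the final value locally, some flow $f$ must carry information from the dependency cone of $r$ toward $r'$. Otherwise $r'$'s final value could not include the contributions that reach $r'$ only through $r$'s side. We can say more: $r'$ cannot have computed the full reduction without receiving, directly or transitively, a partial value that includes $r$'s own contribution. Trace such a chain of flows from $r$ to $r'$. Take the first flow $f$ on this chain whose source already holds the fully reduced shard. Such a flow exists because $r$ itself does, at the point where $r$ sends data onward after completing. If $r$ sends its contribution before completing, we reroute that dependence so that it occurs after completion; this is exactly the reordering licensed by Lemma~\ref{lem:not-slower}.

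Now apply Lemma~\ref{lem:promote} to $f$. Every flow in the dependency closure of $f$ now carries the fully reduced shard. In particular, the flow entering $r'$ along the chain delivers the final value, so $r'$ receives the fully reduced shard and is no longer a root. Promotion changes only payloads, not routes or rates, so all per-edge loads are unchanged. By Lemma~\ref{lem:not-slower} the completion time does not increase. We must also check that no new root is created. Promoted recipients receive the final value, so they are not roots. Unpromoted flows keep their payloads, so any rank that was not a root stays a non-root. Hence the number of roots strictly decreases. Iterating finitely many times leaves exactly one root $r$, and every other rank then ends holding the fully reduced shard without computing it, that is, it receives it. This gives the equivalent formulation.

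The main obstacle is the middle step: showing that a second root $r'$ necessarily lies downstream, via some flow, of a point where a source already holds the full reduction. A priori, $r$ and $r'$ might each receive partial values from the other before either completes, forming a mutual exchange with no flow from a fully reduced holder. I would handle this with the exchange argument sketched in Section~\ref{sec:model}. Since $r$ and $r'$ must exchange contributions anyway, replace the flow carrying $r$-side data to $r'$ by one that starts from $r$'s final value. This is legal because, in the pipelined model, $r$'s final value streams at rate $b$ as its inputs arrive, and no dependency cycle is created because $r$'s final value does not depend on that flow once we drop $r'$'s reliance on it. Making this acyclicity check rigorous is the delicate part. I would choose $r$ minimal and remove from $r$'s cone any flow originating at $r'$, which is permissible because $r$ already computes the full value from the other contributions.
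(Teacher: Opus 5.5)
\paragraph{Verdict: genuine gap.} The gap is the step you flag as the ``main obstacle''. Your argument never shows that some flow can be promoted without creating a dependency cycle.

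\paragraph{Why your promotion rule fails.} On a chain from $r$ to $r'$, ``the first flow whose source already holds the fully reduced shard'' is simply the flow leaving $r$. Making it carry $r$'s final value is cyclic whenever that flow feeds back into $r$'s own reduction. Concretely, take ranks $r,x,r',y$ with the following flows:
\begin{itemize}
\item $r\to x$ carrying $c_r$;
\item $x\to r$ and $x\to r'$, each carrying $c_r+c_x$;
\item $r'\to y$ carrying $c_{r'}$;
\item $y\to r'$ and $y\to r$, each carrying $c_{r'}+c_y$;
\item separate deliveries of the final value $r\to x$ and $r'\to y$.
\end{itemize}
Both $r$ and $r'$ are roots. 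The only chain carrying $c_r$ into $r'$ is $r\to x\to r'$, so your rule promotes $r\to x$. Its new payload depends on $r$'s final value, hence on $x\to r$, which depends on $r\to x$: a cycle. The example is symmetric, so swapping the roles of $r$ and $r'$ fails the same way.

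\paragraph{Why the proposed repairs do not help.} Removing from $r$'s cone the flow $r'\to y$ that originates at $r'$ leaves $y\to r$ carrying only $c_y$. Then $r$ no longer computes the full value, so the removal is not permissible. In any case the cycle runs through $x$, not through $r'$. Choosing $r$ ``minimal'' or ``earliest'' buys nothing either. In the pipelined model there is no temporal order, only the dependency DAG. In that DAG the roots are mutually incomparable, since a root whose reduction depended on another root's final value would be receiving the full shard and would not be a root.

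\paragraph{The missing idea.} The paper gets an acyclic promotion from a maximality argument you do not have.
\begin{itemize}
\item Partition the ranks into groups $S_i$ according to which root $r_i$ delivers the final value to them.
\item Take an inter-group flow $f$, from $S_i$ to $S_j$, on which no other inter-group flow depends.
\item Then every flow that depends on $f$ stays inside $S_j$. Two consequences follow.
\begin{enumerate}
\item Neither $r_i$'s reduction nor the delivery of the final value to the source of $f$ can depend on $f$. So letting $f$ forward that final value creates no cycle.
\item Promoting the chain from $f$ to $r_j$ changes no data needed outside $S_j$. Hence $r_i$ remains a root while $r_j$ stops being one.
\end{enumerate}
\item If no chain from $f$ reaches $r_j$, then $f$ is useless and is deleted.
\end{itemize}
In the example above this rule selects $x\to r'$, with $x$ forwarding the final value it receives from $r$. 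Without some such choice, your induction on the number of roots does not go through.
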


\begin{proof}
By Lemma~\ref{lem:equal-rates} we may assume that all flows of the shard
have the same rate $b$; this does not increase any per-edge load and does
not change the completion time. Suppose the shard has $k\ge 2$ roots
$r_1,\ldots,r_k$. Each rank other
than a root receives the fully reduced shard from some root; let $S_i$ be
the set of ranks to which $r_i$ is responsible for delivering it, with
$r_i\in S_i$. We may assume the $S_i$ form a partition of the ranks: if a
rank were served by two roots, one of the two deliveries is redundant and
may be deleted, which only reduces load, and reassigning the rank to a
single server leaves the sets disjoint.

Call a flow \emph{inter-group} if its endpoints lie in different sets of
this partition. At least one inter-group flow exists: each root holds the
\emph{fully} reduced shard, which incorporates the contributions of ranks
in every group, and those contributions can only leave their group along
an inter-group flow. Since the dependency relation is acyclic and there
are finitely many flows, we may choose a \emph{last} inter-group flow
$f$, one on which no other inter-group flow depends (there may be several). Say $f$ goes from a
rank of $S_i$ to a rank of $S_j$, with $i\neq j$.

Consider the flows that depend on $f$, directly or transitively, and that
lead to $r_j$. If no such chain exists, then $f$ delivers data that never
reaches $r_j$ and never leaves $S_j$ toward any other group, so $f$ is
useless and may be deleted, decreasing the load and leaving the set of
roots unchanged; we may therefore assume such a chain exists. Let
$\mathcal{C}$ be this chain of flows, from $f$ up to and including the
flow that delivers into $r_j$.

We now promote $\mathcal{C}$: replace the payload of $f$ by the fully
reduced shard and apply the same replacement along $\mathcal{C}$. For
this to be executable, the source of $f$ must hold the fully reduced shard
when $f$ fires. It lies in $S_i$, so $r_i$ delivers the fully reduced
shard to it; we therefore let $f$ forward the fully reduced shard as it
arrives at its source. This reordering is legitimate because, as shown
below, no flow of $\mathcal{C}$ carries data that $r_i$ or the source of
$f$ needs in order to obtain the fully reduced shard, so it does not
prevent either of them from completing. By Lemma~\ref{lem:promote} the
result is a legal schedule, and by Lemma~\ref{lem:not-slower} it is not
slower.

It remains to check that this promotion does not destroy the very data it
relies on, that is, that no flow of $\mathcal{C}$ was needed for $r_i$ to
obtain the fully reduced shard. This is where the choice of $f$ as the
\emph{last} inter-group flow is used. Every flow of $\mathcal{C}$ depends
on $f$, hence carries data that has entered $S_j$ through $f$. For such
data to contribute to $r_i$'s reduction it would have to return from
$S_j$ to $S_i$, which requires an inter-group flow depending on $f$; no
such flow exists, by the maximality of $f$. Hence $r_i$ still computes the
fully reduced shard exactly as before, and remains a root.

In the modified schedule $r_j$ receives the fully reduced shard along the
promoted chain instead of computing it from partial values, so $r_j$ is no
longer a root, and it still delivers the shard to $S_j$ as before. The
number of roots has decreased by one, while the per-edge loads and the
completion time are unchanged. Repeating the argument until a single root
remains, and doing so for every shard, yields the claimed schedule.
\end{proof}

\begin{lemma}[Each phase is a tree]
\label{lem:phase-trees}
Consider a shard with a single root $r$. Without loss of generality, its
reduce phase is an in-tree, that is, a spanning in-arborescence rooted at
$r$, and its broadcast phase is an out-tree, a spanning out-arborescence
rooted at $r$.
\end{lemma}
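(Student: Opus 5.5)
We split the shard's flows into the two phases fixed by Lemma~\ref{lem:single-root}. The \emph{reduce phase} consists of the flows on which the root $r$'s computation of the fully reduced shard depends, directly or transitively, and that carry partial values. The \emph{broadcast phase} consists of the flows that carry the fully reduced shard. By Lemma~\ref{lem:equal-rates} every flow has the same rate $b$. We may also discard any flow outside these two sets, since it contributes to no final value and removing it only lowers loads.

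\textbf{Broadcast phase.} Every rank $v\neq r$ must receive the fully reduced shard at least once. Keep, for each such $v$, a single incoming broadcast flow (say the first one, in an order consistent with the dependency relation) and delete every other broadcast flow into $v$. Each remaining rank then has exactly one parent. Following parents from any $v$ leads back to $r$, because the dependency relation is acyclic and only $r$ holds the value without receiving it. The kept flows therefore form a spanning out-arborescence rooted at $r$. Deleting flows only decreases per-edge loads. Completion time does not increase, because every flow still runs at rate $b$ and forwards as data arrives, as in the proof of Lemma~\ref{lem:not-slower}.

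\textbf{Reduce phase.} View the reduce flows as a DAG on ranks, with an arc $u\to w$ whenever some flow sends a partial value from $u$ to $w$. Every rank's contribution must reach $r$, so every rank has a directed path to $r$. Choose a spanning in-arborescence $T$ of $r$ inside this DAG, for instance by giving each $v\neq r$ one outgoing arc on a path to $r$. Then modify the payloads: each rank $v$ sends to its $T$-parent the reduction of its own contribution with everything it receives from its $T$-children, and every flow not in $T$ is deleted. By induction along $T$, from the leaves upward, each rank's payload is the reduction over its subtree, so $r$ receives every contribution exactly once and computes the fully reduced shard. Reduction is elementwise, so every payload still has size $s$. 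The kept flows keep their routes and rate $b$, so loads only decrease and the shard still completes by $s/b$.

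\textbf{Main obstacle.} The delicate point is legality in the reduce phase. Multiple paths in the original DAG may have combined overlapping subsets of contributions, which is harmless for idempotent operations but would double-count under summation. The argument must therefore show that redefining payloads along the tree $T$ is always consistent, rather than reusing the original payloads. I would handle this by explicitly reassigning payloads to the subtree reductions above and checking executability in a topological order of $T$, exactly as in the proof of Lemma~\ref{lem:promote}. The one subtle check is that a kept arc $u\to w$ of $T$ originally existed as a flow with that physical route, so its load on each edge is already accounted for and nothing new is introduced.
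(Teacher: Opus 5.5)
Your overall plan matches the paper's proof: fix the common rate $b$ via Lemma~\ref{lem:equal-rates}, extract a spanning in-arborescence from the rank-level send graph of the reduce flows, and reassign each kept flow's payload to the reduction over its sender's subtree. For the broadcast, you keep one incoming flow per rank. As in the paper, only a subset of the original flows survives, each with its route and rate. One step fails as written, though. The graph on ranks with an arc $u\to w$ whenever a reduce flow goes from $u$ to $w$ is \emph{not} a DAG in general, because acyclicity is a property of the dependency relation among flows, not of the ranks. For example, take ranks $a,b,r$ with contributions $x_a,x_b,x_r$. Let $a\to b$ carry $x_a$, $b\to a$ carry $x_b$, $a\to r$ carry $x_b$, and $b\to r$ carry $x_a$. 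The dependency relation is acyclic and all four flows are reduce flows, yet the send graph contains the cycle formed by $a\to b$ and $b\to a$. Your recipe ``give each $v\neq r$ one outgoing arc on a path to $r$'' may then pick $a\to b$ (on $a\to b\to r$) and $b\to a$ (on $b\to a\to r$), which is not an arborescence. A spanning in-arborescence still exists, but you need a selection rule that rules out cycles. The paper takes, for each $v\neq r$, the first arc of a \emph{shortest} path to $r$, so the distance to $r$ strictly decreases along chosen arcs. With that replacement your reduce-phase argument goes through unchanged. Note also that the ``main obstacle'' you flag, double counting, is handled simply by the payload reassignment; the point that actually needs care is the acyclicity of the selected arcs.

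Your broadcast step differs from the paper's, which takes a shortest-path (BFS) out-tree from $r$ in the send graph of the flows carrying the fully reduced shard. Keeping, for each $v\neq r$, the first incoming broadcast flow also works, but only if ``first'' means a fixed linear extension of the dependency order, and you should say why. If the kept flow into $v$ leaves some $u\neq r$, it depends on a flow delivering the fully reduced shard into $u$. So the kept flow into $u$ comes strictly earlier, and positions strictly decrease along parent pointers. With a merely minimal choice in the partial order, a cycle can arise. For example, take flows $r\to a$ and $r\to b$, plus $a\to b$ depending on $r\to a$ and $b\to a$ depending on $r\to b$. Choosing $b\to a$ for $a$ and $a\to b$ for $b$ is a minimal choice at each rank, yet it forms a cycle. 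You should also explicitly delete any broadcast flows into $r$ itself, which your rule leaves untouched.
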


\begin{proof}
By Lemma~\ref{lem:equal-rates} all flows of the shard have a common rate
$b$, and the shard completes at time $s/b$, where $s$ is its size.

\emph{Reduce phase.} Call a flow a \emph{reduce flow} if the computation of
the fully reduced shard at $r$ depends on it, directly or transitively. Let
$H$ be the \emph{send graph} of the reduce phase: its vertices are the
ranks, and it has a directed edge $(u,w)$ whenever at least one reduce flow
goes from $u$ to $w$. The graph $H$ need not be an in-arborescence; a rank
may, for instance, send to several ranks. However, every rank has a
directed path to $r$ in $H$: the contribution of a rank $v$ is part of the
fully reduced shard, so it leaves $v$ on a reduce flow and reaches $r$
along a chain of reduce flows, each of which starts at the rank where the
previous one ends.

We select an in-arborescence inside $H$. Let $d(v)$ be the length of a
shortest directed path from $v$ to $r$ in $H$, and consider the ranks in
increasing order of $d(v)$. For every rank $v\ne r$, choose an arbitrary
shortest path from $v$ to $r$ and keep only its first edge $(v,w)$, so
that $d(w)=d(v)-1$. In the resulting graph $R$, every rank other than $r$
has exactly one outgoing edge and $r$ has none, and $d$ strictly decreases
along every edge, so $R$ contains no cycle. Hence $R$ is an
in-arborescence spanning the ranks and rooted at $r$.

We now reduce along $R$: every rank $v$ sends to its parent the reduction
of its own contribution and of the payloads it receives from its children,
that is, the reduction over the ranks of its subtree in $R$. The subtrees
of the children of $r$ are disjoint and, together with $r$, cover all
ranks, so $r$ obtains the fully reduced shard. Each edge $(v,w)$ of $R$ is
carried by one of the original reduce flows from $v$ to $w$, which keeps
its physical route and its rate $b$; only its payload changes, to the
reduction over the subtree of $v$, which is again the reduction of a
subset of the ranks and therefore has size $s$. All other reduce flows
are dropped.

\emph{Broadcast phase.} Let $H'$ be the send graph of the flows that carry
the fully reduced shard, with a directed edge $(u,w)$ whenever at least one
such flow goes from $u$ to $w$. By Lemma~\ref{lem:single-root}, every rank
other than $r$ receives the fully reduced shard from a rank that already
holds it; following these deliveries backward, which terminates because
the dependency relation is acyclic, leads to $r$, so every rank is
reachable from $r$ in $H'$. Symmetrically to the reduce phase, let $d'(v)$
be the length of a shortest directed path from $r$ to $v$ in $H'$, and for
every rank $v\ne r$ keep only the last edge $(u,v)$ of an arbitrary
shortest path from $r$ to $v$, so that $d'(u)=d'(v)-1$. The result is an
out-arborescence $D$ spanning the ranks and rooted at $r$, along which $r$
forwards the fully reduced shard; each edge of $D$ is carried by one of
the original flows, with its route, rate and payload unchanged. All other
flows carrying the fully reduced shard are dropped.

\emph{Cost.} The new schedule for the shard uses a subset of its original
flows, each with its original route, the common rate $b$, and a payload of
size $s$. Hence the load on every directed edge is unchanged or smaller,
and, since all flows still have rate $b$, the shard still completes at time
$s/b$, as in the proof of Lemma~\ref{lem:equal-rates}. The flows of the
other shards are untouched, so the rest of the collective is unaffected.
\end{proof}

We refer to a tree that reduces along $T$ toward
its root and broadcasts along $T$ reversed as a \emph{simple} tree; this
is precisely the tree of Section~\ref{sec:model}, and by construction a
simple tree built on a directed phase $T$ loads directed edge $e$ with
$a_e(T)+a_{\bar e}(T)$ per unit rate, and loads $\bar e$ with the same
amount.

\begin{lemma}[Common reduce and broadcast tree]
\label{lem:same-tree}
Under scheduler-routed single-path routing, consider a feasible schedule in which each
shard $i\in\{1,\ldots,k\}$ is reduced along an in-tree $R_i$ to root $r_i$
and broadcast from $r_i$ along an out-tree $D_i$, at rate $\lambda_i$.
Then there is a feasible schedule consisting of $2k$ simple trees in which
every shard completes at the same time.
\end{lemma}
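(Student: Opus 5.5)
The plan is to split each shard $i$ into two half-shards, each carried by a simple tree at rate $\lambda_i/2$. One simple tree is built on the in-tree $R_i$: it reduces along $R_i$ and broadcasts along $R_i$ reversed. The other is built on the out-tree $D_i$ reversed: it reduces toward $r_i$ along the reversal of $D_i$ and broadcasts along $D_i$. Under single-path routing each logical edge $(u,w)$ has one chosen physical path. For the reversed logical edge $(w,u)$ we use the reverse of that path, so a directed phase loading $e$ reverses to a phase loading $\bar e$ by the same amount. Reversal is well defined because every physical link is full duplex, and since $C_e=C_{\bar e}$ the reversed route is legitimate. This gives $2k$ simple trees, and each shard's two halves together transport the whole shard.

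The timing check comes first. The original shard of size $s_i$ completes at $s_i/\lambda_i$, by Lemma~\ref{lem:equal-rates} with its phases pipelined. Each half has size $s_i/2$ and runs at rate $\lambda_i/2$ through a single simple tree, so it completes at $(s_i/2)/(\lambda_i/2)=s_i/\lambda_i$. Every shard therefore finishes at the same time as before.

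The main step is the load accounting. In the original schedule, shard $i$ loads directed edge $e$ by $\lambda_i\bigl(a_e(R_i)+a_e(D_i)\bigr)$. By the formula for simple trees stated just before the lemma, the simple tree on $R_i$ at rate $\lambda_i/2$ loads $e$ by $\tfrac{\lambda_i}{2}\bigl(a_e(R_i)+a_{\bar e}(R_i)\bigr)$. The simple tree on $D_i$ reversed loads $e$ by $\tfrac{\lambda_i}{2}\bigl(a_e(D_i)+a_{\bar e}(D_i)\bigr)$, using $a_e(\overline{D_i})=a_{\bar e}(D_i)$. Summing over shards, the new load on $e$ is
\[
\tfrac12\bigl(L_e+L_{\bar e}\bigr),
\]
where $L_e=\sum_i\lambda_i\bigl(a_e(R_i)+a_e(D_i)\bigr)$ is the original load on $e$. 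Feasibility of the original schedule gives $L_e\le C_e$ and $L_{\bar e}\le C_{\bar e}=C_e$. Hence the new load is at most $C_e$, and the new schedule is feasible.

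The obstacle I expect is justifying the reversed routing and the identity $a_e(\overline{T})=a_{\bar e}(T)$. This rests on the scheduler being free to route $(w,u)$ on the reverse of the path chosen for $(u,w)$, which is exactly where single-path scheduler routing is used. A secondary point is that an in-tree reversed is an out-tree with the same root, so each new object really is a simple tree rooted at $r_i$. I would state both facts explicitly before doing the averaging computation above.
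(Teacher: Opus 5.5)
Your proposal is correct and follows the paper's proof essentially step for step: the same split of each shard into halves carried by the simple trees on $R_i$ and on $D_i$ at rate $\lambda_i/2$, the same completion-time check $(s_i/2)/(\lambda_i/2)=s_i/\lambda_i$, and the same averaging bound $\tfrac12(L_e+L_{\bar e})\le\max\{L_e,L_{\bar e}\}\le C_e$. One small correction in attribution: the equality $C_e=C_{\bar e}$ is not what makes the reversed route legitimate (full-duplex links and single-path routing do that); it is used only in the final feasibility step $L_{\bar e}\le C_{\bar e}=C_e$.
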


\begin{proof}
For each $i$ let $T_i$ be the simple tree built on $R_i$ (reduce along
$R_i$, broadcast along $R_i$ reversed) and let $T_{k+i}$ be the simple
tree built on $D_i$ (reduce along $D_i$ reversed, broadcast along $D_i$),
both rooted at $r_i$. Assign each of $T_i$ and $T_{k+i}$ the rate
$\lambda_i/2$, and split shard $i$ into two halves, one carried by each.
Each half is fully reduced to $r_i$ and broadcast from $r_i$, so the new
schedule performs AllReduce. If shard $i$ has size $s_i$, each half has
size $s_i/2$ and rate $\lambda_i/2$, so it completes at time
$(s_i/2)/(\lambda_i/2)=s_i/\lambda_i$, exactly when shard $i$ completed
before.

It remains to show that no directed edge is overloaded. The proof of this relies on single-path routing, under which reversing a
routed logical edge places on $\bar e$ exactly the load the original placed
on $e$. Fix a physical
link with directed edges $e$ and $\bar e$, and let
\begin{align*}
L_e &= \sum_{i}\lambda_i\bigl[a_e(R_i)+a_e(D_i)\bigr],\\
L_{\bar e} &= \sum_{i}\lambda_i\bigl[a_{\bar e}(R_i)+a_{\bar e}(D_i)\bigr]
\end{align*}
be the loads of the original schedule on $e$ and $\bar e$. Feasibility of
the original schedule gives $L_e\le C_e$ and $L_{\bar e}\le C_{\bar e}=C_e$.
In the new schedule, $T_i$ loads $e$ with $(\lambda_i/2)[a_e(R_i)+a_{\bar
e}(R_i)]$ and $T_{k+i}$ loads $e$ with $(\lambda_i/2)[a_e(D_i)+a_{\bar
e}(D_i)]$, so the total load on $e$ is
\[
\sum_i \frac{\lambda_i}{2}\Bigl[a_e(R_i)+a_{\bar e}(R_i)+a_e(D_i)+a_{\bar e}(D_i)\Bigr]
=\frac{L_e+L_{\bar e}}{2}.
\]
By the symmetry of simple trees the load on $\bar e$ is the same
quantity. Hence the load on each direction of the edge becomes the average
of the two original loads, and
\[
\frac{L_e+L_{\bar e}}{2}\;\le\;\max\{L_e,L_{\bar e}\}\;\le\;C_e .
\]
Thus splitting every tree evenly across the two directions can only lower
the larger of the two directional loads, and the new schedule respects
every capacity.
\end{proof}

\begin{theorem}[Tree-based AllReduce is optimal; restatement of Theorem~\ref{thm:tree-optimal-body}]
\label{thm:tree-structure-optimal}
Under scheduler-routed single-path routing, on a full-duplex network in
which every physical link has equal capacity in both directions, there exists a bandwidth-optimal AllReduce schedule in
which each shard is reduced along a tree to a single root and broadcast
from that root along the same tree.
\end{theorem}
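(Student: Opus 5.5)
The plan is to start from an arbitrary bandwidth-optimal AllReduce schedule and normalize it using Lemmas~\ref{lem:single-root}, \ref{lem:phase-trees} and \ref{lem:same-tree} in that order. The invariant I would track is that no step increases any per-edge load or the completion time. The schedule stays feasible and remains optimal, so the final normalized schedule witnesses the theorem.

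The first two steps work shard by shard and need no new ideas.

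\textbf{Shard decomposition.} Since the data is finite and the schedule is described by finitely many flows, I would refine the data into shards. Each shard is a maximal piece that every flow treats uniformly, obtained as the common refinement of all payload boundaries. This makes the setting of Section~\ref{sec:tree-optimality} apply.

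\textbf{Single root.} Apply Lemma~\ref{lem:single-root}, which itself invokes Lemma~\ref{lem:equal-rates}, to every shard. Each shard then has exactly one root $r_i$ and a common rate. Loads do not increase and the completion time does not grow.

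\textbf{Phase trees.} Apply Lemma~\ref{lem:phase-trees} to each shard. Its reduce phase becomes an in-arborescence $R_i$ rooted at $r_i$, and its broadcast phase becomes an out-arborescence $D_i$ rooted at $r_i$, both at rate $\lambda_i$. The flows used are a subset of the original ones, so loads again do not increase.

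\textbf{Common tree.} The schedule now has exactly the form assumed in Lemma~\ref{lem:same-tree}. Invoking that lemma, which uses single-path routing and $C_e=C_{\bar e}$, produces $2k$ simple trees at rates $\lambda_i/2$. Each shard is split into two halves, and every half finishes when the original shard did. Each half is therefore a shard that is reduced along a single tree to a single root and broadcast from that root along the same tree (reversed), which is the claimed structure.

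The main obstacle is the bookkeeping that makes optimality transfer: the ``completion time'' in the lemmas is per shard, while optimality is a statement about the whole collective. I would close this gap as follows.
\begin{itemize}
\item The collective completion time is the maximum over shards.
\item Each lemma preserves every shard's completion time and only lowers loads.
\item Hence the final schedule's completion time is at most the original optimum, so it is itself optimal.
\end{itemize}
Two smaller points also need care. The steps interact across shards only through link capacity, and Lemmas~\ref{lem:single-root} and \ref{lem:phase-trees} only drop or re-payload flows without changing routes or rates, so applying them to one shard cannot break feasibility for another. Finally, I would note that the resulting schedule is a feasible point of the fractional packing program with objective equal to the optimal throughput, so $\mathrm{OPT}$ equals the true AllReduce optimum.
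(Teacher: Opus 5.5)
Your proposal is correct and follows the paper's proof: you start from an optimal schedule and apply Lemma~\ref{lem:single-root}, then Lemma~\ref{lem:phase-trees}, then Lemma~\ref{lem:same-tree}, carrying feasibility and per-shard completion time through each step; your shard refinement and max-over-shards bookkeeping only make explicit what the paper leaves implicit. One small correction: the last step does not ``only lower loads''---it replaces the load on each direction of a link by $(L_e+L_{\bar e})/2$, which can exceed $L_e$---so the invariant to carry through that step is feasibility, guaranteed by $C_e=C_{\bar e}$, not per-edge load monotonicity.
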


\begin{proof}
\label{proof:tree-structure}
Consider a bandwidth-optimal AllReduce schedule.
Applying Lemma~\ref{lem:single-root} to every shard yields a feasible
schedule with no larger completion time in which each shard has a single
root; by Lemma~\ref{lem:phase-trees}, we may further assume that each
shard is reduced along an in-tree to its root and broadcast from it along
an out-tree. Applying
Lemma~\ref{lem:same-tree} yields a feasible schedule of simple trees in
which every shard completes at the same time. This schedule is therefore
a bandwidth-optimal AllReduce schedule consisting entirely of simple
trees.
\end{proof}

\section{DualColl's Algorithm}
\label{sec:dualcoll-algorithm}

Section~\ref{sec:model} formulated bandwidth-bound
AllReduce scheduling as the fractional tree-packing problem in
Equations~\eqref{eq:fractional-packing-objective}--%
\eqref{eq:fractional-packing-nonnegative}. Although this formulation
contains only one capacity constraint per physical link, it has one
variable for every feasible collective tree and its physical-path
mapping. The number of such routed trees is exponential in the number
of participating ranks, which prevents their direct enumeration.

DualColl solves this problem using a
Garg--K\"onemann-style multiplicative-weights framework. Rather than
constructing the complete linear program, the algorithm maintains a
positive price for every physical edge and repeatedly selects a
collective tree having minimum cost under the current prices. The
selected tree is added to the fractional packing, after which the
prices of the physical edges it uses are increased.

The key problem-specific component of the framework is therefore a
minimum-cost collective-tree oracle. DualColl implements this oracle by
projecting the current physical-edge prices onto a logical graph over
the participating ranks and computing a minimum spanning tree in the
projected graph.

\subsection{Multiplicative-Weights Framework}
\label{sec:garg-konemann-framework}

DualColl maintains a positive price \(y_e\) for every physical edge and
a fractional bandwidth \(\lambda_T\) for every tree selected so far.
Let \(m=|E|\) be the number of physical edges. The algorithm begins by
setting
\begin{equation*}
\gamma = (1+\epsilon)\bigl((1+\epsilon)m\bigr)^{-1/\epsilon},
\qquad
y_e = \frac{\gamma}{C_e}\ \ \text{for every } e\in E,
\end{equation*}
and \(\lambda_T=0\) for all trees. The scale \(\gamma\) is the standard
Garg--K\"onemann initialization for the packing
LP~\cite{garg2007faster}; it depends only on the number of
constraints \(m=|E|\) and the accuracy \(\epsilon\), and is chosen so that
the final scaled packing satisfies the capacity constraints while
retaining a \((1-\epsilon)^2\) fraction of the optimum.

At each iteration, the algorithm performs the following steps.

\begin{enumerate}
    \item Invoke the minimum-cost collective-tree oracle under the
    current prices \(y\), obtaining the tree \(T^\star\) minimizing
    \(\sum_{e} A_{e,T}\,y_e\).

    \item Compute the bottleneck augmentation, the
    largest bandwidth \(T^\star\) may carry before one of its edges
    reaches unit relative load.
    \[\Delta = \min_{e:\,A_{e,T^\star}>0} C_e/A_{e,T^\star}\]

    \item Increase the bandwidth of \(T^\star\) by \(\Delta\).

    \item For every physical edge used by \(T^\star\), multiplicatively
    increase its price:
    \(y_e \gets y_e\bigl(1+\epsilon\, A_{e,T^\star}\Delta/C_e\bigr)\).

    \item Stop when the dual length
    \(D(y)=\sum_{e} C_e\,y_e\) reaches \(1\).
\end{enumerate}

The multiplicative updates discourage repeated use of highly loaded
physical edges. Initially, all physical-edge prices are small, and the
oracle selects a tree having low physical cost. As trees are added to
the packing, the prices of frequently used edges increase, so subsequent
oracle calls are encouraged to select trees that use less-congested
portions of the network.

Because each augmentation raises a tree's bandwidth without regard to the
edges already committed to earlier trees, the accumulated bandwidths may
exceed link capacities. At termination they are uniformly scaled down by
\(\log_{1+\epsilon}\bigl((1+\epsilon)/\gamma\bigr)\), yielding a feasible
packing
\(\hat\lambda_T=\lambda_T/\log_{1+\epsilon}\bigl((1+\epsilon)/\gamma\bigr)\)
that satisfies
Equation~\eqref{eq:fractional-packing-capacity}.

Algorithm~\ref{alg:dualcoll-framework} summarizes the framework.

\begin{algorithm}[h]
\caption{DualColl fractional tree-packing framework}
\label{alg:dualcoll-framework}
\begin{algorithmic}[1]
\Require Physical network \(G=(V\cup S,E)\), capacities
         \(\{C_e\}_{e\in E}\), participating ranks \(R\), and
         accuracy parameter \(0<\epsilon<1\)
\Ensure A feasible fractional tree packing \(\{\hat\lambda_T\}\)
\State \(\gamma \gets (1+\epsilon)\bigl((1+\epsilon)m\bigr)^{-1/\epsilon}\)
       \Comment{$m=|E|$}
\State \(y_e \gets \gamma / C_e\) for every \(e\in E\)
\State \(\lambda_T \gets 0\) for all trees
\While{\(D(y)=\sum_{e\in E} C_e\, y_e < 1\)}
    \State \(T^\star \gets \Call{MinCostCollectiveTree}{G,R,y}\)
    \State \(\Delta \gets \min_{e:\,A_{e,T^\star}>0}
            \dfrac{C_e}{A_{e,T^\star}}\)
           \Comment{bottleneck augmentation}
    \State \(\lambda_{T^\star} \gets \lambda_{T^\star}+\Delta\)
    \ForAll{\(e\in E\) with \(A_{e,T^\star}>0\)}
        \State \(y_e \gets y_e\Bigl(1+\epsilon\,
               \dfrac{A_{e,T^\star}\,\Delta}{C_e}\Bigr)\)
    \EndFor
\EndWhile
\State \(\hat\lambda_T \gets \lambda_T \big/
       \log_{1+\epsilon}\tfrac{1+\epsilon}{\gamma}\)
       for all \(T\)
\State \Return \(\{\hat\lambda_T\}\)
\end{algorithmic}
\end{algorithm}

The outer multiplicative-weights framework is generic to fractional
packing. The main network-specific challenge is implementing
Line~5 of Algorithm~\ref{alg:dualcoll-framework}, namely, finding a
minimum-cost collective tree efficiently.

\subsection{Minimum-Cost Collective-Tree Oracle}
\label{sec:collective-tree-oracle}

For a fixed price vector \(y\), the oracle must find a logical spanning
tree and its physical routing that minimize 
\[
\sum_{e \in E} y_e A_{e,T}.
\]
As in ForestColl, the scheduler controls the physical route of every
logical edge and may select any admissible path in the physical
network. Consequently, under the current dual prices, the minimum-cost
realization of a logical edge can be obtained through a shortest-path
computation using the dual prices as physical-edge weights.

For every ordered pair of participating ranks \(u,v\in R\), let
\[
d_y(u,v)
=
\min_{P\in\mathcal{P}(u,v)}
\sum_{e\in P}y_e,
\]
where \(\mathcal{P}(u,v)\) denotes the set of admissible directed
physical paths from \(u\) to \(v\).

An undirected logical tree edge \(\{u,v\}\) is used in one direction
during the reduction phase and in the reverse direction during the
broadcast phase. Its projected logical-edge weight is therefore
\begin{equation*}
    W_{\{u,v\}}
    =
    d_y(u,v)+d_y(v,u).
\end{equation*}

Thus, \(W_{\{u,v\}}\) is the total dual-weight cost of the logical tree
edge \(\{u,v\}\), summing the dual weights along the shortest paths used
in the two directions per unit of bandwidth assigned to that edge. Under
symmetric full-duplex links, the two directional costs may be equal, but
both directions remain represented.

DualColl constructs an undirected projected graph
\[
G_R=(R,E_R),
\]
whose vertices are the participating ranks. For every unordered pair
\(\{u,v\}\subseteq R\), the graph contains a logical edge with weight
\(W_{\{u,v\}}\). Physical switches do not appear as vertices of
\(G_R\); they are represented implicitly through the directed physical
paths used to compute the projected edge weights. Section~\ref{sec:proj}
describes how the projection weights are computed efficiently, both in
the scheduler-routed model considered here and in the packet-sprayed
model of Section~\ref{sec:packet-spraying-mmf}.

The oracle computes a minimum spanning tree of \(G_R\). For every
logical edge \(\{u,v\}\) selected by the minimum spanning tree, the
oracle records a minimum-cost physical path from \(u\) to \(v\) and a
minimum-cost physical path from \(v\) to \(u\). The resulting routed
logical tree is returned to the multiplicative-weights framework.

The correctness of this construction relies on the following property of
the routing model, which we state explicitly before using it.

\paragraph{Assumption (Route separability).}
Given a logical tree, the admissible physical route of each directed
logical edge may be selected independently of the routes assigned to all
other logical edges of the tree.

Route separability is what lets the minimum physical cost of a fixed
logical tree decompose into independent per-edge shortest-path
computations. It holds in the scheduler-routed model considered here,
where each logical edge is routed along a minimum-dual-cost physical path
chosen without reference to the other edges. Coupled path
constraints---for example, a requirement that the physical paths of
different logical edges be edge-disjoint---would violate separability and
require a different oracle, since the per-edge minimization would no
longer be independent.

The projection is valid because the physical dual cost of a routed
logical tree is additive over its logical edges.

Let \(T\) be a logical spanning tree. For every logical edge
\(\{u,v\}\in T\), let \(P_{u,v}\) denote the physical path used in one
direction and let \(P_{v,u}\) denote the physical path used in the
reverse direction. The dual cost of the routed tree is
\[
\operatorname{cost}_y(T)
=
\sum_{\{u,v\}\in T}
\left(
\sum_{e\in P_{u,v}}y_e
+
\sum_{e\in P_{v,u}}y_e
\right).
\]

Because the routes of different logical edges can be selected
independently (route separability, stated above), the minimum
physical cost of a fixed logical tree is
\begin{equation}
    \min_{\{P_{u,v},P_{v,u}\}}
    \operatorname{cost}_y(T)
    =
    \sum_{\{u,v\}\in T}W_{\{u,v\}}.
    \label{eq:tree-projected-cost}
\end{equation}

Minimizing the right-hand side of
Equation~\eqref{eq:tree-projected-cost} over all logical spanning trees
is exactly the minimum spanning tree problem in the projected graph
\(G_R\). Therefore, the projected MST and its associated directed
physical paths solve the oracle problem $\min_{T} \sum_{e \in E} y_e A_{e,T}$.

\begin{theorem}[Oracle correctness]
\label{thm:oracle-correctness}
Under the scheduler-routed model and the route-separability assumption above, dual-cost projection followed by a minimum spanning tree computation
returns a routed collective tree of minimum dual cost.
\end{theorem}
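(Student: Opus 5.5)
The plan is a two-level minimization argument: first over physical routings for a fixed logical tree, then over logical trees. Fix the prices $y\ge 0$. The oracle's feasible set $\mathcal{S}$ consists of pairs $(T,\pi)$, where $T$ is a logical spanning tree on $R$ and $\pi$ assigns to each logical edge $\{u,v\}\in T$ one directed admissible path $P_{u,v}\in\mathcal{P}(u,v)$ and one $P_{v,u}\in\mathcal{P}(v,u)$. I will first verify that, under the definition of $A_{e,T}$ in Section~\ref{sec:incidence} as the number of routed logical edges traversing $e$, we have
\[
\sum_e y_e A_{e,(T,\pi)}=\operatorname{cost}_y(T,\pi).
\]
This follows by exchanging the order of summation: counting $y_e$ once per routed path through $e$ is the same as summing $y_e$ along every path. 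If a path could revisit an edge, it would be counted with multiplicity in both expressions. Since $y>0$, shortest paths are simple anyway.

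Next I establish Equation~\eqref{eq:tree-projected-cost} for a fixed $T$. By route separability, the set of admissible $\pi$ for $T$ is the Cartesian product $\prod_{\{u,v\}\in T}\mathcal{P}(u,v)\times\mathcal{P}(v,u)$. The cost is a sum of terms, each depending on a single factor, so the minimum of the sum equals the sum of the factorwise minima. The minimum of $\sum_{e\in P}y_e$ over $P\in\mathcal{P}(u,v)$ is $d_y(u,v)$ by definition, and this minimum is attained because, with positive prices, only finitely many simple paths need be considered. Hence $\min_\pi\operatorname{cost}_y(T,\pi)=\sum_{\{u,v\}\in T}W_{\{u,v\}}$, attained by the recorded shortest paths. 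I expect this to be the only step needing care: it is exactly where separability is used, because without the product structure the per-edge minima need not be jointly achievable.

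Finally I minimize over $T$. We have $\min_{(T,\pi)}=\min_T\min_\pi=\min_T\sum_{\{u,v\}\in T}W_{\{u,v\}}$, and the right-hand side is precisely the minimum spanning tree problem on the complete weighted graph $G_R$. Let $T_{\mathrm{MST}}$ be the tree returned by the MST algorithm, routed with the recorded shortest paths $\pi^\star$. Then
\[
\operatorname{cost}_y(T_{\mathrm{MST}},\pi^\star)=W(T_{\mathrm{MST}})\le W(T)\le\operatorname{cost}_y(T,\pi)
\]
for every feasible $(T,\pi)$, so the returned routed tree attains the minimum dual cost. I will also note that the path $P_{u,v}$ must exist for each pair, i.e.\ $G$ is strongly connected on $R$, which is implicit in the model; otherwise $W$ would be infinite and no spanning tree would exist.
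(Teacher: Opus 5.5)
Your proposal is correct and follows the same route as the paper: route separability turns the minimum over routings of a fixed logical tree into the sum of the projected weights $W_{\{u,v\}}$, as in Equation~\eqref{eq:tree-projected-cost}, and the minimum spanning tree of $G_R$ then minimizes that sum over all logical spanning trees. The steps you add make precise what the paper treats as immediate: the exchange-of-summation check that $\sum_e y_e A_{e,T}$ equals $\operatorname{cost}_y(T)$, the attainment of the shortest paths, and the connectivity requirement.
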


\begin{proof}[Proof of Theorem~\ref{thm:oracle-correctness}]
\label{proof:oracle-correctness}
For every logical edge \(\{u,v\}\), the projected weight
\(W_{\{u,v\}}\) is the minimum combined dual cost of realizing that
edge in both physical directions. By
Equation~\eqref{eq:tree-projected-cost}, the minimum routed cost of any
fixed logical tree equals the sum of its projected edge weights. A
minimum spanning tree minimizes this sum over all logical spanning
trees. Hence, the logical tree and physical paths returned by the
oracle minimize
\[
\sum_{e\in E}A_{e,T}y_e
\]
over all feasible routed collective trees.
\end{proof}

\subsection{Approximation and Time Complexity}
\label{sec:approximation-complexity}
\label{sec:oracle-complexity}

We first bound the cost of a single oracle invocation. Let
\(n=|R|\) be the number of participating ranks and let
\(N=|V\cup S|\) be the total number of physical-network vertices. The
oracle does two things: it computes the projection weights
\(W_{\{u,v\}}\), and it runs a minimum spanning tree on the projected
graph. Under the scheduler-routed model the projection weights are
shortest-path distances under the dual prices, so, using Dijkstra's
algorithm with a binary heap, one single-source computation requires
\(\ComplexityDijkstraSingle\) time and computing the distances from
every participating rank requires \(\ComplexityDijkstraAllRanks\).
Section~\ref{sec:proj} treats the computation of these weights in
detail, including the packet-sprayed case, where the weights are no
longer plain shortest-path distances. The projected graph contains \(n\)
vertices and \(O(n^2)\) edges, so, being dense, its minimum spanning tree
can be computed by Prim's algorithm in \(O(n^2)\) time. Hence one oracle
invocation requires
\begin{equation*}
T_{\mathrm{oracle}}
=
\ComplexityOracle .
\end{equation*}

DualColl is an instance of the Garg--K\"onemann fractional-packing scheme
applied to Equations~\eqref{eq:fractional-packing-objective}--%
\eqref{eq:fractional-packing-nonnegative}.
Theorem~\ref{thm:oracle-correctness} establishes that each iteration
selects an exact minimum-cost column, so DualColl inherits that scheme's
approximation guarantee and iteration bound directly.

\begin{theorem}[DualColl guarantee]
\label{thm:dualcoll-approximation}
For any \(0<\epsilon<1\), Algorithm~\ref{alg:dualcoll-framework} returns a
feasible fractional tree packing \(\widehat{\lambda}\) satisfying
\[
\sum_{T\in\mathcal{S}}\widehat{\lambda}_T
\;\geq\;
(1-\epsilon)^2\,\mathrm{OPT}.
\]
\end{theorem}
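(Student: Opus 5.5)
The plan is to reproduce the standard Garg--K\"onemann analysis for packing LPs, using Theorem~\ref{thm:oracle-correctness} to supply an exact minimum-cost column in every iteration. Let $y^{(i)}$ denote the prices after iteration $i$, $D(i)=\sum_e C_e y^{(i)}_e$ the dual length, and $\alpha(i)=\min_T \sum_e A_{e,T}y^{(i)}_e$ the oracle value. By LP duality for Equations~\eqref{eq:fractional-packing-objective}--\eqref{eq:fractional-packing-nonnegative}, every positive $y$ satisfies $\mathrm{OPT}\le D(y)/\alpha(y)$. The dual constraint is $\sum_e A_{e,T}y_e\ge 1$ for all $T$, and scaling $y$ by $1/\alpha(y)$ makes it dual feasible. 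Let $\beta=\min_y D(y)/\alpha(y)$; then $\beta=\mathrm{OPT}$.

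\textbf{Growth of the dual length.} In iteration $i$ the oracle returns $T^\star$ with $\sum_e A_{e,T^\star}y^{(i-1)}_e=\alpha(i-1)$. The update then gives
\[
D(i)=D(i-1)+\epsilon\Delta_i\,\alpha(i-1)\le D(i-1)\bigl(1+\epsilon\Delta_i/\beta\bigr),
\]
where $\Delta_i$ is the bandwidth added to $T^\star$ in that iteration. Write $f_i=\sum_{j\le i}\Delta_j$ for the total unscaled bandwidth after $i$ iterations. Since $D(0)=m\gamma$, induction gives $D(i)\le m\gamma\,e^{\epsilon f_i/\beta}$. At termination (iteration $t$) we have $D(t)\ge 1$, so
\[
\frac{\beta}{f_t}\le \frac{\epsilon}{\ln\bigl(1/(m\gamma)\bigr)}.
\]
A refinement of the same bound, using $(1+x)\le e^x$ more carefully as in Garg--K\"onemann, avoids losing factors at the first iteration; I would follow their inequality $D(i)\le \frac{m\gamma}{1-\epsilon}\cdots$ if the crude form is insufficient.

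\textbf{Feasibility after scaling.} Fix an edge $e$. Each time its load increases by $A_{e,T^\star}\Delta\le C_e$ (the bottleneck choice of $\Delta$), the price $y_e$ is multiplied by $1+\epsilon A_{e,T^\star}\Delta/C_e$. This factor is at least $(1+\epsilon)^{A_{e,T^\star}\Delta/C_e}$, because the increment ratio lies in $[0,1]$. The price starts at $\gamma/C_e$ and, just before the last iteration, $C_e y_e<1$; the last multiplication is by at most $1+\epsilon$. Hence $y_e<(1+\epsilon)/C_e$ at termination, and the total load on $e$ is at most $C_e\log_{1+\epsilon}\bigl((1+\epsilon)/\gamma\bigr)$. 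Dividing by this factor, as Algorithm~\ref{alg:dualcoll-framework} does, yields a packing satisfying Equation~\eqref{eq:fractional-packing-capacity}.

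\textbf{Ratio calculation.} This step is the main obstacle, because it is where the specific choice of $\gamma$ matters. The objective of $\widehat\lambda$ is $f_t/\log_{1+\epsilon}\bigl((1+\epsilon)/\gamma\bigr)$, so
\[
\frac{\mathrm{OPT}}{\sum_T\widehat\lambda_T}\le \frac{\epsilon\,\log_{1+\epsilon}\bigl((1+\epsilon)/\gamma\bigr)}{\ln\bigl(1/(m\gamma)\bigr)}.
\]
Substituting $\gamma=(1+\epsilon)\bigl((1+\epsilon)m\bigr)^{-1/\epsilon}$ gives $\log_{1+\epsilon}((1+\epsilon)/\gamma)=\frac{1}{\epsilon}\log_{1+\epsilon}((1+\epsilon)m)$. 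It also gives $\ln(1/(m\gamma))=\frac{1-\epsilon}{\epsilon}\ln((1+\epsilon)m)$. The ratio therefore becomes
\[
\frac{\epsilon}{(1-\epsilon)\ln(1+\epsilon)}.
\]
Using $\ln(1+\epsilon)\ge \epsilon-\epsilon^2/2$, this is at most $(1-\epsilon)^{-2}$, which proves the claimed bound $\sum_T\widehat\lambda_T\ge(1-\epsilon)^2\mathrm{OPT}$. The delicate points are the exact growth inequality and checking that the constants line up with this $\gamma$. If they do not, I would fall back on citing Garg--K\"onemann's packing theorem directly, with the oracle supplied by Theorem~\ref{thm:oracle-correctness}.
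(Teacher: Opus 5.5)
Your proof is correct and takes the same route as the paper: the exact oracle of Theorem~\ref{thm:oracle-correctness} is used inside the Garg--K\"onemann packing scheme. The difference is that the paper simply cites Garg--K\"onemann for the $(1-\epsilon)^2$ factor, whereas you re-derive their packing-LP analysis.

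That derivation checks explicitly that the specific constants of Algorithm~\ref{alg:dualcoll-framework} are the ones for which the bound holds: the initialization $\gamma$, the stopping rule $D(y)\ge 1$, and the scaling by $\log_{1+\epsilon}((1+\epsilon)/\gamma)$. It also shows a little slack, since $\epsilon/((1-\epsilon)\ln(1+\epsilon))\le 1/((1-\epsilon)(1-\epsilon/2))$.

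Your hedges are unnecessary. The algorithm stops on $D(y)\ge1$, so there is no initial-length offset to correct. The crude estimate $D(i)\le m\gamma\,e^{\epsilon f_i/\beta}$ therefore already closes the argument; this is the form Garg--K\"onemann use for packing LPs. The refinement you recall is needed only for variants that stop on $\alpha\ge 1$. You also need only weak duality, $\mathrm{OPT}\le D(y)/\alpha(y)$, not the equality $\beta=\mathrm{OPT}$.

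The one point you leave implicit is termination, and it follows from your own price argument. In each iteration the bottleneck edge has $A_{e,T^\star}\Delta=C_e$, so its price is multiplied by exactly $1+\epsilon$. Every price stays below $1/C_e$ until the final iteration. Hence there are at most $m\log_{1+\epsilon}(1/\gamma)+1$ iterations, and at least one, since $m\gamma=((1+\epsilon)m)^{1-1/\epsilon}<1$.
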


\begin{proof}[Proof of Theorem~\ref{thm:dualcoll-approximation}]
\label{proof:dualcoll-approximation}
Each iteration invokes the minimum-cost collective-tree oracle, which by
Theorem~\ref{thm:oracle-correctness} returns a tree minimizing
\(\sum_e A_{e,T}y_e\) exactly, i.e., an exact minimum-cost column of the
packing LP. Algorithm~\ref{alg:dualcoll-framework} uses this oracle within
the Garg--K\"onemann packing scheme with the standard
initialization \(y_e=\gamma/C_e\), bottleneck augmentation, multiplicative
update, dual-length stopping rule \(D(y)\ge1\), and final scaling by
\(\log_{1+\epsilon}\bigl((1+\epsilon)/\gamma\bigr)\). Under an exact
minimum-cost oracle, that scheme produces a feasible packing within a
\((1-\epsilon)^2\) factor of the optimum~\cite{garg2007faster}, which yields
the stated bound.
\end{proof}

Since every iteration performs one oracle call, and the Garg--K\"onemann
packing scheme terminates in
\[N_{\mathrm{iter}} = |E|\lceil (1/\epsilon)\log_{1+\epsilon}|E|\rceil
= \ComplexityNiter\] iterations~\cite{garg2007faster}, the
total running time is
\begin{equation*}
T_{\mathrm{DualColl}}
=
\ComplexityTotal .
\end{equation*}

DualColl therefore computes a near-optimal solution to the ForestColl
fractional tree-packing objective without constructing the
exponentially large set of candidate routed trees.

\paragraph{Generalizing to other collective operations.}
The same machinery solves the other collective operations with only a
change of oracle. For AllGather the objective is a max-min one,
maximizing the slowest node's aggregate broadcast bandwidth, and a packing
variable is a set of $n$ trees, one rooted at each compute node. The
oracle correspondingly returns, instead of a single minimum spanning tree,
a minimum spanning arborescence rooted at each node, computed on the same
projected graph from the same dual prices; everything outside the
oracle---the price updates, the augmentation rule, the stopping condition,
and the scaling---is unchanged. The resulting schedule is computed in time
$\ComplexityAllgather$, which improves on the bound reported for
ForestColl, as shown in Appendix~\ref{app:comparison}. ReduceScatter is obtained by the same change of
oracle, replacing each out-arborescence by an in-arborescence so that the
data flows toward the root rather than away from it. Multiple concurrent AllReduce sets are handled
by letting a packing variable be a $k$-tuple of trees, one spanning each
set, with the oracle computing the $k$ minimum-cost trees independently
and returning their union as a single column; the shared capacity
constraints then couple the sets automatically. We give the details in
Section~\ref{sec:other-collectives}.

The extensions to the routing and bandwidth-allocation models considered
in practical deployments are presented in
Section~\ref{sec:packet-spraying-mmf}.

\section{Packet Spraying and Max-Min Fairness}
\label{sec:packet-spraying-mmf}

The fractional packing algorithm of Section~\ref{sec:dualcoll-algorithm} computes an explicit bandwidth allocation under the scheduler-routed model. Namely, it returns a set of logical trees $T_1,\ldots,T_k$ and assigns each tree a continuous bandwidth $a_i$, with each logical edge mapped to a single chosen physical path. In that model, the scheduler directly controls both the route of every flow and the bandwidth of every tree, subject to the physical capacity constraints.

Real deployments differ from this idealization in two respects. First, when packet spraying is used, the fabric spreads a logical flow across several physical paths rather than allowing the scheduler to choose a path. Second, systems usually do not provide explicit per-tree bandwidth control; instead, bandwidth is allocated dynamically by the transport protocol and by the network, often according to a max-min fairness principle. This section addresses both. We first describe the routing mechanics of the packet-spraying model, which are independent of how bandwidth is subsequently allocated. We then define the tree-coupled version of max-min fairness appropriate for collective communication, generalize it to packet-sprayed flows, and finally show that by duplicating trees, the protocol model can approximate any explicit fractional allocation arbitrarily well.

Packet spraying also forces a change in the formulation itself, which we
flag at the outset. Under the scheduler-routed model we were entitled to
assume that a tree reduces and broadcasts along the \emph{same} logical
tree: Lemma~\ref{lem:same-tree} shows that any schedule with distinct
reduce and broadcast trees can be converted, at no loss of throughput,
into one built from such \emph{simple} trees. That argument reverses a
tree's routing and charges the reversed phase the mirrored load on the
opposite directed edges, which is valid precisely because single-path
routing sends a reversed logical edge back along the same physical path, flipping the induced loads $p_e(F)$ on every physical link.
Local equal next-hop spraying does not have this property. What survives
is the weaker structure of Lemmas~\ref{lem:single-root}
and~\ref{lem:phase-trees}, which make no reference to how a logical edge
is routed---they use only the dependency structure among flows and the
fact that the schedule does not split a shard---and therefore hold
verbatim under packet spraying: each shard still has a single root, with a reduce in-tree and a broadcast out-tree
rooted at it, but the two trees need not coincide. Accordingly, in the
packet-sprayed setting a packing variable is a \emph{pair} consisting of
an in-tree and an out-tree sharing a root, and the oracle must select the
best such pair rather than a single tree. Section~\ref{sec:ps-pairs}
develops this formulation and its oracle.

\subsection{Packet-Spraying Routing}
\label{sec:ps}

We first describe packet spraying purely as a routing mechanism, independent of how bandwidth is later allocated. Modern datacenter networks often contain several shortest paths between a pair of endpoints. In packet spraying, traffic between a pair of endpoints is split across the shortest paths. Since collective communication transfers large data chunks, we represent the long-run traffic distribution induced by packet spraying as fractional physical-link loads.

We consider \emph{local equal next-hop} packet spraying. At each switch, a
packet is forwarded uniformly at random among all outgoing next hops that
continue along a shortest path to its destination. The eligible next-hop
set is determined by the routing topology and does not change in response
to instantaneous congestion or Layer-2 PAUSE state. This forwarding rule
induces fixed long-run physical-link fractions for each logical flow,
which can therefore be incorporated directly into the fractional
tree-packing model. We do not consider adaptive packet spraying in which
congestion or flow-control state dynamically changes next-hop
eligibility.

For a directed logical flow $F=(u,v)$, let $p_e(F)\in[0,1]$ denote the
long-run fraction of the packets of $F$ that traverse physical directed
edge $e$. If $F$ does not use $e$, then $p_e(F)=0$; if all of $F$ uses
$e$, then $p_e(F)=1$; under packet spraying, intermediate values are
possible.

To state the forwarding rule precisely, fix a destination $v$ and, for a
physical node $c$, let
\begin{equation*}
\Gamma_v(c)
\end{equation*}
denote the set of outgoing neighbors of $c$ that continue along a shortest
path from $c$ to $v$. If a fraction $\phi(c)$ of the flow $F=(u,v)$ reaches
node $c$, then every eligible outgoing edge $(c,z)$ with $z\in\Gamma_v(c)$
carries the fraction
\begin{equation}
p_{(c,z)}(F) = \frac{\phi(c)}{|\Gamma_v(c)|}.
\label{eq:ps_split}
\end{equation}
The splitting is thus uniform \emph{locally} at each switch. This does not
imply that all complete shortest paths from $u$ to $v$ are selected with
equal probability, nor that all physical links lying on shortest paths
carry equal fractions; both are, in general, consequences of the local
rule rather than assumptions.

\paragraph{Computing the induced fractions.}
For each directed logical flow $F=(u,v)$, the fractions $p_e(F)$ are
obtained by propagating flow mass through the shortest-path DAG toward
$v$. Initialize $\phi(u)=1$ and $\phi(c)=0$ for every other node $c$, and
process the nodes in a topological order of the shortest-path DAG from $u$
toward $v$. For every processed node $c$ and every eligible outgoing edge
$e=(c,z)$ with $z\in\Gamma_v(c)$, set
\[
p_e(F) = \frac{\phi(c)}{|\Gamma_v(c)|},
\qquad
\phi(z) \mathrel{+}= p_e(F).
\]
When several upstream branches reach the same node, their contributions
are accumulated through $\phi(z)$. This propagation exactly captures the
long-run traffic fractions induced by independent uniform next-hop choices
at the switches.

It is worth distinguishing the two roles that shortest paths play in the
two models. In the scheduler-routed model of
Section~\ref{sec:dualcoll-algorithm}, the scheduler itself selects a
minimum-dual-cost physical path for each logical edge, using the current
prices $y_e$; the prices determine the route. In the packet-spraying
model, by contrast, the physical routing protocol fixes the shortest-path
DAG according to its own routing metric, independently of the dual
prices. The prices $y_e$ do not determine which links are eligible to
carry a flow; they are used only to evaluate the cost of the fixed
packet-spraying fractions $p_e(F)$ induced by that routing structure.

\subsection{The Packet-Sprayed Formulation}
\label{sec:ps-pairs}

By Lemmas~\ref{lem:single-root} and~\ref{lem:phase-trees}, a bandwidth-optimal AllReduce
schedule under packet spraying decomposes into shards, each with a single
root $r$, a reduce in-tree $R$ directed toward $r$, and a broadcast
out-tree $D$ directed away from $r$, with $R$ and $D$ not necessarily
equal. They cannot be assumed equal because reversing a logical edge no
longer mirrors its load onto the opposite directed edges: the fractions
$p_e(F)$ of a flow $F=(u,v)$ are determined by the shortest-path DAG
toward $v$ and the branching degrees along it, while the DAG toward $u$
may branch differently, so in general
$p_{\bar e}\bigl((v,u)\bigr)\neq p_e\bigl((u,v)\bigr)$ and the
reversal step in the proof of Lemma~\ref{lem:same-tree} does not apply.
We therefore take a packing variable to be such a \emph{pair}.

Let $\mathcal{P}$ be the set of all pairs $P=(R,D)$ in which $R$ is a
spanning in-arborescence and $D$ a spanning out-arborescence over the
participating ranks, both rooted at a common rank, together with the
packet-sprayed routing of their logical edges. A pair running at rate
$\lambda_P$ loads directed physical edge $e$ by
\begin{equation*}
A_{e,P} \;=\; \sum_{F\in R} p_e(F) \;+\; \sum_{F\in D} p_e(F),
\end{equation*}
the reduce and broadcast phases contributing separately because they are
routed independently. The fractional packing problem is then
\begin{equation}
\max_{\lambda\ge 0}\ \sum_{P\in\mathcal{P}} \lambda_P
\quad\text{subject to}\quad
\sum_{P\in\mathcal{P}} A_{e,P}\lambda_P \le C_e \ \ \forall e\in E,
\label{eq:ps-pair-packing}
\end{equation}
which is a fractional packing LP of the same form as the tree LP of
Section~\ref{sec:model}, with tree pairs instead of trees as its columns.

\paragraph{The minimum-cost pair oracle.}
Given dual prices $y$, the oracle must return
$P^\star\in\arg\min_{P\in\mathcal{P}}\sum_e A_{e,P}y_e$. Because the two
phases of a pair are routed independently, the cost separates:
\begin{equation*}
\mathrm{cost}_y(R,D) \;=\; \sum_{(u,v)\in R} W_{u,v}
\;+\; \sum_{(u,v)\in D} W_{u,v},
\end{equation*}
where $W_{u,v}=\sum_e p_e(u,v)y_e$ is the \emph{directed} projected weight
of the logical flow $(u,v)$, computed as in Section~\ref{sec:proj}. Note
that the projected graph is now directed: unlike the scheduler-routed
case, where the two directions of a logical tree edge were summed into a
single undirected weight $W_{\{u,v\}}$, here the reduce and broadcast
trees use different logical edges and each direction must be priced
separately.

The separation makes the oracle easy. For a \emph{fixed} root $r$, the
cheapest reduce tree and the cheapest broadcast tree may be chosen
independently, so the minimum-cost pair rooted at $r$ is obtained by
computing a minimum-weight spanning in-arborescence rooted at $r$ and a
minimum-weight spanning out-arborescence rooted at $r$ on the directed
projected graph. Minimizing over the $n$ choices of root gives $P^\star$.

\begin{theorem}[Pair-oracle correctness]
\label{thm:pair-oracle}
Under local equal next-hop packet spraying and route separability, the
procedure above returns a minimum-cost pair.
\end{theorem}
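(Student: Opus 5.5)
The proof decomposes the oracle's objective and minimizes it in stages: over routings, then over the two arborescences for a fixed root, then over roots. Throughout, the quantity to be minimized over $\mathcal{P}$ is $\sum_e A_{e,P}y_e$.

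\emph{Step 1: the cost is a sum of directed projected weights.} I will show that for every pair $P=(R,D)\in\mathcal{P}$,
\[
\sum_e A_{e,P}y_e=\sum_{(u,v)\in R}W_{u,v}+\sum_{(u,v)\in D}W_{u,v}.
\]
Substituting the definition of $A_{e,P}$ and exchanging the finite sums over $e$ and over logical flows gives $\sum_{F\in R\cup D}\sum_e p_e(F)y_e$. Under local equal next-hop spraying, the fractions $p_e(F)$ of a flow $F=(u,v)$ are fixed by the shortest-path DAG toward $v$. They do not depend on the prices or on the other flows of the pair; this is exactly the content of route separability in this model. Hence the inner sum is $W_{u,v}$.

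There is one subtlety. If $R$ and $D$ contain the same directed logical edge, it is counted twice. That is correct, since $A_{e,P}$ also counts both phases separately.

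\emph{Step 2: fix a root.} Let $\mathcal{P}_r\subseteq\mathcal{P}$ be the pairs with common root $r$. A pair in $\mathcal{P}_r$ is an arbitrary spanning in-arborescence rooted at $r$ together with an arbitrary spanning out-arborescence rooted at $r$, with no constraint linking the two. So $\mathcal{P}_r$ is a Cartesian product.

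By Step 1 the cost is a sum of a term depending only on $R$ and a term depending only on $D$. Therefore
\[
\min_{P\in\mathcal{P}_r}\mathrm{cost}_y(P)=\min_R\sum_{(u,v)\in R}W_{u,v}+\min_D\sum_{(u,v)\in D}W_{u,v}.
\]
These two minima are exactly the minimum-weight in- and out-arborescences rooted at $r$ on the complete directed projected graph. Both exist because that graph is complete, so spanning arborescences exist for every root.

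\emph{Step 3: minimize over roots.} Since $\mathcal{P}=\bigcup_{r}\mathcal{P}_r$, we have $\min_{\mathcal{P}}=\min_r\min_{\mathcal{P}_r}$. Taking the best of the $n$ per-root pairs therefore yields a global minimizer.

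The returned pair must also be a legitimate element of $\mathcal{P}$. Its routing is the fixed spraying routing, so it lies in $\mathcal{P}$ as long as the arborescence subroutine (for example Chu--Liu/Edmonds) is exact; I will assume this.

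\emph{Main obstacle.} The only nontrivial point is Step 1's claim that each logical flow's physical load is independent of the rest of the pair and of $y$. In the scheduler-routed case this was a choice made by the scheduler; here it must come from the model. The flow-propagation definition of $p_e(F)$ depends only on the endpoints and the routing DAG, and I will state this explicitly so that the decomposition, and hence the separation into independent arborescence problems, is justified.
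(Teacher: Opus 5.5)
Your proposal is correct and follows essentially the same argument as the paper. Both write the pair cost as a sum of directed projected weights, minimize the in-arborescence and out-arborescence independently for each fixed root, and take the best over the $n$ roots; your Step~1 is slightly more precise than the paper's wording, since under local equal next-hop spraying the fractions $p_e(F)$ are fixed by the shortest-path DAG and the decomposition is therefore an exact identity rather than a minimization over independently chosen routes.
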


\begin{proof}[Proof of Theorem~\ref{thm:pair-oracle}]
\label{proof:pair-oracle}

Fix a root $r$. By route separability the physical routing of each logical
edge may be chosen independently of the others, so the minimum dual cost
of a pair rooted at $r$ is the sum of its logical edges' projected
weights. The costs of the reduce in-tree and the broadcast out-tree add,
so the pair cost is minimized by minimizing
each separately; a minimum-weight spanning in-arborescence and a
minimum-weight spanning out-arborescence rooted at $r$ achieve these two
minima. Hence the procedure returns the cheapest pair rooted at $r$, and
taking the best over all $n$ roots returns the cheapest pair overall.

\end{proof}

Here Algorithm~\ref{alg:dualcoll-framework} is applied to the new packing
LP over tree pairs, Equation~\eqref{eq:ps-pair-packing}, rather than to the
tree LP of Section~\ref{sec:model}; only its columns and its oracle change.
Since the oracle is exact, the algorithm retains its guarantee: the packing
it returns is within $(1-\epsilon)^2$ of the
optimum of Equation~\eqref{eq:ps-pair-packing}, which by
Lemmas~\ref{lem:single-root} and~\ref{lem:phase-trees} is the bandwidth-optimal AllReduce
throughput under packet spraying.

\subsection{Efficient Computation of Projection Weights}
\label{sec:proj}

In the scheduler-routed model, the projection weight
$W_{\{u,v\}}=d_y(u,v)+d_y(v,u)$ of Section~\ref{sec:collective-tree-oracle}
is built from shortest-path distances under the dual weights $y_e$,
computed by running Dijkstra from every rank in
$\ComplexityDijkstraAllRanks$ total. This subsection treats the
packet-spraying (PS) model of Section~\ref{sec:ps}, where a logical flow
is instead spread fractionally across the shortest paths. For a logical
flow $F=(u,v)$, the projection weight is the split-weighted sum of dual
weights,
\begin{equation}
W_{u,v} = \sum_{e \in E} p_e(F)\, y_e ,
\label{eq:proj_weight_ps}
\end{equation}
where $p_e(F)$ is the long-run fraction of $F$ routed over link $e$.

Equation~\eqref{eq:proj_weight_ps} is the expected dual cost paid by a
packet of $F$, the expectation being over the independent uniform
next-hop choices made at the switches. This observation yields a direct
recurrence that avoids materializing the fractions $p_e(F)$ at all.

\paragraph{A destination-rooted dynamic program.}
Fix a destination $v$ and recall from Section~\ref{sec:ps} that
$\Gamma_v(c)$ denotes the set of outgoing neighbors of a physical node
$c$ that continue along a shortest path from $c$ to $v$. Define
$D_v(c)$ to be the expected dual cost incurred by a packet that is
currently at $c$ and is forwarded under local equal next-hop spraying
until it reaches $v$. Then $D_v(v)=0$ and, for every other node $c$,
\begin{equation}
D_v(c)
=
\frac{1}{|\Gamma_v(c)|}
\sum_{z\in\Gamma_v(c)}
\bigl( y_{(c,z)} + D_v(z) \bigr) .
\label{eq:ps-dp}
\end{equation}
Each eligible next hop is taken with probability $1/|\Gamma_v(c)|$, and
taking it costs $y_{(c,z)}$ immediately plus the expected cost $D_v(z)$
from $z$ onward, so Equation~\eqref{eq:ps-dp} is exactly the expected-cost
recurrence of the forwarding rule. Consequently, for every source $u$,
\begin{equation*}
W_{u,v} = D_v(u) .
\end{equation*}
The recurrence is well founded: every $z\in\Gamma_v(c)$ is strictly closer
to $v$ than $c$ is, so evaluating the nodes in increasing distance to $v$---equivalently, in reverse topological order of the shortest-path DAG
into $v$---computes every $D_v(c)$ from values already known. When
shortest paths are defined by hop count, that ordering is produced by a
single backward breadth-first search from $v$.

\paragraph{Cost.}
For a fixed destination $v$, one traversal builds the shortest-path DAG
into $v$ and evaluates Equation~\eqref{eq:ps-dp} at every node, touching
each physical edge a constant number of times, so it costs
$\ComplexityProjectionDPSingle$. Running it once per destination rank
yields \emph{all} $n^2$ directed projection weights $W_{u,v}$, since the
sweep for $v$ produces $D_v(u)$ for every $u$ simultaneously. The total
projection cost per oracle iteration is therefore
\begin{equation}
T_{\mathrm{proj}}^{\mathrm{PS}} = \ComplexityProjectionDPAll .
\label{eq:ps-projection-cost}
\end{equation}
Unlike a precomputation of the fractions $p_e(F)$, this must be repeated
each iteration, because the dual weights $y_e$ change between iterations;
the shortest-path DAGs themselves, however, depend only on the topology
and may be built once and reused.

\subsection{Max-Min Fair Realization}

We now turn to how bandwidth is allocated once trees are instantiated. In the packet-sprayed formulation, we treat each reduce/broadcast tree pair as one collective object whose constituent flows share a common rate; all definitions below apply to such pairs by summing the loads of both phases. We first define tree-coupled max-min fairness for single-path routing, and then generalize it to packet-sprayed flows.

In ordinary flow-level max-min fairness, each flow is treated independently. A collective tree is different: it is not a single physical flow, but a collection of physical flows corresponding to the logical edges of the tree. Moreover, the tree can only make progress at the rate of its slowest flow. If one edge of the tree is bottlenecked, the whole tree pipeline slows down to that rate.

We therefore use a \emph{tree-coupled max-min fairness} model. Each tree $T_i$ consists of a set of physical flows induced by routing its logical edges over the physical topology. The bandwidth $B(T_i)$ of the tree is the common rate at which all of its flows operate. If any flow of $T_i$ becomes bottlenecked, then all other flows of $T_i$ are clamped to the same rate. This captures the synchronous nature of a tree-based collective: a tree cannot complete its assigned data chunk faster than its slowest required communication path.

Assume first that every logical edge is routed over a unique physical path. The tree-coupled max-min allocation can be described by the following iterative procedure.

It is important to note that this procedure describes \emph{how the resulting allocation is calculated}, not a computation that the network carries out. The bandwidth given to each tree is determined by the network itself, dynamically, according to its transport and congestion-control protocol. The network does not run this algorithm; rather, the algorithm is our model for predicting the rates that the network's max-min fair sharing will produce, so that the scheduler can reason about them when choosing which trees to instantiate. Tree-coupled max-min fairness is thus an idealized analytical model of protocol-determined bandwidth allocation. We do not claim that any particular congestion-control protocol realizes exact max-min fairness; rather, max-min fairness is a standard and analytically tractable approximation of the rates that such protocols tend to produce, and we use it to reason about the effect of tree duplication.

\paragraph{Bottleneck selection.}
For every physical link $e$, compute the ratio between its remaining capacity and the number of active tree flows that traverse it. Let
\begin{equation}
B^* =
\min_{e \in E}
\frac{C_e^{\mathrm{rem}}}{n_e},
\label{eq:tree_coupled_bottleneck}
\end{equation}
where $C_e^{\mathrm{rem}}$ is the remaining capacity on link $e$, initially equal to $C_e$, and $n_e$ is the number of active tree flows using $e$.

\paragraph{Tree clamping.}
Let $e^*$ be a link attaining the minimum in Equation~\eqref{eq:tree_coupled_bottleneck}. Every active tree that uses $e^*$ is assigned bandwidth $B^*$. Once a tree is assigned this bandwidth, all of its flows, on all physical links, are clamped to rate $B^*$.

\paragraph{Capacity update.}
For every newly clamped tree, subtract its bandwidth consumption from every physical link used by that tree. Remove the clamped trees from the active set, and repeat the procedure until all trees have been assigned bandwidths.

This procedure differs from ordinary flow-level max-min fairness only in the clamping step. In ordinary fairness, a bottlenecked flow is fixed but unrelated flows remain independent. In tree-coupled fairness, once one flow of a tree is bottlenecked, the entire tree is fixed to the same rate.

A basic property of this procedure, which we use later when analyzing tree duplication, is that the rate at which trees are clamped never decreases as the algorithm proceeds. We state and prove this monotonicity, in a general form that also covers packet spraying, in Lemma~\ref{lem:monotone_clamp} at the end of this subsection.

\begin{figure}[tbp]
\centering
\begin{subfigure}[t]{0.30\linewidth}
  \centering
  \resizebox{\linewidth}{!}{%
  \begin{tikzpicture}
    \useasboundingbox (-0.6,-2.1) rectangle (3.6,0.6);
    \node[circle, draw] (0) at (0, 0) {0};
    \node[circle, draw] (1) at (1.5, 0) {1};
    \node[circle, draw] (2) at (3, 0) {2};
    \node[circle, draw] (3) at (0, -1.5) {3};
    \node[circle, draw] (4) at (1.5, -1.5) {4};
    \node[circle, draw] (5) at (3, -1.5) {5};
    \draw[thick, blue] (0) -- (1);
    \draw[thick, blue, bend left=30] (0) to (2);
    \draw[thick, blue] (0) -- (3);
    \draw[thick, blue] (3) -- (4);
    \draw[thick, blue] (2) -- (5);
  \end{tikzpicture}%
  }
  \caption{Tree 1.}
  \label{fig:maxmin-t1}
\end{subfigure}
\hfill
\begin{subfigure}[t]{0.30\linewidth}
  \centering
  \resizebox{\linewidth}{!}{%
  \begin{tikzpicture}
    \useasboundingbox (-0.6,-2.1) rectangle (3.6,0.6);
    \node[circle, draw] (0) at (0, 0) {0};
    \node[circle, draw] (1) at (1.5, 0) {1};
    \node[circle, draw] (2) at (3, 0) {2};
    \node[circle, draw] (3) at (0, -1.5) {3};
    \node[circle, draw] (4) at (1.5, -1.5) {4};
    \node[circle, draw] (5) at (3, -1.5) {5};
    \draw[thick, red] (0) -- (1);
    \draw[thick, red] (1) -- (2);
    \draw[thick, red] (0) -- (3);
    \draw[thick, red] (1) -- (4);
    \draw[thick, red] (4) -- (5);
  \end{tikzpicture}%
  }
  \caption{Tree 2.}
  \label{fig:maxmin-t2}
\end{subfigure}
\hfill
\begin{subfigure}[t]{0.30\linewidth}
  \centering
  \resizebox{\linewidth}{!}{%
  \begin{tikzpicture}
    \useasboundingbox (-0.6,-2.1) rectangle (3.6,0.6);
    \node[circle, draw] (0) at (0, 0) {0};
    \node[circle, draw] (1) at (1.5, 0) {1};
    \node[circle, draw] (2) at (3, 0) {2};
    \node[circle, draw] (3) at (0, -1.5) {3};
    \node[circle, draw] (4) at (1.5, -1.5) {4};
    \node[circle, draw] (5) at (3, -1.5) {5};
    \draw[thick, green!70!black] (0) -- (3);
    \draw[thick, green!70!black] (3) -- (4);
    \draw[thick, green!70!black] (4) -- (5);
    \draw[thick, green!70!black] (5) -- (2);
    \draw[thick, green!70!black] (1) -- (4);
  \end{tikzpicture}%
  }
  \caption{Tree 3.}
  \label{fig:maxmin-t3}
\end{subfigure}

\vspace{0.15cm}

\begin{subfigure}[t]{0.45\linewidth}
  \centering
  \resizebox{\linewidth}{!}{%
  \begin{tikzpicture}
    \useasboundingbox (-1.1,-2.8) rectangle (5.1,0.8);
    \node[circle, draw] (0) at (0, 0) {0};
    \node[circle, draw] (1) at (2, 0) {1};
    \node[circle, draw] (2) at (4, 0) {2};
    \node[circle, draw] (3) at (0, -2) {3};
    \node[circle, draw] (4) at (2, -2) {4};
    \node[circle, draw] (5) at (4, -2) {5};

    \draw[thick] (0) -- node[above] {\small $6/3 = 2$} (1);
    \draw[thick] (1) -- node[above] {\small $6/2 = 3$} (2);
    \draw[thick] (0) -- node[left] {\small $9/3 = 3$} (3);
    \draw[thick] (1) -- node[left] {\small $6/2 = 3$} (4);
    \draw[thick] (2) -- node[right] {\small $6/2 = 3$} (5);
    \draw[thick] (3) -- node[below] {\small $6/2 = 3$} (4);
    \draw[thick] (4) -- node[below] {\small $6/2 = 3$} (5);
  \end{tikzpicture}%
  }
  \caption{Bandwidth-to-flow ratios before allocation.}
  \label{fig:maxmin-initial}
\end{subfigure}
\hfill
\begin{subfigure}[t]{0.45\linewidth}
  \centering
  \resizebox{\linewidth}{!}{%
  \begin{tikzpicture}
    \useasboundingbox (-1.1,-2.8) rectangle (5.1,0.8);
    \node[circle, draw] (0) at (0, 0) {0};
    \node[circle, draw] (1) at (2, 0) {1};
    \node[circle, draw] (2) at (4, 0) {2};
    \node[circle, draw] (3) at (0, -2) {3};
    \node[circle, draw] (4) at (2, -2) {4};
    \node[circle, draw] (5) at (4, -2) {5};

    \draw[thick, dashed, gray] (0) -- node[above, text=black] {\small 0} (1);
    \draw[thick] (1) -- node[above] {\small 2} (2);
    \draw[thick] (0) -- node[left] {\small 5} (3);
    \draw[thick] (1) -- node[left] {\small 4} (4);
    \draw[thick] (2) -- node[right] {\small 4} (5);
    \draw[thick] (3) -- node[below] {\small 4} (4);
    \draw[thick] (4) -- node[below] {\small 4} (5);
  \end{tikzpicture}%
  }
  \caption{Remaining bandwidth after fixing Trees 1 and 2.}
  \label{fig:maxmin-remaining}
\end{subfigure}
\caption{\textbf{Tree-coupled max-min fairness on a $2 \times 3$ partial-mesh topology.}
Three logical trees, shown in (a), (b), and (c), are mapped to a physical topology in which all physical links have capacity 6, except link $(0,3)$, which has capacity 9.
In (d), each physical link is labeled by the ratio between its capacity and the number of active tree flows that traverse it. The minimum ratio is 2, attained by link $(0,1)$, and therefore Trees 1 and 2 are bottlenecked at bandwidth 2.
In (e), the remaining capacity is shown after fixing Trees 1 and 2. Tree 3 is then assigned bandwidth 4.}
\label{fig:maxmin}
\Description{Five diagrams of a two-by-three mesh of six nodes. Panels (a),
(b) and (c) each highlight the edges of one logical tree. Panel (d) labels
every link with its capacity divided by the number of tree flows crossing
it; link (0,1) has the smallest such ratio, 2. Panel (e) shows the
capacities remaining after the two trees that use link (0,1) are fixed at
bandwidth 2.}
\end{figure}

Figure~\ref{fig:maxmin} illustrates the procedure on a $2 \times 3$ partial mesh with six nodes, in which all physical links have capacity 6, except link $(0,3)$, which has capacity 9. The schedule contains three logical trees. After mapping logical edges to physical shortest paths, we compute for each physical link the ratio between its capacity and the number of active tree flows that traverse it. The minimum ratio is 2, attained by link $(0,1)$. Therefore, all trees that use this link are bottlenecked; in the example, these are Trees 1 and 2. Both are assigned bandwidth 2, and all of their flows are clamped to rate 2. After subtracting their bandwidth consumption from the network, Tree 3 remains. Its minimum residual capacity is 4, and hence it is assigned bandwidth 4.

\paragraph{Generalization to packet spraying.}
The bottleneck computation above assumes each logical edge is routed over a single physical path. Under the packet-spraying model of Section~\ref{sec:ps}, a logical flow $F$ instead loads each physical link $e$ fractionally by $p_e(F)$. The bottleneck computation from Equation~\eqref{eq:tree_coupled_bottleneck} generalizes by replacing the number of active flows on a link with the total fractional load induced by active flows:
\begin{equation}
B^* =
\min_{e \in E}
\frac{C_e^{\mathrm{rem}}}
{\sum_{F \in \mathcal{F}_{\mathrm{active}}} p_e(F)} .
\label{eq:ecmp_bottleneck}
\end{equation}
Here, $\mathcal{F}_{\mathrm{active}}$ is the set of constituent logical flows belonging to trees that have not yet been clamped.

Suppose link $e^*$ attains the minimum in Equation~\eqref{eq:ecmp_bottleneck}. Any active tree with a constituent flow $F$ such that $p_{e^*}(F)>0$ is bottlenecked at bandwidth $B^*$. The amount of physical bandwidth consumed by this flow on link $e^*$ is then $p_{e^*}(F)B^*$, and the total rate of the parent logical flow is $B^*$. Thus, the same tree-coupled clamping rule established for single-path routing applies: once a tree is bottlenecked on any fraction of any constituent flow, the entire tree is assigned the bottleneck rate $B^*$. In particular, all trees sharing the bottlenecked link $e^*$ are clamped to the \emph{same} tree-level rate $B^*$, even though their individual sub-flows on $e^*$ may consume different physical amounts $p_{e^*}(F)B^*$; the equal-rate property holds at the level of the original trees, as shown for the single-path case.

The full procedure, including both the single-path and packet-spraying cases, is summarized in Algorithm~\ref{alg:tree-coupled-maxmin}.

\begin{algorithm}[t]
\caption{Tree-coupled max-min fair allocation under packet spraying}
\label{alg:tree-coupled-maxmin}
\begin{algorithmic}[1]
\Require Trees $\{T_1,\ldots,T_k\}$ with constituent flows and split
         coefficients $p_e(F)$ for every flow $F$ and link $e$; link
         capacities $\{C_e\}_{e\in E}$
\Ensure Per-tree bandwidths $\{B(T_i)\}_{i=1}^k$
\State $\mathcal{A} \gets \{T_1,\ldots,T_k\}$
       \Comment{active (unclamped) trees}
\State $C_e^{\mathrm{rem}} \gets C_e$ for every $e \in E$
       \Comment{residual capacities}
\While{$\mathcal{A} \neq \emptyset$}
    \State $\mathcal{F}_{\mathrm{active}} \gets
           \{\,F : F \text{ is a flow of some } T \in \mathcal{A}\,\}$
    \ForAll{$e \in E$ with $\sum_{F\in\mathcal{F}_{\mathrm{active}}} p_e(F) > 0$}
        \State $r_e \gets
               \dfrac{C_e^{\mathrm{rem}}}
                     {\sum_{F\in\mathcal{F}_{\mathrm{active}}} p_e(F)}$
               \Comment{fair-share ratio of link $e$}
    \EndFor
    \State $B^\star \gets \min_e r_e$

    \State $e^\star \gets \arg\min_e r_e$
           \Comment{bottleneck link}
    \State $\mathcal{C} \gets
           \{\,T \in \mathcal{A} : \exists\, \text{flow } F \text{ of } T
           \text{ with } p_{e^\star}(F) > 0\,\}$

    \ForAll{$T \in \mathcal{C}$}
        \State $B(T) \gets B^\star$
               \Comment{clamp the whole tree}
    \EndFor
    \ForAll{$e \in E$}
        \State $C_e^{\mathrm{rem}} \gets
               C_e^{\mathrm{rem}} - B^\star
               \sum_{F \text{ a flow of some } T \in \mathcal{C}}
               p_e(F)$
     
    \EndFor
    \State $\mathcal{A} \gets \mathcal{A} \setminus \mathcal{C}$
           \Comment{deactivate clamped trees}
\EndWhile
\State \Return $\{B(T_i)\}_{i=1}^k$
\end{algorithmic}
\end{algorithm}

\begin{figure}[htbp]
\centering
\begin{subfigure}[t]{0.30\linewidth}
  \centering
  \resizebox{\linewidth}{!}{%
  \begin{tikzpicture}
    \useasboundingbox (-0.7,-2.7) rectangle (2.7,0.7);
    \node[circle, draw, minimum size=0.6cm] (0) at (0, 0) {0};
    \node[circle, draw, minimum size=0.6cm] (1) at (2, 0) {1};
    \node[circle, draw, minimum size=0.6cm] (2) at (0, -2) {2};
    \node[circle, draw, minimum size=0.6cm] (3) at (2, -2) {3};
    \draw[thick] (0) -- node[above] {\small 3} (1);
    \draw[thick] (0) -- node[left] {\small 3} (2);
    \draw[thick] (1) -- node[right] {\small 10} (3);
    \draw[thick] (2) -- node[below] {\small 10} (3);
  \end{tikzpicture}%
  }
  \caption{$2 \times 2$ mesh topology.}
  \label{fig:ecmp-topology}
\end{subfigure}
\hfill
\begin{subfigure}[t]{0.30\linewidth}
  \centering
  \resizebox{\linewidth}{!}{%
  \begin{tikzpicture}
    \useasboundingbox (-0.7,-2.7) rectangle (2.7,0.7);
    \node[circle, draw, minimum size=0.6cm] (0) at (0, 0) {0};
    \node[circle, draw, minimum size=0.6cm] (1) at (2, 0) {1};
    \node[circle, draw, minimum size=0.6cm] (2) at (0, -2) {2};
    \node[circle, draw, minimum size=0.6cm] (3) at (2, -2) {3};
    \draw[thick, blue] (0) -- (2);
    \draw[thick, blue] (2) -- (3);
    \draw[thick, blue] (3) -- (1);
  \end{tikzpicture}%
  }
  \caption{Tree 1.}
  \label{fig:ecmp-tree1}
\end{subfigure}
\hfill
\begin{subfigure}[t]{0.30\linewidth}
  \centering
  \resizebox{\linewidth}{!}{%
  \begin{tikzpicture}
    \useasboundingbox (-0.7,-2.7) rectangle (2.7,0.7);
    \node[circle, draw, minimum size=0.6cm] (0) at (0, 0) {0};
    \node[circle, draw, minimum size=0.6cm] (1) at (2, 0) {1};
    \node[circle, draw, minimum size=0.6cm] (2) at (0, -2) {2};
    \node[circle, draw, minimum size=0.6cm] (3) at (2, -2) {3};
    \draw[thick, red] (0) -- (3);
    \draw[thick, red] (3) -- (2);
    \draw[thick, red] (3) -- (1);
  \end{tikzpicture}%
  }
  \caption{Tree 2.}
  \label{fig:ecmp-tree2}
\end{subfigure}

\vspace{0.15cm}

\begin{subfigure}[t]{0.48\linewidth}
  \centering
  \resizebox{\linewidth}{!}{%
  \begin{tikzpicture}
    \useasboundingbox (-1.3,-3.0) rectangle (3.3,1.0);
    \node[circle, draw, minimum size=0.6cm] (0) at (0, 0) {0};
    \node[circle, draw, minimum size=0.6cm] (1) at (2, 0) {1};
    \node[circle, draw, minimum size=0.6cm] (2) at (0, -2) {2};
    \node[circle, draw, minimum size=0.6cm] (3) at (2, -2) {3};
    \draw[thick] (0) -- node[above] {\small 0.5} (1);
    \draw[thick] (0) -- node[left] {\small 1.5} (2);
    \draw[thick] (1) -- node[right] {\small 2.5} (3);
    \draw[thick] (2) -- node[below] {\small 2.5} (3);
  \end{tikzpicture}%
  }
  \caption{Sum of physical flow percentages.}
  \label{fig:ecmp-flow-percentages}
\end{subfigure}
\hfill
\begin{subfigure}[t]{0.48\linewidth}
  \centering
  \resizebox{\linewidth}{!}{%
  \begin{tikzpicture}
    \useasboundingbox (-1.3,-3.0) rectangle (3.3,1.0);
    \node[circle, draw, minimum size=0.6cm] (0) at (0, 0) {0};
    \node[circle, draw, minimum size=0.6cm] (1) at (2, 0) {1};
    \node[circle, draw, minimum size=0.6cm] (2) at (0, -2) {2};
    \node[circle, draw, minimum size=0.6cm] (3) at (2, -2) {3};
    \draw[thick] (0) -- node[above] {\small $3 / 0.5 = 6$} (1);
    \draw[thick] (0) -- node[left] {\small $3 / 1.5 = \textbf{2}$} (2);
    \draw[thick] (1) -- node[right] {\small $10 / 2.5 = 4$} (3);
    \draw[thick] (2) -- node[below] {\small $10 / 2.5 = 4$} (3);
  \end{tikzpicture}%
  }
  \caption{Fair-share ratios.}
  \label{fig:ecmp-fair-share}
\end{subfigure}
\caption{\textbf{Tree-coupled max-min fairness under packet spraying.}
(a) A $2 \times 2$ physical mesh with non-uniform link capacities.
(b) Tree 1 is routed using explicit physical paths.
(c) Tree 2 contains the logical edge $(0,3)$, which is split by packet spraying across the two shortest physical paths from node 0 to node 3.
(d) The resulting sum of physical flow fractions after accounting for the packet-spraying split.
(e) The fair-share ratio, computed as capacity divided by accumulated flow fraction, identifies link $(0,2)$ as the global bottleneck.}
\label{fig:ecmp}
\Description{Five diagrams of a two-by-two mesh of four nodes with unequal
link capacities. Panel (a) shows the topology, panels (b) and (c) the edges
of two logical trees, where the second tree's logical edge from node 0 to
node 3 is split across the two shortest paths. Panel (d) labels each link
with the summed fractional flow it carries, and panel (e) labels each link
with capacity divided by that fraction, identifying link (0,2) as the
bottleneck.}
\end{figure}

Figure~\ref{fig:ecmp} gives an example on a $2 \times 2$ mesh. Tree 2 contains a logical edge from node 0 to node 3. Since there is no direct physical link between them, packet spraying splits this logical flow equally between the two shortest paths, $0$--$1$--$3$ and $0$--$2$--$3$. Therefore, this logical flow contributes $1/2$ unit of load to each physical link on these two paths. After summing the fractional contributions of all active tree flows, the minimum fair-share ratio is attained by link $(0,2)$ and equals 2. Since both trees use this link, either fully or fractionally, both trees are clamped to bandwidth 2.

\paragraph{Correctness under fractional loads.}
The clamping procedure above is an instance of the water-filling
algorithm: it repeatedly raises a common rate on all unclamped trees until
a link saturates, then freezes the trees on that link and continues. The
following theorem states that this procedure is correct even when the
per-link loads $p_e(F)$ are fractional, so that the rates it computes are
exactly the tree-coupled max-min fair ones.

\begin{theorem}[The computed allocation is max-min fair]
\label{thm:maxmin-correct}
For arbitrary non-negative per-link loads $p_e(F)$, the tree bandwidths
$\{B(T_i)\}$ produced by Algorithm~\ref{alg:tree-coupled-maxmin} are the
tree-coupled max-min fair allocation over the feasible set of tree-rate
vectors.
\end{theorem}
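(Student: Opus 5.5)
We first fix the object being characterized. Each tree $T_i$ induces the coefficient $a_{e,i}=\sum_{F\in T_i}p_e(F)\ge 0$ on link $e$. The feasible set is $\mathcal{X}=\{B\in\mathbb{R}^k_{\ge0}:\sum_i a_{e,i}B_i\le C_e\ \forall e\}$. A vector $B\in\mathcal{X}$ is tree-coupled max-min fair if no coordinate $B_i$ can be increased within $\mathcal{X}$ without decreasing some $B_j\le B_i$.

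The plan is to use the standard bottleneck characterization. We show that every tree $T_i$ has a \emph{bottleneck link} $e$ with three properties:
\begin{itemize}
  \item $a_{e,i}>0$;
  \item $e$ is saturated, i.e.\ $\sum_j a_{e,j}B_j=C_e$;
  \item $B_i\ge B_j$ for every $j$ with $a_{e,j}>0$.
\end{itemize}
Any feasible vector with this property is max-min fair, and this implication is the first lemma we prove. Suppose $B_i$ could be raised to a feasible $B'$ with $B'_j\ge B_j$ whenever $B_j\le B_i$. On $T_i$'s bottleneck $e$, every other user $j$ has $B_j\le B_i$, so $B'_j\ge B_j$. The load on $e$ therefore strictly increases, because $a_{e,i}>0$, which contradicts saturation. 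Only nonnegativity of the $a_{e,j}$ is used, so fractional loads pose no difficulty.

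Next we verify that Algorithm~\ref{alg:tree-coupled-maxmin} produces such a vector. This requires three facts, proved by induction on the rounds.
\begin{enumerate}
  \item \emph{Residual capacities stay nonnegative.} In a round with value $B^\star$, each link's residual decreases by $B^\star$ times the clamped trees' load. By the choice of $B^\star$ as the minimum ratio $r_e$, this is at most $B^\star\sum_{F\text{ active}}p_e(F)\le C_e^{\mathrm{rem}}$, so feasibility holds at the end.
  \item \emph{Clamped values are nondecreasing across rounds.} This is the monotonicity result Lemma~\ref{lem:monotone_clamp}; if it is not yet available, we prove it directly. Every remaining active link had $r_e\ge B^\star$. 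Removing clamped trees subtracts exactly $B^\star$ per unit of removed load from both the numerator and the denominator's weighted share, so the new ratio is still at least $B^\star$.
  \item \emph{The link $e^\star$ is saturated and fully frozen.} After the update its residual is exactly $0$, since all active trees loading $e^\star$ are clamped in that round.
\end{enumerate}

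Given these facts, the link $e^\star$ chosen in the round where $T_i$ is clamped serves as its bottleneck. It is saturated, $T_i$ loads it, and every tree loading it was either clamped earlier at a value $\le B^\star$ by monotonicity, or clamped now at $B^\star$. Later rounds never add load to it, because all its users are inactive. Hence $B_i=B^\star$ is maximal among the users of $e^\star$.

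Finally, uniqueness. We show that max-min fair vectors on a convex compact set such as $\mathcal{X}$ are unique, so that the computed allocation is \emph{the} max-min fair allocation. The argument is standard: if $B\ne B'$ are both max-min fair, take the smallest coordinate value where they differ and use convexity of $\mathcal{X}$, via the midpoint, to derive a contradiction. Alternatively, we can argue directly by induction on rounds that any max-min fair $B'$ must agree with the algorithm's output on the trees clamped in round $t$. Those trees share the saturated link $e^\star$, whose other users are already pinned to smaller values. Together with the first lemma, this gives the theorem.

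The main obstacle is the monotonicity step (fact 2) together with the tie handling when several links attain the minimum. With fractional coefficients, the ratio update must be written carefully: the new ratio is $(C_e^{\mathrm{rem}}-B^\star L_e^{c})/(L_e-L_e^{c})$, where $L_e^{c}$ is the clamped trees' load on $e$. Showing this is at least $B^\star$ reduces to $C_e^{\mathrm{rem}}\ge B^\star L_e$, which is exactly $r_e\ge B^\star$. The denominator may also become zero, in which case the link drops out and the bound holds trivially.
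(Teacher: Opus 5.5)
Your argument is correct, but it takes a different route from the paper. The paper never constructs bottleneck links. It rewrites the constraints in tree-rate coordinates with $q_{e,i}=\sum_{F\in T_i}p_e(F)\ge 0$. It then shows in Lemma~\ref{lem:free-disposal} that the resulting polytope has the free-disposal property: lowering one tree's rate cannot violate any constraint, because all coefficients are nonnegative. Finally it appeals to the general result of~\cite{radunovic2007} that water-filling computes the max-min fair allocation on every free-disposal set. You instead prove the classical bottleneck characterization directly; your first lemma is exactly where nonnegativity of the coefficients, i.e.\ free disposal, is used. You then verify by induction that the algorithm as written gives every tree a saturated link on which it has the largest rate. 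This reuses the computation of Lemma~\ref{lem:monotone_clamp}, together with the facts that residuals stay nonnegative and that $e^\star$ is exhausted and never reloaded. The paper's route is shorter and carries over verbatim to any feasible set with free disposal. However, it takes for granted that Algorithm~\ref{alg:tree-coupled-maxmin}, which freezes only the trees on a single argmin link per round, coincides with the abstract water-filling procedure of the cited work. Your proof covers that variant, ties included, without any outside theorem. One simplification: uniqueness needs neither convexity nor compactness, and compactness of $\mathcal{X}$ would in any case require every tree to load some link. Suppose $B\neq B'$ are both max-min fair, and let $s$ attain the smallest value of $\min(B_s,B'_s)$ over the differing coordinates, say $B_s<B'_s$. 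Applying the max-min fairness of $B$ to $B'$ itself yields $s'$ with $B'_{s'}<B_{s'}\le B_s$, contradicting the choice of $s$, so the midpoint step can be dropped.
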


The proof rests on a structural property of the feasible set. Water-filling computes the max-min fair allocation on any feasible set
with the \emph{free-disposal} property~\cite[Theorems~2 and~3]{radunovic2007}.
Writing $\vec e_i$ for the unit vector along coordinate $i$, a set
$\mathcal{X} \subseteq \mathbb{R}^N$ has this property if there is a
lower-bound vector $\vec m$ with $\vec x \ge \vec m$ for all
$\vec x \in \mathcal{X}$, and for every $\vec x \in \mathcal{X}$, every
coordinate $i$, and every $\alpha \ge 0$ with
$\vec x - \alpha\, \vec e_i \ge \vec m$, the decreased vector
$\vec x - \alpha\, \vec e_i$ is also in $\mathcal{X}$. In words, lowering a
single coordinate, as long as it stays above its lower bound, never leaves
the set.

\begin{lemma}[The tree polytope is free-disposal]
\label{lem:free-disposal}
Let $T_1,\ldots,T_N$ be trees, and let $q_{e,i}\ge0$ be the load that tree
$T_i$ places on directed edge $e$ per unit of rate. Then the polytope of
feasible tree-rate vectors
\begin{equation*}
\mathcal{B} =
\Bigl\{\, B \ge 0 \ : \
\textstyle\sum_{i} q_{e,i}\, B_i \le C_e \ \ \forall e \in E \,\Bigr\},
\end{equation*}
in which each coordinate $B_i$ is the rate of tree $T_i$, has the
free-disposal property.
\end{lemma}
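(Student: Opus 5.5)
The plan is to exhibit the lower-bound vector explicitly and then check the closure condition directly from the linear constraints. I will take $\vec m=\vec 0$. Every $B\in\mathcal{B}$ satisfies $B\ge 0$ by definition, so this $\vec m$ is a valid lower bound, and the first clause of the free-disposal definition holds immediately.

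For the closure condition, fix $B\in\mathcal{B}$, a coordinate $i$, and $\alpha\ge0$ with $B-\alpha\vec e_i\ge\vec 0$. Write $B'=B-\alpha\vec e_i$. Nonnegativity of $B'$ is exactly the hypothesis $B'\ge\vec m$. For each directed edge $e$, the capacity constraint changes only in the $i$-th term:
\[
\textstyle\sum_j q_{e,j}B'_j=\sum_j q_{e,j}B_j-\alpha\,q_{e,i}\le\sum_j q_{e,j}B_j\le C_e .
\]
This uses $q_{e,i}\ge0$ and $\alpha\ge0$. Hence $B'$ satisfies every constraint, so $B'\in\mathcal{B}$.

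The only point needing care, and it is mild, is the nonnegativity of the coefficients $q_{e,i}$. These are the per-unit loads: $A_{e,T}$ in the single-path case, and sums of fractions $p_e(F)\in[0,1]$ (together with both phases of a pair) under packet spraying. All of these are nonnegative by construction, so the same argument covers both routing models. With the lemma in hand, Theorem~\ref{thm:maxmin-correct} will follow by invoking the water-filling result of~\cite{radunovic2007} on $\mathcal{B}$.
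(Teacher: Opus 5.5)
Your proof is correct and follows the paper's argument essentially step for step. You take $\vec m=\vec 0$ and observe that lowering a single coordinate changes each capacity left-hand side by $-\alpha\,q_{e,i}\le 0$. Since this relies only on $q_{e,i}\ge 0$, it covers both integer single-path loads and fractional packet-sprayed loads.
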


\begin{proof}
Each coordinate $B_i$ is bounded below by $0$, so the required lower bound holds with $m = 0$. Let $B \in \mathcal{B}$, fix a tree index $i_0$, and let $B' = B - \alpha\, \vec e_{i_0}$ for some $\alpha \ge 0$ with $B' \ge 0$; that is, $B'$ lowers the rate of the single tree $T_{i_0}$ while keeping it non-negative. For every link $e$,
\begin{equation*}
\sum_{i} q_{e,i}\, B'_i
=
\sum_{i} q_{e,i}\, B_i \ -\ \alpha\, q_{e,i_0}
\ \le\
\sum_{i} q_{e,i}\, B_i
\ \le\
C_e,
\end{equation*}
where the first inequality uses $\alpha \ge 0$ and $q_{e,i_0} \ge 0$. Hence every capacity constraint remains satisfied and $B' \in \mathcal{B}$. Since this holds for any tree coordinate and any admissible $\alpha$, the set $\mathcal{B}$ has the free-disposal property. The argument uses only $q_{e,i} \ge 0$, and is therefore unaffected by whether the underlying loads are integer or fractional.
\end{proof}

\begin{proof}[Proof of Theorem~\ref{thm:maxmin-correct}]
\label{proof:maxmin-correct}
We express the capacity constraints in tree-rate coordinates. Define the
\emph{aggregate load coefficient} of tree $T_i$ on link $e$ as the total
load its flows place on $e$ per unit of tree rate,
\begin{equation*}
q_{e,i} = \sum_{F \in T_i} p_e(F) \ \ge 0 ,
\end{equation*}
so that a tree running at rate $B_i$ consumes $q_{e,i} B_i$ on link $e$
(for a reduce/broadcast tree pair, the sum runs over the flows of both
phases),
and the feasible set of tree-rate vectors is the polytope $\mathcal{B}$ of
Lemma~\ref{lem:free-disposal}. By that lemma $\mathcal{B}$ is
free-disposal, and water-filling computes the max-min fair allocation on
any free-disposal set~\cite[Theorems~2 and~3]{radunovic2007}. Since
Algorithm~\ref{alg:tree-coupled-maxmin} is water-filling on
$\mathcal{B}$, the rates it computes are the max-min fair allocation.
\end{proof}

We also record two properties of the tree-coupled max-min procedure. The
first is the monotonicity announced above; the second is used in
Section~\ref{sec:duplication} to analyze tree duplication.

\begin{lemma}[Monotonicity of the clamp rate]
\label{lem:monotone_clamp}
Let $B^*_1, B^*_2, \ldots$ be the sequence of bottleneck values $B^*$ computed in successive iterations of the tree-coupled max-min procedure. Then this sequence is non-decreasing: $B^*_{t} \le B^*_{t+1}$ for every iteration $t$.
\end{lemma}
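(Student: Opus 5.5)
We fix an iteration $t$ and compare the bottleneck value $B^*_t$ with $B^*_{t+1}$. Write $\mathcal{A}_t$ for the active set of trees at the start of iteration $t$, and $C_e^{(t)}$ for the residual capacities. Let $L_e^{(t)}=\sum_{T\in\mathcal{A}_t} q_{e,T}$ be the aggregate active load coefficient on link $e$, where $q_{e,T}=\sum_{F\in T}p_e(F)$ as in the proof of Theorem~\ref{thm:maxmin-correct}. By definition,
\[
B^*_t=\min_{e:\,L_e^{(t)}>0} C_e^{(t)}/L_e^{(t)} .
\]
Hence $C_e^{(t)}\ge B^*_t L_e^{(t)}$ for every link with $L_e^{(t)}>0$. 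For links with $L_e^{(t)}=0$, no active tree loads $e$, so $e$ never appears in any later minimum and can be ignored.

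The key step is to show that this inequality persists after the update. For any link $e$ with $L_e^{(t+1)}>0$, the update subtracts exactly $B^*_t\sum_{T\in\mathcal{C}_t} q_{e,T}$ from $C_e^{(t)}$, where $\mathcal{C}_t$ is the set of trees clamped in iteration $t$. Also $L_e^{(t+1)}=L_e^{(t)}-\sum_{T\in\mathcal{C}_t} q_{e,T}$, because $\mathcal{A}_{t+1}=\mathcal{A}_t\setminus\mathcal{C}_t$. Therefore
\[
C_e^{(t+1)}=C_e^{(t)}-B^*_t\bigl(L_e^{(t)}-L_e^{(t+1)}\bigr)\ \ge\ B^*_tL_e^{(t)}-B^*_tL_e^{(t)}+B^*_tL_e^{(t+1)}=B^*_tL_e^{(t+1)} .
\]
Since $L_e^{(t+1)}>0$ implies $L_e^{(t)}>0$, the initial inequality applies here. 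Dividing by $L_e^{(t+1)}$ gives $C_e^{(t+1)}/L_e^{(t+1)}\ge B^*_t$ for every link that remains in the minimum at iteration $t+1$. Taking the minimum yields $B^*_{t+1}\ge B^*_t$.

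The only subtlety, which I expect to be the main point to check, is the bookkeeping of which links enter the minimum. We must confirm that the set of links with positive active load can only shrink, which holds because loads are non-negative and trees only leave the active set. We must also confirm that the residual-capacity update in Algorithm~\ref{alg:tree-coupled-maxmin} subtracts precisely $B^*_t q_{e,T}$ for each clamped tree, so that the identity above is exact. The argument never uses integrality of $p_e(F)$, so it covers the single-path case (where $q_{e,T}$ counts flows) and the packet-sprayed case simultaneously. Finally, residual capacities stay non-negative by the same inequality, so the procedure is well defined throughout.
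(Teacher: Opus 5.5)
Your proof is correct and is essentially the paper's argument. The invariant $C_e \ge B^*_t \ell_e$ at iteration $t$ survives the removal of the clamped trees, which subtracts $\delta_e B^*_t$ from the residual capacity and $\delta_e$ from the load, so every link still carrying active load at iteration $t+1$ has ratio at least $B^*_t$; your extra bookkeeping on which links enter the minimum and on non-negative residuals is a harmless refinement of the same reasoning.
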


\begin{proof}
Under packet spraying, a link $e$ carries a fractional load $\ell_e = \sum_{F} p_e(F)$, summed over the flows of active trees, and its fair-share ratio is $C^{\mathrm{rem}}_e / \ell_e$. Single-path routing is the special case in which every $p_e(F) \in \{0,1\}$, so that $\ell_e = n_e$ is an integer; it therefore suffices to prove the claim for the fractional case. We caution that the rate at which individual sub-flows are clamped need not be monotonic, since removing a clamped tree subtracts fractional loads that differ across links; the quantity shown to be monotonic is the tree-level bottleneck value $B^*$, taken as the global minimum over all links in each iteration.

Consider two consecutive iterations $t$ and $t+1$, with bottleneck value $B^*_t$ attained at some link at iteration $t$. Let $e$ be any link that still has active flows at iteration $t+1$. Between the two iterations, the capacity-update step removed the trees clamped at iteration $t$, each of which had been assigned rate exactly $B^*_t$. Let $\delta_e \ge 0$ be the total fractional load these clamped trees contributed on link $e$. Removing them subtracts $\delta_e B^*_t$ from the remaining capacity of $e$ (the physical bandwidth they consumed) and $\delta_e$ from its fractional load. Writing the iteration-$t$ ratio of $e$ as $C/\ell$, the iteration-$(t+1)$ ratio is
\begin{equation*}
\frac{C - \delta_e B^*_t}{\ell - \delta_e}.
\end{equation*}
Since $B^*_t$ is the global minimum fair-share ratio over all links at iteration $t$, every link---including $e$---satisfies $C/\ell \ge B^*_t$ at that iteration, i.e., $C \ge \ell B^*_t$. Substituting,
\begin{equation*}
\frac{C - \delta_e B^*_t}{\ell - \delta_e}
\ge
\frac{\ell B^*_t - \delta_e B^*_t}{\ell - \delta_e}
=
B^*_t .
\end{equation*}
Thus every link that still has active flows on it at iteration $t+1$ has fair-share ratio at least $B^*_t$. Since $B^*_{t+1}$ is the minimum of these ratios, $B^*_{t+1} \ge B^*_t$.
\end{proof}

\begin{lemma}[Max-min lower bound]
\label{lem:maxmin_lower_bound}
Consider a set of active tree copies under tree-coupled max-min fairness. If there exists a feasible allocation in which every copy receives rate at least $X$, then the tree-coupled max-min fair allocation assigns every copy rate at least $X$.
\end{lemma}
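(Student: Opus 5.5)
The plan is to argue directly from the water-filling structure of Algorithm~\ref{alg:tree-coupled-maxmin}, with Lemma~\ref{lem:monotone_clamp} providing the key ordering. Let the copies be indexed by $i$, with aggregate load coefficients $q_{e,i}=\sum_{F}p_e(F)\ge 0$ as in the proof of Theorem~\ref{thm:maxmin-correct}. Let $\hat B$ be a feasible allocation with $\hat B_i\ge X$ for all $i$. By free disposal (Lemma~\ref{lem:free-disposal}), lowering coordinates one at a time keeps feasibility. So the uniform vector $X\cdot\mathbf 1$ is feasible, which means
\[
X\sum_i q_{e,i}\le C_e \quad\text{for every } e.
\]
It therefore suffices to show that every bottleneck value $B^*_t$ produced by the procedure is at least $X$, since every copy is assigned some $B^*_t$.

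First, I treat the first iteration. There $C^{\mathrm{rem}}_e=C_e$ and the active load on $e$ is $\ell_e=\sum_i q_{e,i}$. The displayed inequality then gives $C_e/\ell_e\ge X$ for every link with $\ell_e>0$. Hence $B^*_1\ge X$.

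Next, Lemma~\ref{lem:monotone_clamp} says the sequence $B^*_1\le B^*_2\le\cdots$ is non-decreasing. So every later clamp value is also at least $X$, and every copy, being clamped at some iteration $t$ to $B^*_t$, receives rate at least $X$.

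The only step needing care is that the procedure assigns every copy some $B^*_t$ at all. A copy with no load anywhere would never be clamped. I would dispose of this by convention (such a copy is unconstrained, so its rate is unbounded and certainly at least $X$), and otherwise note that each iteration clamps at least one active tree, so the loop terminates with all copies assigned.

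The main subtlety I expect is confirming that Lemma~\ref{lem:monotone_clamp} applies verbatim to copies with fractional loads. Its proof already handles fractional $\ell_e$, so I would simply cite it. As a fallback, I would reprove the per-iteration invariant $C^{\mathrm{rem}}_e\ge X\ell_e$ by induction: removing copies clamped at rate $B^*_t\ge X$ with load $\delta_e$ preserves $C^{\mathrm{rem}}_e-\delta_eB^*_t\ge X(\ell_e-\delta_e)$.
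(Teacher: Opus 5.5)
Your proof is correct and follows essentially the same route as the paper: both bound the first-iteration bottleneck value below by $X$ using feasibility of the uniform rate-$X$ allocation, and then extend that bound to every later clamp via Lemma~\ref{lem:monotone_clamp}. The paper phrases this as a contradiction (the first clamp below $X$ would have to occur at iteration one, where $C_e^{\mathrm{rem}}=C_e$), whereas you argue directly and use free disposal to make explicit the step from ``rates at least $X$'' to ``rate exactly $X$'' that the paper leaves implicit.
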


\begin{proof}
Assume for contradiction that some tree copy receives a rate below $X$ under tree-coupled max-min fairness. Consider the first iteration of the tree-coupled procedure at which a copy is clamped to a rate below $X$, and let this rate be $B^* < X$, attained at bottleneck link $e$. Because this is the \emph{first} such iteration, every copy clamped at an earlier iteration received a rate at least $X > B^*$. By the monotonicity of the clamp rate (Lemma~\ref{lem:monotone_clamp}), the clamp rates are non-decreasing, so no earlier copy could have been clamped at a rate exceeding $B^*$; combined with the previous sentence, this means this is the first iteration of the algorithm, and every copy using $e$ is still active and about to be clamped at $B^*$.

At this iteration, the fair-share ratio of $e$ is
\begin{equation*}
B^* = \frac{C_e^{\mathrm{rem}}}{\sum_{F} p_e(F)} < X,
\end{equation*}
where the sum is over the constituent flows of the copies still active on $e$, and $C_e^{\mathrm{rem}} = C_e$ since no copy has yet been clamped and had its consumption removed. Rearranging, $\sum_F p_e(F)\, X > C_e$. But this states exactly that if every active copy on $e$ were assigned rate $X$, the total physical load on $e$ would exceed its capacity $C_e$. This contradicts the assumed existence of a feasible allocation in which every copy receives rate at least $X$, since that allocation would also violate the capacity of $e$. Hence no copy is clamped below $X$, and tree-coupled max-min fairness assigns every copy rate at least $X$.
\end{proof}

\subsection{Emulating Explicit Bandwidths}
\label{sec:duplication}

We now show how to emulate an explicit bandwidth allocation using only max-min fair sharing. The key idea is tree duplication. If one tree should receive twice as much bandwidth as another, we instantiate two identical copies of the first tree and one copy of the second. Under max-min fairness, identical competing copies receive equal rates; the total bandwidth assigned to the original tree is the sum of the rates of its copies.

For example, suppose the explicit allocation assigns bandwidth 100 to Tree A and bandwidth 50 to Tree B. If only these two trees are instantiated, a max-min fair protocol may not produce the desired ratio. Instead, we instantiate two copies of Tree A and one copy of Tree B. If the three instantiated trees receive equal bandwidth 50, then the two copies of Tree A together achieve bandwidth $50+50=100$, while Tree B receives bandwidth 50.

We now formalize this idea. Let the explicit fractional solution contain $k$ trees of positive bandwidth,
\begin{equation*}
a_1,\ldots,a_k > 0 ,
\end{equation*}
where $k$ counts only trees with $a_i>0$; trees assigned zero bandwidth by the fractional packing carry no data and are simply omitted. Fix an integer scale factor $y \ge 1$. Our goal is to create roughly $yk$ tree copies, each of which should receive a common bandwidth $X$. For a candidate value $X$, define
\begin{equation*}
m_i(X) =
\left\lfloor \frac{a_i}{X} \right\rfloor .
\end{equation*}
We create $m_i(X)$ copies of tree $T_i$. Since
\begin{equation*}
m_i(X)X \le a_i
\end{equation*}
for every $i$, the duplicated schedule with per-copy bandwidth $X$ does not consume more capacity than the original explicit solution. Therefore, if the explicit allocation is feasible, then the duplicated allocation is also feasible.

We choose the largest value $X^*$ such that the number of generated tree copies is at least $yk$:
\begin{equation}
\sum_{i=1}^k
\left\lfloor \frac{a_i}{X^*} \right\rfloor
\ge yk .
\label{eq:duplication_constraint}
\end{equation}
The choice of the \emph{largest} such $X^*$ deserves explanation, since it balances two competing pressures. The per-copy rate $X$ controls the granularity of the approximation: each tree $T_i$ is approximated by $\lfloor a_i / X \rfloor$ copies, so the emulated bandwidth $\lfloor a_i/X\rfloor X$ falls short of the target $a_i$ by the rounding remainder, which is at most $X$. A smaller $X$ makes this remainder smaller and the approximation finer, but it also increases the number of copies $\sum_i \lfloor a_i / X \rfloor$, and instantiating too many tree copies is itself costly. Conversely, a larger $X$ uses fewer copies but coarsens the approximation and risks under-approximating each $a_i$ by a larger margin. Requiring at least $yk$ copies fixes a target granularity, and among all rates achieving that granularity we take the largest $X^*$, which uses the fewest copies consistent with it. As the theorem below shows, this choice makes the total emulated bandwidth $ykX^*$ approach the explicit optimum $\sum_i a_i$ as the scale factor $y$ grows, so increasing $y$ trades more copies for a provably better approximation.

Rather than searching over all real values of $X$, we compute $X^*$ directly by sweeping the breakpoints at which copies are added. Observe that $\sum_i \lfloor a_i/X\rfloor$ is a step function of $X$ that only changes at values of the form $a_i/b$ for integers $b\ge 1$: as $X$ decreases, tree $T_i$ gains its $b$-th copy exactly when $X$ reaches $a_i/b$. Algorithm~\ref{alg:xstar} exploits this. Every tree starts with no copies, and its first breakpoint $a_i$ is placed in a max-heap; the sweep then repeatedly lowers $X$ to the next-largest breakpoint still in the heap. At each such breakpoint every tree whose breakpoint equals that value gains a copy, so the total copy count increases by at least one per step, and ties among equal breakpoints are handled automatically. The sweep stops as soon as the total first reaches or exceeds $yk$, at which point the current breakpoint is the largest $X$ satisfying Equation~\eqref{eq:duplication_constraint}, namely $X^*$. Maintaining the candidate breakpoints in a max-heap, each copy addition costs $O(\log k)$, and since at most $yk+k-1$ copies are ever created, the total cost is $O(yk\log k)$.

\begin{algorithm}[t]
\caption{Compute the duplication rate $X^*$}
\label{alg:xstar}
\begin{algorithmic}[1]
\Require Tree bandwidths $a_1,\ldots,a_k > 0$; scale factor $y \ge 1$
\Ensure Rate $X^*$ and per-tree copy counts $b_1,\ldots,b_k$
\State $b_i \gets 0$ for all $i$; \ $\mathrm{total} \gets 0$
\State \textbf{init} max-heap $H$ with keys $t_i \gets a_i$ for all $i$
       \Comment{first breakpoint of each tree}
\While{$\mathrm{total} < yk$}
    \State $X^* \gets$ maximum key in $H$
           \Comment{next breakpoint}
    \ForAll{$i$ with $t_i = X^*$}
        \State $b_i \gets b_i + 1$; \ $\mathrm{total} \gets \mathrm{total} + 1$
        \State update key of $i$ in $H$ to $t_i \gets a_i/(b_i+1)$
    \EndFor
\EndWhile
\State \Return $X^*,\ (b_1,\ldots,b_k)$
\end{algorithmic}
\end{algorithm}

Assigning bandwidth $X^*$ to every duplicated copy is a feasible allocation, since by construction the copies at per-copy rate $X^*$ consume no more capacity than the original explicit solution; it therefore serves as a feasible lower-bound allocation. By Lemma~\ref{lem:maxmin_lower_bound}, tree-coupled max-min fairness then assigns \emph{every} copy a rate at least $X^*$, so the aggregate bandwidth delivered is at least the guaranteed value $ykX^*$. We stress the distinction between this \emph{guaranteed} bandwidth $ykX^*$, which the approximation theorem below uses, and the possibly larger bandwidth that max-min fairness actually delivers: because max-min fairness raises unbottlenecked copies above $X^*$ whenever spare capacity remains, the realized aggregate may exceed $ykX^*$. The following theorem bounds the loss of the guaranteed value relative to the original explicit allocation.

\begin{theorem}[Approximation by tree duplication]
\label{thm:duplication_bound}
Let $a_1,\ldots,a_k$ be the tree bandwidths in a feasible explicit allocation, and let $X^*$ be the largest value satisfying Equation~\eqref{eq:duplication_constraint}. Then the number of instantiated tree copies satisfies
\begin{equation}
yk \le \sum_{i=1}^k \left\lfloor \frac{a_i}{X^*} \right\rfloor \le yk+k-1 < (y+1)k ,
\label{eq:copy_count_bound}
\end{equation}
and the guaranteed aggregate bandwidth $ykX^*$ satisfies
\begin{equation}
\frac{\sum_{i=1}^k a_i}{ykX^*}
\le
\frac{yk+k-1}{yk}
= 1 + \frac{k-1}{yk}.
\label{eq:duplication_ratio}
\end{equation}
\end{theorem}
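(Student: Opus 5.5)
The plan is to analyze the step function $f(X)=\sum_i\lfloor a_i/X\rfloor$, which is non-increasing and right-continuous in $X$. The lower bound in Equation~\eqref{eq:copy_count_bound} is the defining property of $X^*$: Equation~\eqref{eq:duplication_constraint} gives $f(X^*)\ge yk$. The largest such $X^*$ exists because $f(X)=0$ for $X>\max_i a_i$, and $f$ only jumps upward (as $X$ decreases) at breakpoints $a_i/b$. Hence the supremum of $\{X: f(X)\ge yk\}$ is attained at a breakpoint, exactly as the sweep of Algorithm~\ref{alg:xstar} finds it.

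For the upper bound I compare $f$ at $X^*$ with its value just above $X^*$. By maximality, $f(X)<yk$, i.e.\ $f(X)\le yk-1$, for every $X>X^*$. As $X$ decreases to $X^*$, each term $\lfloor a_i/X\rfloor$ can increase by at most one. Indeed, for $X$ slightly larger than $X^*$ we have $\lfloor a_i/X\rfloor\ge\lceil a_i/X^*\rceil-1\ge\lfloor a_i/X^*\rfloor-1$, since $a_i/X$ tends to $a_i/X^*$ from below. Therefore $f(X^*)\le f(X)+k\le yk-1+k$, which gives $yk+k-1<(y+1)k$. This is the step needing the most care: it requires the limiting argument that a single term jumps by at most one at a breakpoint. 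Concretely, choose $X\in(X^*,X^*+\eta)$ with $\eta$ small enough that no breakpoint $a_i/b$ lies in $(X^*,X)$. Then for each $i$, $\lfloor a_i/X\rfloor=\lfloor a_i/X^*\rfloor$ unless $a_i/X^*$ is an integer, in which case it is one less.

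For Equation~\eqref{eq:duplication_ratio}, I use $\lfloor a_i/X^*\rfloor > a_i/X^*-1$. A cleaner route is via the upper bound, using only $a_i/X^*<\lfloor a_i/X^*\rfloor+1$. That gives $\sum_i a_i/X^* < f(X^*)+k\le yk+2k-1$, which is too weak. So instead I argue directly at a point $X$ just above $X^*$ using $f(X)\le yk-1$ and $\lfloor a_i/X\rfloor>a_i/X-1$. This gives $\sum_i a_i/X< yk-1+k$, and letting $X\downarrow X^*$ yields $\sum_i a_i/X^*\le yk+k-1$. Dividing by $yk$ gives $\sum_i a_i/(ykX^*)\le (yk+k-1)/(yk)=1+(k-1)/(yk)$.
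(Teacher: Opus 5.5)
Your proof is correct. The copy-count upper bound follows the paper's argument: compare $f$ at $X^*$ with its value just above $X^*$, where maximality gives at most $yk-1$. Exactly the indices with $a_i/X^*$ integral lose one copy; the paper counts these as $c$ and records $\sum_i m_i\le yk+c-1$.

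For the bandwidth ratio you take a different route. The paper stays at $X^*$ and splits the indices. The $c$ boundary indices satisfy $a_i=m_iX^*$ exactly, and the other $k-c$ satisfy $a_i<(m_i+1)X^*$; combined with $\sum_i m_i\le yk+c-1$, the $c$ terms cancel and give $X^*(yk+k-1)$. You instead apply $\lfloor a_i/X\rfloor>a_i/X-1$ at a point $X>X^*$, where $f(X)\le yk-1$. This gives $\sum_i a_i<X(yk+k-1)$, and you then let $X\downarrow X^*$.

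Your version is the limiting form of the paper's computation, but it is shorter and self-contained. It uses only the maximality of $X^*$; in fact any $X>X^*$ works, not just one close to $X^*$, so no breakpoint bookkeeping is needed. It also does not depend on the copy-count bound at all. The paper's version, by contrast, is a finite calculation without limits, and it exposes the slightly sharper intermediate bound $\sum_i m_i\le yk+c-1$.

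Two small points for the write-up. First, $f$ is left-continuous in $X$, not right-continuous. At a breakpoint $a_i/b$ the value already includes the $b$-th copy, so it equals the limit from smaller $X$, and it drops immediately to the right. Left-continuity, together with monotonicity, is what makes $\{X: f(X)\ge yk\}$ closed on the right, so that the largest $X^*$ exists. Second, drop the abandoned attempt via $\lfloor a_i/X^*\rfloor>a_i/X^*-1$ and keep only the argument at points $X>X^*$.
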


\begin{proof}[Proof of Theorem~\ref{thm:duplication_bound}]
\label{proof:duplication_bound}
Let
\begin{equation*}
m_i =
\left\lfloor \frac{a_i}{X^*} \right\rfloor .
\end{equation*}
Since $X^*$ is the largest value satisfying Equation~\eqref{eq:duplication_constraint}, increasing $X^*$ by an arbitrarily small amount must make the constraint fail. That is, for every sufficiently small $\eta>0$,
\begin{equation*}
\sum_{i=1}^k
\left\lfloor \frac{a_i}{X^*+\eta} \right\rfloor
\le yk-1 .
\end{equation*}

Let $c$ be the number of indices $i$ for which $a_i/X^*$ is an integer. These are exactly the terms whose floor value decreases under an infinitesimal increase of $X^*$. Therefore,
\begin{equation*}
\sum_{i=1}^k m_i - c \le yk-1,
\end{equation*}
and hence
\begin{equation}
\sum_{i=1}^k m_i \le yk+c-1 .
\label{eq:floor_sum_bound}
\end{equation}
Since $c \le k$, this gives the upper bound $\sum_i m_i \le yk+k-1 < (y+1)k$ on the number of instantiated copies; the matching lower bound $\sum_i m_i \ge yk$ is Equation~\eqref{eq:duplication_constraint} itself. This establishes Equation~\eqref{eq:copy_count_bound}. Intuitively, if the scheme used $(y+1)k$ or more copies, then some larger value of $X$ would already yield at least $yk$ copies, contradicting the maximality of $X^*$.

We now upper-bound the explicit bandwidth. For the $c$ indices that lie exactly on an integer boundary, we have $a_i=m_iX^*$. For the remaining $k-c$ indices, we have $a_i<(m_i+1)X^*$. Therefore,
\begin{equation*}
\sum_{i=1}^k a_i
\le
X^*\sum_{i=1}^k m_i
+
X^*(k-c).
\end{equation*}
Using Equation~\eqref{eq:floor_sum_bound},
\begin{align*}
\sum_{i=1}^k a_i
&\le
X^*(yk+c-1)
+
X^*(k-c) \\
&=
X^*(yk+k-1).
\end{align*}
Dividing by $ykX^*$ gives
\begin{equation*}
\frac{\sum_{i=1}^k a_i}{ykX^*}
\le
\frac{yk+k-1}{yk}.
\end{equation*}
This proves the claimed bound.
\end{proof}

Consequently, the scheme uses at most $(y+1)k$ tree copies, and as $y$ grows the guaranteed aggregate bandwidth approaches the explicit fractional bandwidth.

The approximation factor in Theorem~\ref{thm:duplication_bound} is
\begin{equation*}
\frac{yk+k-1}{yk}
=
1+\frac{k-1}{yk}.
\end{equation*}
Thus, increasing the duplication scale $y$ makes the protocol-level schedule arbitrarily close to the explicit fractional allocation. Equivalently, for any desired accuracy parameter $\mu>0$, choosing
\begin{equation*}
y \ge \max\left\{1,\ \left\lceil\frac{k-1}{k\mu}\right\rceil\right\} \approx 1/\mu
\end{equation*}
gives an approximation factor at most $1+\mu$.

\subsection{Approximation and Time Complexity}
\label{sec:end-to-end-guarantee}
We now compose the DualColl guarantee from Section~\ref{sec:dualcoll-algorithm} with the tree-duplication
guarantee to obtain the bandwidth actually delivered under
protocol-determined max-min sharing and packet spraying, together with the total running time.

\begin{theorem}[End-to-end DualColl guarantee under packet spraying]
\label{thm:end-to-end}
Fix $0<\epsilon<1$ and run DualColl under packet spraying with accuracy
$\epsilon$, obtaining $k$ positive-rate tree pairs. Then apply tree
duplication, duplicating each pair as a unit, with the choice
\begin{equation*}
y = \max\left\{1,\ \left\lceil \frac{k-1}{k\epsilon} \right\rceil\right\} ,
\end{equation*}
i.e.\ setting the duplication accuracy parameter $\mu=\epsilon$. Then the
guaranteed aggregate bandwidth $ykX^*$ is at least
\[
(1-\epsilon)^3\,\mathrm{OPT},
\]
and the total running time is
\begin{equation}
T = \ComplexityEndToEndPS .
\label{eq:end-to-end-runtime}
\end{equation}
\end{theorem}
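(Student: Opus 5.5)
The argument composes three guarantees: DualColl's near-optimality, the tree-duplication bound, and max-min fairness. The running time is then the sum of the DualColl cost and the duplication cost.

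\emph{Approximation.} The packet-sprayed pair LP, Equation~\eqref{eq:ps-pair-packing}, is a packing LP of the standard form and has an exact oracle (Theorem~\ref{thm:pair-oracle}). Hence, exactly as in Theorem~\ref{thm:dualcoll-approximation}, DualColl returns a feasible pair packing $a_1,\ldots,a_k$ with $\sum_i a_i \ge (1-\epsilon)^2\,\mathrm{OPT}$. Here $\mathrm{OPT}$ is the optimum of Equation~\eqref{eq:ps-pair-packing}, which by Lemmas~\ref{lem:single-root} and~\ref{lem:phase-trees} is the optimal AllReduce throughput under packet spraying.

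Theorem~\ref{thm:duplication_bound} treats each pair as a single column with aggregate load coefficient $q_{e,i}$, as in the proof of Theorem~\ref{thm:maxmin-correct}. It gives
\[
ykX^* \ge \frac{\sum_i a_i}{1+\frac{k-1}{yk}}.
\]
With the stated $y$ we have $\frac{k-1}{yk}\le\epsilon$; for $k=1$ this is $0$. So $ykX^*\ge \sum_i a_i/(1+\epsilon)$. Using $1/(1+\epsilon)\ge 1-\epsilon$ we obtain $ykX^*\ge(1-\epsilon)^3\,\mathrm{OPT}$.

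Finally, the remark preceding the theorem, via Lemma~\ref{lem:maxmin_lower_bound}, shows that every copy actually receives at least $X^*$ under tree-coupled max-min sharing. Lemma~\ref{lem:maxmin_lower_bound} holds verbatim when pairs are duplicated as units, since a pair is just a column with load coefficients $q_{e,i}\ge 0$. Hence $ykX^*$ is indeed guaranteed.

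\emph{Running time.} The cost has two parts.
\begin{itemize}
\item \textbf{DualColl iterations.} There are $N_{\mathrm{iter}}=\ComplexityNiter$ iterations. Each oracle call costs $\ComplexityOraclePS$: $\ComplexityProjectionDPAll$ for the projection weights (Equation~\eqref{eq:ps-projection-cost}), plus $n$ roots times a minimum in- and out-arborescence. This last term must total $O(n^3)$, i.e.\ $O(n^2)$ per root on the dense $n$-vertex projected graph, e.g.\ via Tarjan's implementation of Chu--Liu/Edmonds. Together this gives the first term of Equation~\eqref{eq:end-to-end-runtime}.
\item \textbf{Duplication.} Algorithm~\ref{alg:xstar} costs $O(yk\log k)$.
\end{itemize}

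The duplication cost is the step I expect to need the most care, since it requires bounding $k$ and $y$. Each iteration adds at most one new column, so $k\le N_{\mathrm{iter}}=O(|E|\log|E|/\epsilon^2)$. Also $y=O(1/\epsilon)$, so $yk=O(|E|\log|E|/\epsilon^3)$. Finally $\log k=O(\log(|E|/\epsilon))$, because $\log k\le \log|E|+\log\log|E|+2\log(1/\epsilon)+O(1)$. This yields the second term $|E|\log|E|\log(|E|/\epsilon)/\epsilon^3$.

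Summing the two parts gives Equation~\eqref{eq:end-to-end-runtime}. The maximum-finding in the sweep and the copy instantiation are dominated by the heap cost.
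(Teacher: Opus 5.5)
Your proposal is correct and follows the paper's proof essentially step for step: the $(1-\epsilon)^2$ guarantee from the exact pair oracle, then the duplication bound with $(k-1)/(yk)\le\epsilon$ and $1/(1+\epsilon)\ge 1-\epsilon$, then Lemma~\ref{lem:maxmin_lower_bound} to certify that max-min sharing delivers at least $ykX^*$, and finally the running time as $N_{\mathrm{iter}}$ oracle calls (projection plus $2n$ arborescences) plus the $O(yk\log k)$ sweep, using $k\le N_{\mathrm{iter}}$ and $y=O(1/\epsilon)$. Your extra remarks, namely that $\mathrm{OPT}$ here is the optimum of the pair LP and the explicit bound on $\log k$, only spell out what the paper leaves implicit.
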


\begin{proof}[Proof of Theorem~\ref{thm:end-to-end}]
\label{proof:end-to-end}
By Theorem~\ref{thm:dualcoll-approximation} applied with the
packet-spraying oracle, DualColl returns a feasible explicit fractional
packing $a_1,\ldots,a_k$ (the tree pairs of positive bandwidth) with
$\sum_i a_i \ge (1-\epsilon)^2\,\mathrm{OPT}$. Tree duplication with
$y=\max\{1,\lceil (k-1)/(k\epsilon)\rceil\}$ sets the duplication accuracy to
$\mu=\epsilon$, so by Theorem~\ref{thm:duplication_bound} the guaranteed
aggregate bandwidth satisfies
\begin{align*}
ykX^*
&\ge \frac{\sum_i a_i}{1+\epsilon}
\ge (1-\epsilon)\sum_i a_i \\
&\ge (1-\epsilon)\,(1-\epsilon)^2\,\mathrm{OPT}
= (1-\epsilon)^3\,\mathrm{OPT},
\end{align*}
where the first inequality is Equation~\eqref{eq:duplication_ratio} with
$\mu=\epsilon$ and the second uses $1/(1+\epsilon)\ge 1-\epsilon$. By
Lemma~\ref{lem:maxmin_lower_bound}, tree-coupled max-min fairness delivers
at least this guaranteed value, establishing the $(1-\epsilon)^3$ bound.

For the running time we first bound the DualColl solve under packet
spraying. By Section~\ref{sec:ps-pairs} the packet-sprayed oracle
computes, on the directed projected graph, one minimum-weight spanning
in-arborescence and one minimum-weight spanning out-arborescence per
candidate root. That graph has $n$ vertices and $\Theta(n^2)$ arcs, and on
such a dense graph Tarjan's implementation of the Chu--Liu/Edmonds
algorithm computes a minimum spanning arborescence in $O(n^2)$
time~\cite{tarjan1977branchings}, so the $2n$ arborescence
computations cost $\ComplexityPSArborescences$. Adding the projection cost
$\ComplexityProjectionDPAll$ of Equation~\eqref{eq:ps-projection-cost},
one oracle invocation costs
\begin{equation}
T_{\mathrm{oracle}}^{\mathrm{PS}} = \ComplexityOraclePS ,
\label{eq:ps-oracle-runtime}
\end{equation}
and since the Garg--K\"onemann scheme performs
$N_{\mathrm{iter}} = \ComplexityNiter$ iterations, each invoking the
oracle once, the solve costs
\begin{equation}
T_{\mathrm{DualColl}}^{\mathrm{PS}} = \ComplexityTotalPS .
\label{eq:dualcoll-ps-runtime}
\end{equation}
Computing
$X^*$ and instantiating the copies via Algorithm~\ref{alg:xstar} costs
$O(yk\log k)$. With $y=\max\{1,\lceil (k-1)/(k\epsilon)\rceil\}$ we have
$yk = O(k/\epsilon)$, so the duplication step costs $O((k/\epsilon)\log k)$.
The number of distinct tree pairs returned by the packing algorithm is at most
the number of iterations, $k = O(N_{\mathrm{iter}}) = O(|E|\log|E|/\epsilon^2)$;
substituting this bound and using
$\log k = O(\log(|E|/\epsilon))$ gives a duplication cost of
$O\!\left(|E|\log|E|\,\log(|E|/\epsilon)/\epsilon^3\right)$. The total is the
sum of the DualColl solve and this duplication cost, which is
Equation~\eqref{eq:end-to-end-runtime}.
\end{proof}

The two terms in Equation~\eqref{eq:end-to-end-runtime} do not dominate
one another in general: the DualColl solve scales as $1/\epsilon^2$ while
the duplication term scales as $1/\epsilon^3$. For fixed network size and
$\epsilon\to 0$ the duplication term eventually dominates, while for
moderate $\epsilon$ or large networks the DualColl solve dominates; the
total is therefore stated as their sum.

\section{Extensions to Other Collective Operations}
\label{sec:other-collectives}

This section extends DualColl beyond a single AllReduce: to AllGather, to
ReduceScatter, and to multiple concurrent AllReduce groups. In each case
only the packing variable and the oracle change; the multiplicative-weights
framework of Section~\ref{sec:dualcoll-algorithm} is used unchanged.

\subsection{The AllGather Operation}
\label{subsection:allgather}

The AllGather and ReduceScatter constructions below, and the bottleneck
objective for multiple concurrent AllReduce groups in
Section~\ref{subsection:multiset}, all rest on the same reduction from a
max-min objective to a tuple packing. We first motivate it with AllGather
and then state it once, in general form.

In the AllGather operation, every node starts with a unique data buffer and must broadcast it to all other nodes in the network. For each node $v_j$, there exists at least one distinct tree rooted at $v_j$ that spans all other nodes to broadcast its data. Let $\mathcal{T}_j$ represent the set of all trees rooted at node $v_j$. The logical structure for a specific root must strictly be a tree (containing no cycles) to avoid redundant data transmission. Each directed logical edge maps to a single unidirectional flow in the physical topology.

Every node must broadcast its full data buffer of size $X$ to every other node. Write $B_i$ for the bandwidth of tree $T_i$. The total bandwidth actively broadcasting node $v_j$'s data is the sum of the bandwidths of the trees rooted at $v_j$, denoted as $\Lambda_j = \sum_{T_i \in \mathcal{T}_j} B_i$. The time for node $v_j$ to finish its broadcast is $X / \Lambda_j$. The completion time is determined by the root with the smallest aggregate broadcast rate:
\begin{equation}
    \text{Total Time} = \max_j \frac{X}{\Lambda_j} = \frac{X}{\min_j \Lambda_j}
\end{equation}
Hence the objective is to maximize the minimum aggregate broadcast rate across roots, $\max \min_j \Lambda_j$.

\paragraph{From a max-min objective to a tuple packing.}
Suppose the trees of interest are partitioned into $k$ \emph{classes}
$\mathcal{T}_1,\ldots,\mathcal{T}_k$, all drawing on the same physical-link
capacities. Tree $T_i$ has bandwidth $B_i$, and the aggregate bandwidth of
class $j$ is $\Lambda_j=\sum_{T_i\in\mathcal{T}_j}B_i$. A bandwidth vector
$B$ is \emph{feasible} if $\sum_i A_{e,T_i}B_i\le C_e$ for every directed
edge $e$, and the objective is to maximize $\min_j \Lambda_j$. The
following lemma says that this max-min problem has the same optimum as an
ordinary fractional packing LP whose columns are \emph{tuples} containing
one tree from each class, all trees of a tuple sharing a single common
rate.

\begin{lemma}[Tuple packing for max-min objectives]
\label{lem:tuple-packing}
Let $\mathcal{C}$ be the set of tuples $c=(T^{(1)},\ldots,T^{(k)})$ with
$T^{(j)}\in\mathcal{T}_j$, and for a tuple $c$ let
$A_{e,c}=\sum_j A_{e,T^{(j)}}$ be its total load on directed edge $e$.
Then the optimum of the max-min problem
\[
\max\Bigl\{\min_j \Lambda_j \;:\; B \text{ feasible}\Bigr\}
\]
equals the optimum of the tuple-packing LP
\[
\max\Bigl\{\textstyle\sum_{c\in\mathcal{C}}\lambda_c \;:\;
\sum_{c}A_{e,c}\lambda_c\le C_e\ \forall e,\ \lambda\ge0\Bigr\}.
\]
\end{lemma}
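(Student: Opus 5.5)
I will prove the lemma by showing two inequalities between the optimum $M$ of the max-min problem and the optimum $P$ of the tuple-packing LP. Both are constructive: a tuple packing is converted into a tree-bandwidth vector, and a tree-bandwidth vector is converted into a tuple packing.

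\emph{Step 1: $P\le M$.} Take any feasible tuple packing $\lambda$. For each tree $T$, set $B_T=\sum_{j}\sum_{c:\,T^{(j)}=T}\lambda_c$, the total rate of all tuples that contain $T$ in some slot; if the classes are not disjoint, count multiplicities in this way. The load of $B$ on edge $e$ is then exactly $\sum_c A_{e,c}\lambda_c\le C_e$, so $B$ is feasible. Every tuple contributes one tree to each class, so $\Lambda_j\ge\sum_c\lambda_c$ for every $j$, with equality when the classes are disjoint. Hence $\min_j\Lambda_j\ge\sum_c\lambda_c$, and therefore $M\ge P$.

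\emph{Step 2: $M\le P$.} Take a feasible $B$ with $\Lambda:=\min_j\Lambda_j>0$; the case $\Lambda=0$ is trivial. First scale class $j$ down: replace each $B_i$ with $T_i\in\mathcal{T}_j$ by $B_i\Lambda/\Lambda_j$. Each coefficient is at most one and $A_{e,T}\ge0$, so loads only decrease. This is the same free-disposal reasoning as in Lemma~\ref{lem:free-disposal}, and the new vector stays feasible. After scaling, every class has total bandwidth exactly $\Lambda$.

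Next comes the decomposition, a ``common refinement'' or northwest-corner argument. For each class $j$, lay its trees side by side as consecutive subintervals of $[0,\Lambda)$, with the subinterval of tree $T_i$ having length $B_i$. Then take the union of all breakpoints across the $k$ classes. This cuts $[0,\Lambda)$ into at most $\sum_j|\mathcal{T}_j^{+}|$ pieces, where $\mathcal{T}_j^{+}$ denotes the trees of class $j$ with positive bandwidth. On each piece, every class $j$ is covered by a single tree $T^{(j)}$. Let that piece define a tuple $c=(T^{(1)},\ldots,T^{(k)})$ and let $\lambda_c$ be the piece's length, summing lengths if the same tuple arises from several pieces.

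By construction, summing $\lambda_c$ over the tuples that use tree $T_i$ in slot $j$ gives exactly $B_i$. Hence $\sum_cA_{e,c}\lambda_c=\sum_iA_{e,T_i}B_i\le C_e$, and $\sum_c\lambda_c=\Lambda$. Therefore $P\ge\Lambda$, and taking the supremum over feasible $B$ gives $P\ge M$.

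The main obstacle is Step 2, specifically checking that the interval decomposition reproduces each tree's bandwidth exactly, so that the loads match. It also needs two small technical points:
\begin{itemize}
\item \textbf{Attainment of optima.} I will argue via suprema to avoid the issue entirely. Alternatively, the feasible sets are polytopes, and the max-min problem is bounded whenever the relevant $A$-columns are nonzero.
\item \textbf{Infinitely many trees.} Only finitely many trees have positive bandwidth, so I restrict to the finite supports $\mathcal{T}_j^{+}$ throughout.
\end{itemize}
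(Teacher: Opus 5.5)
Your proof is correct and has the same outline as the paper's. The direction tuple packing $\le$ max-min is the same direct construction. For the converse you make the same two moves: first trim every class to aggregate exactly $\min_j\Lambda_j$ by free disposal (you scale proportionally, the paper just lowers rates), then decompose the trimmed vector into tuples.

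The real difference is the decomposition step. The paper peels greedily. It takes $\delta$ to be the smallest positive rate over all trees of all classes, forms a tuple from that tree and an arbitrary positive-rate tree of each other class, and subtracts $\delta$. It then needs an invariant: every class keeps equal remaining aggregate, so a tuple can always be completed. It also needs that each step zeroes at least one tree, so the process terminates. Your common refinement of the $k$ interval layouts of $[0,\Lambda)$ builds the whole decomposition at once. Completability is automatic because every class covers all of $[0,\Lambda)$. The identity $\sum_{c\ni T_i}\lambda_c=B_i$ is just the length of $T_i$'s subinterval. The tuple count (at most $\sum_j(|\mathcal{T}_j^{+}|-1)+1$) can be read off directly, whereas the greedy argument needs its invariant to get the same.

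One small remark: your Step 2 uses that the classes are disjoint, which the paper assumes (the classes partition the trees). Both the per-class scaling and the load identity $\sum_cA_{e,c}\lambda_c=\sum_iA_{e,T_i}B_i$ depend on it. Your multiplicity remark in Step 1 is harmless, but with overlapping classes the lemma itself fails. For example, a single tree forming two singleton classes gives a max-min optimum twice the tuple-packing optimum. So allowing overlaps gains nothing.
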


\begin{proof}
We show each side is at least the other.

\emph{Tuple packing $\le$ max-min.} Let $\lambda$ be a feasible tuple
packing of value $z=\sum_c \lambda_c$. Assign every tree appearing in
tuple $c$ the rate $\lambda_c$, summing over tuples when a tree occurs in
several. Each class then receives aggregate bandwidth exactly
$\Lambda_j=\sum_c\lambda_c=z$, because every tuple contributes one tree to
every class. The induced per-edge load is
$\sum_c\lambda_c A_{e,c}=\sum_c\lambda_c\sum_j A_{e,T^{(j)}}$, which is
exactly the load of the tuple packing and hence at most $C_e$. So this is
a feasible solution of the max-min problem with $\min_j \Lambda_j = z$.

\emph{Max-min $\le$ tuple packing.} Let $B$ be feasible with
$z=\min_j \Lambda_j$. We start by lowering the rates of the trees of any class
whose aggregate exceeds $z$ until every class has aggregate exactly $z$, and the result will still be feasible. Now build tuples greedily. While some
tree has positive rate, let $\delta$ be the \emph{smallest} positive rate
among all trees of all classes, attained by a tree $T_{i^\star}$; pick any
positive-rate tree from each class, including $T_{i^\star}$ from its own,
and form the tuple $c$ of these trees with rate $\lambda_c=\delta$.
Subtract $\delta$ from the rate of each chosen tree; by the choice of
$\delta$ no rate becomes negative. Each step removes exactly $\delta$ from
every class, so all classes always have the same remaining aggregate.
Consequently, as long as some class still has a tree of positive rate,
every class does: we never reach a situation in which one class has no
tree left while another still has one, and each tuple can always be
completed. The tree $T_{i^\star}$ now has rate zero, so the number of
positive-rate trees strictly decreases and the process terminates, with
all classes reaching zero together. The tuple rates sum to $z$, and every
tree's rate is fully distributed among the tuples containing it, so the
per-edge load of the resulting tuple packing equals that of $B$ and is at
most $C_e$. Its value is therefore $z$.
\end{proof}

Lemma~\ref{lem:tuple-packing} is applied with one class per AllGather root
below, with one class per ReduceScatter root in
Section~\ref{subsection:reducescatter}, and with one class per concurrent
AllReduce group in Section~\ref{subsection:multiset}.

\paragraph{Our Solution.}
By Lemma~\ref{lem:tuple-packing}, applied with one class per root, the
max-min objective above has the same optimum as a fractional packing whose
columns are $n$-tuples of trees, one rooted at each compute node, with all
trees of a tuple sharing a common rate. We therefore take a packing
variable to be such an $n$-tuple, and the oracle must return the
minimum-cost one under the current dual prices. Since the cost of a tuple
is the sum of the costs of its $n$ trees and the trees are chosen
independently, the cheapest tuple consists of the cheapest tree rooted at
each node, so the oracle computes, for every root, a minimum-weight
spanning out-arborescence on the projected graph.

\paragraph{Complexity of one oracle call.}
The projected graph $G_R$ has $n$ vertices, one per participating
rank, and is complete and directed, so it contains
$\Theta(n^2)$ arcs. On such a dense graph Tarjan's implementation of the
Chu--Liu/Edmonds algorithm computes a minimum spanning arborescence in
$O(n^2)$ time~\cite{tarjan1977branchings}. The oracle requires one such
arborescence for each of the $n$ roots, giving a total arborescence cost
of $O(n^3)$.

The projected graph itself is constructed once per oracle call.
Under scheduler-routed single-path routing, its directed edge
weights are shortest-path distances under the current dual prices.
Running Dijkstra's algorithm from every participating rank with a
binary heap requires
\[
O\!\left(n(N+|E|)\log N\right)
\]
time. Under local equal-next-hop packet spraying, the directed
projection weights are the expected dual costs described in
Section~\ref{sec:proj}; one destination-rooted sweep per
participating rank computes all such weights in
\[
O\!\left(n(N+|E|)\right)
\]
time.

Hence, under scheduler-routed routing, one AllGather oracle call
takes
\[
T_{\mathrm{oracle}}^{\mathrm{AG,SP}}
=
O\!\left(
n(N+|E|)\log N+n^3
\right),
\]
whereas under packet spraying it takes
\[
T_{\mathrm{oracle}}^{\mathrm{AG,PS}}
=
O\!\left(
n(N+|E|)+n^3
\right).
\]

\paragraph{Total complexity.}
The Garg--K\"onemann framework performs
\[
O\!\left(
\frac{|E|\log |E|}{\epsilon^2}
\right)
\]
oracle iterations. Therefore, the total running time for
AllGather under scheduler-routed routing is
\[
T_{\mathrm{AG,SP}}
=
O\!\left(
\frac{|E|\log |E|}{\epsilon^2}
\left[
n(N+|E|)\log N+n^3
\right]
\right),
\]
and under packet spraying it is
\[
T_{\mathrm{AG,PS}}
=
O\!\left(
\frac{|E|\log |E|}{\epsilon^2}
\left[
n(N+|E|)+n^3
\right]
\right).
\]

In common fabrics where $N=O(n)$ and hence
$|E|=O(n^2)$, the $O(n^3)$ arborescence term is of the same
order as, or may dominate, the projection term depending on the
physical topology. We therefore retain both terms in the general
bounds above.

\subsection{The ReduceScatter Operation}
\label{subsection:reducescatter}

In the ReduceScatter operation, every node starts with a full input vector and ends with a specific reduced block of the data. As in AllGather, there is a set of trees $\mathcal{T}_j$ associated with each node $v_j$; here, however, the edges in each tree are directed \emph{towards} the root $v_j$, representing the flow of partial results being aggregated from the leaves to the root. The completion time is again governed by the node whose aggregation is slowest, so the objective is the same max-min objective as in AllGather, $\max \min_j \Lambda_j$ with $\Lambda_j=\sum_{T_i \in \mathcal{T}_j} B_i$.

\paragraph{Why we do not obtain ReduceScatter by reversal.}
ReduceScatter is commonly treated as the directional dual of AllGather
and obtained by reversing every logical edge of an AllGather solution;
this is what \textit{ForestColl} does, and under scheduler-routed
single-path routing it is sound. When every logical edge is mapped to one
physical path, reversing that path preserves feasibility whenever the two
directions of each physical link have equal capacity, since the reversed
flow consumes on $\bar e$ exactly what the original consumed on $e$.

Under local equal next-hop packet spraying this argument no longer
applies. The fractions $p_e(F)$ induced by a flow $F=(u,v)$ are determined
by the shortest-path DAG toward $v$ and by the branching degrees
$|\Gamma_v(c)|$ along it. The reversed flow $(v,u)$ is routed on the
shortest-path DAG toward $u$, whose branching structure may differ, so in
general
\[
p_{\bar e}\bigl((v,u)\bigr) \neq p_e\bigl((u,v)\bigr),
\]
even on a symmetric full-duplex topology. Reversal therefore need not
preserve the induced routing fractions, and hence need not preserve
feasibility or per-tree bandwidth. We consequently do not derive
ReduceScatter from AllGather by reversal, and instead solve it directly.

\paragraph{Our Solution.}
By Lemma~\ref{lem:tuple-packing}, applied with one class per root, this
max-min objective again has the same optimum as a fractional packing over
$n$-tuples of trees. Solving ReduceScatter directly therefore requires only
one change to the AllGather oracle of Section~\ref{subsection:allgather}. There, a packing variable is
an $n$-tuple of trees, one rooted at each node, and the oracle returns for
each root a minimum-weight \emph{out}-arborescence, whose edges are
directed away from the root, matching the outward flow of a broadcast. For
ReduceScatter the data flows inward instead, so the oracle returns for
each root a minimum-weight \emph{in}-arborescence, whose edges are directed
toward the root; equivalently, one runs the same minimum-arborescence
computation on the projected graph with its edge orientations reversed.
Everything else is unchanged: the projected weights are computed from the
current dual prices exactly as before, each tuple receives a common
bandwidth, and the surrounding multiplicative-weights framework, its
stopping rule and its scaling step are untouched. Because the two oracles
differ only in the orientation of the arborescences they compute, and an
in-arborescence on $n$ nodes costs the same as an out-arborescence,
ReduceScatter inherits the AllGather running time and approximation
guarantee, and the argument is valid under packet-sprayed routing as well
as under single-path routing.

\subsection{Multiple Concurrent AllReduce-Sets}
\label{subsection:multiset}
In complex distributed systems, an AllReduce-set is a specific subset of compute nodes participating in an independent AllReduce operation. We consider two distinct optimization objectives for this scenario: maximizing the total aggregate bandwidth across all concurrent sets, or maximizing the bottleneck (slowest) AllReduce-set.

\paragraph{Our Solution.} An important assumption required for optimality in this context is that compute nodes only participate in the routing and computation of their own AllReduce-sets. Therefore, a spanning tree for a given set contains only and all the nodes belonging to that specific set.

To maximize the total aggregate bandwidth across all sets, we utilize the original fractional packing algorithm (Section~\ref{sec:dualcoll-algorithm}). During each oracle iteration, the oracle computes the MST for \textit{each} active AllReduce-set independently, and strictly returns the single tree (from any set) with the absolute minimum weight.

Alternatively, to maximize the slowest AllReduce-set, we apply Lemma~\ref{lem:tuple-packing} with one class per concurrent set: the bottleneck objective then has the same optimum as a fractional packing whose columns are $k$-tuples of trees, one spanning each AllReduce-set, all sharing a common rate. To solve this, we structure our oracle to compute independent spanning trees per set during each iteration (utilizing only the compute nodes relevant to that specific set). The single column vector returned to the solver is exactly the union of these independently optimal spanning trees. This increases the oracle complexity by a factor of $k$, giving $\ComplexityMultiSet$, and makes the framework directly applicable to multi-tenant environments. Under packet spraying, each tree above becomes a reduce/broadcast tree pair, and each minimum spanning tree computation is replaced by the pair oracle of Section~\ref{sec:ps-pairs}.

\section{Related Work}
\label{sec:related-work}
\paragraph{Collective communication libraries and algorithms.}
High-performance distributed systems have long relied on optimized
collective-communication algorithms. Classical MPI work developed
bandwidth-efficient algorithms for operations such as AllReduce and
ReduceScatter~\cite{rabenseifner2004,thakur2005,patarasuk2009}, while
modern distributed-training systems commonly use GPU-oriented libraries
such as NVIDIA NCCL~\cite{nccl} and frameworks such as
Horovod~\cite{horovod}. BlueConnect adapts AllReduce to heterogeneous
network hierarchies by decomposing it into parallel ReduceScatter and
AllGather operations that can be mapped to different network
fabrics~\cite{blueconnect}. More recent systems provide increasingly
programmable communication layers; for example, MSCCLang provides a
language for writing custom GPU collective algorithms together with an
optimizing compiler~\cite{mscclang}, while MSCCL++ provides low-level
communication primitives and a DSL for constructing high-performance
GPU communication kernels~\cite{mscclpp}. DualColl is complementary to
these systems: our focus is the algorithmic construction of provably
high-throughput schedules for a given heterogeneous network fabric.

\paragraph{Topology-aware synthesis and tree packing.}
Several systems automatically construct collective algorithms for a
given topology. SCCL formulates collective synthesis as an SMT problem
and searches for algorithms along the latency--bandwidth
tradeoff~\cite{sccl}, while TACCL uses user-provided communication
sketches to guide synthesis on larger heterogeneous
topologies~\cite{taccl}. Blink~\cite{blink} is particularly related to
DualColl: it applies a multiplicative-weight approximation algorithm to
pack spanning trees on heterogeneous GPU topologies and uses the
resulting structures to implement collectives including AllReduce.
DualColl differs in the optimization problem and guarantees considered
here: we formulate bandwidth-bound AllReduce directly, prove that the
tree-based restriction loses no throughput under our full-duplex model,
and obtain a polynomial-time $(1-\epsilon)^2$ approximation guarantee
for the resulting AllReduce problem. Blink's tree-packing formulation
does not directly support switched topologies.

\paragraph{ForestColl.}
The work most closely related to ours is
\emph{ForestColl}~\cite{forestcoll}, which constructs throughput-optimal
collective schedules on heterogeneous network fabrics by packing
broadcast or aggregation spanning trees. Its optimality characterization
is based on the network bottleneck cut and also enables switch nodes to
be eliminated while preserving the relevant cut capacities. ForestColl's
polynomial construction is developed for AllGather and ReduceScatter,
with AllReduce obtained by composing the two. We show that this
composition can be strictly suboptimal for AllReduce
(Appendix~\ref{app:counterexample}) and instead optimize AllReduce
directly. Our treatment of switches also differs: rather than eliminating
them using a cut transformation, DualColl projects physical-link dual
prices onto a logical graph over participating ranks; switch links remain
fully represented through the projected edge costs and capacity
constraints. Finally, whereas ForestColl considers scheduler-routed
single-path routing and a single collective group, DualColl additionally
addresses packet-sprayed routing, tree-coupled max-min bandwidth
allocation, and multiple concurrent collective groups.

\section{Conclusion}

\label{sec:conclusion}
We formulated the scheduling of bandwidth-bound collective operations as a fractional tree-packing LP under an explicit bandwidth-allocation model. By combining the Garg--K\"onemann algorithm with an exact dual-cost projection oracle, we obtain a fully polynomial-time approximation scheme (FPTAS) that computes a solution arbitrarily close to the optimal fractional tree-packing value on any static datacenter topology satisfying our
network model. To the best of our knowledge, DualColl is the first polynomial-time
algorithm with a provable near-optimality guarantee for bandwidth-bound
AllReduce on switched topologies.

We further extend the framework beyond the scheduler-routed setting: to
\emph{local equal next-hop packet spraying}, where the reduce and
broadcast trees must be packed as pairs; to networks without explicit
per-tree bandwidth control, where tree duplication realizes a fractional
allocation under tree-coupled max-min fairness; and to the other
collective operations and to multiple concurrent collective groups.

\section*{Acknowledgments}

\paragraph{Use of Generative AI} Generative AI tools (Claude, Anthropic;
and ChatGPT, OpenAI) were used during the preparation of this paper for
language editing and technical review. All research ideas, models,
algorithms, theorem statements, proofs, and technical claims were
determined by the authors, who take full responsibility for the content.

\bibliographystyle{ACM-Reference-Format}
\bibliography{DualColl}

\appendix

\section{Sub-Optimality of ReduceScatter+AllGather}
\label{app:counterexample}

This appendix exhibits an instance on which an AllGather-optimal set of
trees, composed into an AllReduce in the usual way, is strictly worse than
the optimal AllReduce. It shows that optimality of the AllGather phase does
not certify optimality of the composed AllReduce.

\begin{figure*}[tbp]
\centering
\begin{subfigure}[t]{0.46\textwidth}
  \centering
  \begin{tikzpicture}[scale=0.7, transform shape]
    \node[circle, draw, minimum size=0.6cm] (0) at (0, 0) {0};
    \node[circle, draw, minimum size=0.6cm] (1) at (2, 0) {1};
    \node[circle, draw, minimum size=0.6cm] (2) at (4, 0) {2};
    \node[circle, draw, minimum size=0.6cm] (3) at (6, 0) {3};
    \node[circle, draw, minimum size=0.6cm] (4) at (8, 0) {4};
    \draw[thick] (0) -- node[above] {\small 4} (1);
    \draw[thick] (1) -- node[above] {\small 4} (2);
    \draw[thick] (2) -- node[above] {\small 4} (3);
    \draw[thick] (3) -- node[above] {\small 4} (4);
  \end{tikzpicture}
  \caption{Topology: five ranks on a path, every link of capacity~4 in
  each direction.}
  \label{fig:ce-topology}
\end{subfigure}
\hfill
\begin{subfigure}[t]{0.46\textwidth}
  \centering
  \begin{tikzpicture}[scale=0.7, transform shape]
    \node[circle, draw, minimum size=0.6cm] (0) at (0, 0) {0};
    \node[circle, draw, minimum size=0.6cm] (1) at (2, 0) {1};
    \node[circle, draw, minimum size=0.6cm] (2) at (4, 0) {2};
    \node[circle, draw, minimum size=0.6cm] (3) at (6, 0) {3};
    \node[circle, draw, minimum size=0.6cm] (4) at (8, 0) {4};
    \draw[thick, gray!40] (0) -- (1);
    \draw[thick, gray!40] (1) -- (2);
    \draw[thick, gray!40] (2) -- (3);
    \draw[thick, gray!40] (3) -- (4);
    \draw[very thick, blue, ->, bend left=45] (0) to (2);
    \draw[very thick, blue, ->] (2) -- (1);
    \draw[very thick, blue, ->] (2) -- (3);
    \draw[very thick, blue, ->] (3) -- (4);
  \end{tikzpicture}
  \caption{The AllGather tree $T_0$: the logical edge $0\!\to\!2$ is
  routed over $0$--$1$--$2$, then $2\!\to\!1$, $2\!\to\!3$, $3\!\to\!4$.}
  \label{fig:ce-t0}
\end{subfigure}

\vspace{0.6em}

\begin{subfigure}[t]{0.46\textwidth}
  \centering
  \begin{tikzpicture}[scale=0.7, transform shape]
    \node[circle, draw, minimum size=0.6cm] (0) at (0, 0) {0};
    \node[circle, draw, minimum size=0.6cm] (1) at (2, 0) {1};
    \node[circle, draw, minimum size=0.6cm] (2) at (4, 0) {2};
    \node[circle, draw, minimum size=0.6cm] (3) at (6, 0) {3};
    \node[circle, draw, minimum size=0.6cm] (4) at (8, 0) {4};
    \draw[thick, gray!40] (0) -- (1);
    \draw[thick, gray!40] (1) -- (2);
    \draw[thick, gray!40] (2) -- (3);
    \draw[thick, gray!40] (3) -- (4);
    \draw[very thick, red, ->] (3) -- (2);
    \draw[very thick, red, ->] (2) -- (1);
    \draw[very thick, red, ->] (1) -- (0);
    \draw[very thick, red, ->] (3) -- (4);
  \end{tikzpicture}
  \caption{A typical AllGather tree, here $T_3$: data spreads outward
  from the root along the path.}
  \label{fig:ce-t3}
\end{subfigure}
\hfill
\begin{subfigure}[t]{0.46\textwidth}
  \centering
  \begin{tikzpicture}[scale=0.7, transform shape]
    \node[circle, draw, minimum size=0.6cm] (0) at (0, 0) {0};
    \node[circle, draw, minimum size=0.6cm] (1) at (2, 0) {1};
    \node[circle, draw, minimum size=0.6cm] (2) at (4, 0) {2};
    \node[circle, draw, minimum size=0.6cm] (3) at (6, 0) {3};
    \node[circle, draw, minimum size=0.6cm] (4) at (8, 0) {4};
    \draw[very thick, teal] (0) -- (1);
    \draw[very thick, teal] (1) -- (2);
    \draw[very thick, teal] (2) -- (3);
    \draw[very thick, teal] (3) -- (4);
  \end{tikzpicture}
  \caption{The single AllReduce tree: the path itself, reducing toward a
  root and broadcasting back along the same edges. It attains
  bandwidth~4.}
  \label{fig:ce-direct}
\end{subfigure}
\caption{A five-rank path on which an AllGather-optimal tree set composes
into a strictly suboptimal AllReduce. The AllGather trees of
Figures~\ref{fig:ce-t0} and~\ref{fig:ce-t3} all run at bandwidth~1, which
is AllGather-optimal, but converting them into AllReduce trees overloads
the link $1$--$2$ and caps the aggregate at $10/3$. The single tree of
Figure~\ref{fig:ce-direct} achieves~$4$.}
\label{fig:counterexample}
\Description{Four diagrams of five nodes numbered 0 to 4 joined in a line,
every link of capacity 4. The first shows the bare topology. The second
shows the AllGather tree rooted at node 0, whose logical edge to node 2 is
routed through node 1, after which node 2 feeds nodes 1 and 3 and node 3
feeds node 4. The third shows a typical AllGather tree spreading outward
from node 3. The fourth shows the single AllReduce tree consisting of the
whole line.}
\end{figure*}

\paragraph{The instance.}
Consider five ranks $0,1,2,3,4$ connected in a path, with links
$\{0,1\},\{1,2\},\{2,3\},\{3,4\}$, each of capacity $4$ in each direction
(Figure~\ref{fig:ce-topology}).

\paragraph{An optimal AllGather solution.}
For AllGather, rank $4$ must receive the data of all four other ranks, and
every such stream crosses the directed edge $3\!\to\!4$. If all trees are
run at a common bandwidth $\beta$, this forces $4\beta \le 4$, so
$\beta \le 1$; a solution with $\beta = 1$ is therefore AllGather-optimal.
Consider the following set of broadcast trees, one rooted at each rank,
all run at $\beta = 1$:
\begin{align*}
T_0 &: \ 0\!\to\!2,\ 2\!\to\!1,\ 2\!\to\!3,\ 3\!\to\!4, \\
T_1 &: \ 1\!\to\!0,\ 1\!\to\!2,\ 2\!\to\!3,\ 3\!\to\!4, \\
T_2 &: \ 2\!\to\!1,\ 1\!\to\!0,\ 2\!\to\!3,\ 3\!\to\!4, \\
T_3 &: \ 3\!\to\!2,\ 2\!\to\!1,\ 1\!\to\!0,\ 3\!\to\!4, \\
T_4 &: \ 4\!\to\!3,\ 3\!\to\!2,\ 2\!\to\!1,\ 1\!\to\!0 .
\end{align*}
All trees except $T_0$ simply spread outward along the path. The tree
$T_0$ is the twist: instead of handing its data to rank $1$ first, rank
$0$ sends a single logical flow to rank $2$, routed over the physical path
$0$--$1$--$2$, and rank $2$ then serves ranks $1$ and $3$
(Figure~\ref{fig:ce-t0}). Counting, for each directed physical edge, the
routed logical tree edges that traverse it gives the loads
\[
\begin{array}{llll}
0\!\to\!1: 1, & 1\!\to\!0: 4, & 1\!\to\!2: 2, & 2\!\to\!1: 4, \\
2\!\to\!3: 3, & 3\!\to\!2: 2, & 3\!\to\!4: 4, & 4\!\to\!3: 1 .
\end{array}
\]
Every load is at most the capacity $4$, so this tree set is feasible, and
since $\beta = 1$ it is AllGather-optimal.

\paragraph{The composed AllReduce.}
An AllReduce built from this solution runs a ReduceScatter phase that is
the reversal of the AllGather phase, so in steady state each tree uses both
directions of the links it occupies: the load it places on a directed edge
$e$ is its AllGather load on $e$ plus its AllGather load on the reverse
edge $\bar e$. Adding the table above to its own reversal gives, per unit
of common tree bandwidth,
\[
\begin{array}{llll}
0\!\to\!1: 5, & 1\!\to\!0: 5, & 1\!\to\!2: 6, & 2\!\to\!1: 6, \\
2\!\to\!3: 5, & 3\!\to\!2: 5, & 3\!\to\!4: 5, & 4\!\to\!3: 5 .
\end{array}
\]
The link $\{1,2\}$ now carries $6$ units of load per unit of bandwidth in
both directions. The reason is that the directed edge $1\!\to\!2$ already
carried $2$ units in the AllGather phase: one from $T_1$, whose logical
edge $1\!\to\!2$ uses it directly, and one from the twisted tree $T_0$,
whose logical edge $0\!\to\!2$ is routed over the physical path
$0$--$1$--$2$ and therefore traverses $1\!\to\!2$ as well. The opposite
direction $2\!\to\!1$ carried $4$ units, one from each of $T_0, T_2, T_3,
T_4$. Reversing the trees for the ReduceScatter phase superimposes these
two figures on each direction, giving $2+4=6$. Feasibility therefore requires
$6\beta \le 4$, that is $\beta \le 2/3$, and the aggregate AllReduce
bandwidth of the five trees is at most
\[
5 \cdot \tfrac{2}{3} \;=\; \tfrac{10}{3} \;\approx\; 3.33 .
\]

\paragraph{A better direct AllReduce.}
Now take the single tree consisting of the path itself, reducing toward a
root and broadcasting back along the same edges
(Figure~\ref{fig:ce-direct}). Each directed physical edge carries exactly
one logical tree edge, so its load is $1$ per unit of tree bandwidth, and
the tree may be run at bandwidth $4$. The aggregate AllReduce bandwidth is
therefore $4 > 10/3$.

\paragraph{Conclusion.}
The tree set above is optimal for AllGather, yet the AllReduce composed
from it attains at most $10/3$, while a direct AllReduce schedule on the
same network attains $4$. The composed AllReduce is therefore suboptimal.

\section{Comparison with ForestColl's Complexity}
\label{app:comparison}

The authors of \textit{ForestColl} did not solve AllReduce directly using their main polynomial algorithm; instead, they solved AllGather and used a combination of AllGather and ReduceScatter to emulate the AllReduce operation. While they do supply a linear program for solving AllReduce directly, it is unscalable. 

For their polynomial AllGather-based AllReduce algorithm, they establish an asymptotic time complexity of $\ComplexityForestColl$. \textit{ForestColl} operates under the same scheduler-routed model that we present as our primary model, and supports only a single collective group. We therefore compare against our direct Section~\ref{sec:dualcoll-algorithm} algorithm: solving AllReduce directly via the Garg--K\"onemann framework, our oracle runs Dijkstra from every rank and one MST per iteration, giving an overall time complexity of $\ComplexityTotal$. Since the number of physical links $|E|$ is at most $O(N^2)$, our algorithm offers a significantly tighter and more scalable polynomial bound, making it substantially more practical for large-scale datacenter networks.

The comparison is not limited to AllReduce. Our framework also solves
AllGather, the operation for which \textit{ForestColl} does establish
optimality, and does so with a better bound. Replacing the minimum
spanning tree in our oracle by a minimum spanning arborescence rooted at
each compute node (Section~\ref{subsection:allgather}) yields an
AllGather schedule in time $\ComplexityAllgather$, against the
$\ComplexityForestColl$ of \textit{ForestColl}. When $N=O(n)$, so that $|E| = O(n^2)$, our
AllGather bound is $O(n^5/\epsilon^2)$ up to logarithmic factors,
a polynomial factor of roughly $n^3$ below theirs. Thus DualColl improves
on \textit{ForestColl} both where \textit{ForestColl} is provably optimal,
by computing an AllGather schedule faster, and where it is not, by giving
AllReduce a guarantee that the composed construction does not provide.

\end{document}